\newcommand{\parops}{\textsc{Par-bin-ops}}
\newcommand{\quantlib}{\textsc{QuantLib}}
\newcommand{\highlight}[1]{\textit{\textbf{#1}}}
\newcommand{\para}[1]{\vspace{0.1cm}\noindent{\textbf{#1}}}
\newcommand{\floor}[1]      {\left\lfloor #1 \right\rfloor}
\newcommand{\ceil}[1]       {\left\lceil #1 \right\rceil}
\newcommand{\Oh}[1]{{\mathcal O}\left({#1}\right)}
\newcommand{\oh}[1]{{o}\left({#1}\right)}
\newcommand{\Om}[1]{{\Omega}\left({#1}\right)}
\newcommand{\Th}[1]{{\Theta}\left({#1}\right)}
\newcommand{\hide}[1]{}
\newcommand{\xfor}{{\bf{{for~}}}}
\newcommand{\xto}{{\bf{{to~}}}}
\newcommand{\xdownto}{{\bf{{downto~}}}}
\newcommand{\xdo}{{\bf{{do~}}}}
\newcommand{\xreturn}{{\bf{{return~}}}}
\newcommand{\xparallelfor}{{\bf{{parallel for~}}}}
\newcommand{\T}{\hspace{0.3cm}}
\newcommand{\m}{\mathcal}
\def\probm{o}
\definecolor{gray}{rgb}{0.3,0.3,0.3}
\def\func#1{\textrm{\bf{\sc{#1}}}}
\begin{document}

\setlength{\abovedisplayskip}{5pt}
\setlength{\belowdisplayskip}{5pt}

\title{Fast American Option Pricing using Nonlinear Stencils}

\author{Zafar Ahmad}
\affiliation{%
  \institution{Stony Brook University}
  \city{Stony Brook}
  \country{USA}}
\email{zafahmad@cs.stonybrook.edu}

\author{Reilly Browne}
\affiliation{%
  \institution{Stony Brook University}
  \city{Stony Brook}
  \country{USA}}
\email{rjbrowne@cs.stonybrook.edu}

\author{Rezaul Chowdhury}
\affiliation{%
  \institution{Stony Brook University}
  \city{Stony Brook}
  \country{USA}}
\email{rezaul@cs.stonybrook.edu}

\author{Rathish Das}
\affiliation{%
  \institution{University of Houston}
  \city{Houston}
  \country{USA}}
\email{rathish@uh.edu}

\author{Yushen Huang}
\affiliation{%
  \institution{Stony Brook University}
  \city{Stony Brook}
  \country{USA}}
\email{yushuang@cs.stonybrook.edu}

\author{Yimin Zhu}
\affiliation{%
  \institution{Stony Brook University}
  \city{Stony Brook}
  \country{USA}}
\email{yimzhu@cs.stonybrook.edu}

\renewcommand{\shortauthors}{Ahmad, Browne, Chowdhury, Das, Huang, Zhu}

\begin{abstract}
We study the binomial, trinomial, and Black-Scholes-Merton models of option pricing. We present fast parallel discrete-time finite-difference algorithms for American call option pricing under the binomial and trinomial models and American put option pricing under the Black-Scholes-Merton model. For $T$-step finite differences, each algorithm runs in $\Oh{{\left(T\log^2{T}\right)}/{p} + T}$ time under a greedy scheduler on $p$ processing cores, which is a significant improvement over the $\Th{{T^2}/{p}} + \Om{T\log{T}}$ time taken by the corresponding state-of-the-art parallel algorithm. Even when run on a single core, the $\Oh{T\log^2{T}}$ time taken by our algorithms is asymptotically much smaller than the $\Th{T^2}$ running time of the fastest known serial algorithms. Implementations of our algorithms significantly outperform the fastest implementations of existing algorithms in practice, e.g., when run for $T \approx 1000$ steps on a 48-core machine, our algorithm for the binomial model runs at least $15\times$ faster than the fastest existing parallel program for the same model with the speed-up factor gradually reaching beyond $500\times$ for $T \approx 0.5 \times 10^6$. It saves more than 80\% energy when $T \approx 4000$, and more than 99\% energy for $T > 60,000$.

Our option pricing algorithms can be viewed as solving a class of nonlinear 1D stencil (i.e., finite-difference) computation problems efficiently using the Fast Fourier Transform (FFT). To our knowledge, ours are the first algorithms to handle such stencils in $\oh{T^2}$ time. These contributions are of independent interest as stencil computations have a wide range of applications beyond quantitative finance.

\end{abstract}

\begin{CCSXML}
<ccs2012>
   <concept>
       <concept_id>10010147.10010341</concept_id>
       <concept_desc>Computing methodologies~Modeling and simulation</concept_desc>
       <concept_significance>500</concept_significance>
       </concept>
   <concept>
       <concept_id>10010147.10010169.10010170.10010171</concept_id>
       <concept_desc>Computing methodologies~Shared memory algorithms</concept_desc>
       <concept_significance>500</concept_significance>
       </concept>
   <concept>
       <concept_id>10010147.10010341.10010342.10010343</concept_id>
       <concept_desc>Computing methodologies~Modeling methodologies</concept_desc>
       <concept_significance>500</concept_significance>
       </concept>
   <concept>
       <concept_id>10002950.10003648.10003700.10003701</concept_id>
       <concept_desc>Mathematics of computing~Markov processes</concept_desc>
       <concept_significance>500</concept_significance>
       </concept>
   <concept>
       <concept_id>10002950.10003705.10003707</concept_id>
       <concept_desc>Mathematics of computing~Solvers</concept_desc>
       <concept_significance>500</concept_significance>
       </concept>
 </ccs2012>
\end{CCSXML}

\ccsdesc[500]{Computing methodologies~Modeling and simulation}
\ccsdesc[500]{Computing methodologies~Shared memory algorithms}
\ccsdesc[500]{Computing methodologies~Modeling methodologies}
\ccsdesc[500]{Mathematics of computing~Markov processes}
\ccsdesc[500]{Mathematics of computing~Solvers}

\keywords{American Option Pricing, Binomial Option Pricing Model, Trinomial Option Pricing, Black-Scholes-Merton Option Pricing Model, Nonlinear Stencil, Fast Fourier Transform, Finite-Difference Method}

\received{20 February 2007}
\received[revised]{12 March 2009}
\received[accepted]{5 June 2009}

\maketitle


\vspace{-0.2cm}
\section{Introduction}
\vspace{-0.1cm}

Option pricing or computing the value of a contract giving one the right to buy/sell an asset under some given constraints is one of the most important computational problems in quantitative finance \cite{ames2014numerical}. Rapid changes in financial markets often lead to rapid changes in asset prices which makes the ability to quickly estimate option prices essential in avoiding potential financial losses \cite{peter2007understanding}.

An \highlight{option} is a two-party financial contract that gives one party (called the \highlight{holder}) the right (but not an obligation) to buy/sell (i.e., exercise) an asset from/to the other party (called the \highlight{writer}) at a fixed price (called the \highlight{strike/exercise price}) on or before an expiration date (called the \highlight{exercise/maturity date}). A \highlight{call option} gives the right to buy whereas a \highlight{put option} gives the right to sell. Also, based on the expiration date and the settlement rule, there are two major styles of options: \highlight{European} and \highlight{American}. A European option can only be exercised at the expiration date while an American option can be exercised at any time before that.

The \highlight{option pricing} problem asks for assigning a \highlight{value} or \highlight{price} 
to an options contract based on the calculated probability that the contract will be exercised at expiration.
%
%
The theoretical value of an option \cite{merton1973theory,smith1976option,bensoussan1984theory,galai1976option,black1973pricing,kumar2012analytical} is determined by its \highlight{stock price} $S$ (i.e., its current market price), \highlight{strike price} $K$, \highlight{risk-free rate of return} $R$ (i.e., the theoretical rate of return assuming zero risk), \highlight{dividend yield} $Y$ (i.e., a ratio that shows how much dividend/year is paid relative to $S$), \highlight{volatility} $V$ (i.e., how much the trading price varies over time), and \highlight{time to expiration} $E$ (e.g., in days).%

\begin{table}[t!]
\caption{Notations}
\vspace{-0.35cm}
\scalebox{0.85}{
\begin{tabular}{cl|cl}
{\bf{Symbol}} & {\bf{Meaning}} & {\bf{Symbol}} & {\bf{Meaning}}\\ \hline
$S$ & stock price & $K$ & strike price\\
$R$ & risk-free rate of return & $V$ & volatility\\ 
$Y$ & dividend yield & $E$ & time to expire\\
$T$ & number of time steps & & (in days)\\\hline\\[-0.3cm]
\end{tabular}
}
\label{fig:notation}
\vspace{-0.7cm}
\end{table}

The earliest option pricing model \cite{bachelier1900theorie} was based on the assumption of the geometric Brownian motion for asset pricing and the no-arbitrage idea. Many improved models were developed later
\cite{samuelson2011louis,black1973pricing, merton1973theory,derman1998stochastic, dupire1994pricing,stein1991stock, heston1993closed, ball1994stochastic,bergomi2015stochastic,merton1976option, kou2002jump, bakshi2000spanning, duffie2000transform, dempster2002spread,matsuda2004introduction,runggaldier2003jump,madan1998variance,geman2001time, jackson2008fourier,derman1994riding, dupire1994pricing}. 

Analytical solutions to the option pricing problem are sometimes available, particularly for European options \cite{shreve2004stochastic,shreve2005stochastic,hull2003options,prathumwan2020solution}. But they are not available for American options except for a limited number of cases with significant constraints (e.g., American call options with zero or one dividend \cite{shreve2004stochastic} and perpetual American put options \cite{mackean1965free,shreve2004stochastic}). This difficulty in finding closed-form analytical solutions for most option pricing problems makes \textit{computational} approaches the only path forward. The main computational approaches to solving the option pricing problem include the binomial tree method \cite{karatzas1998methods}, the finite difference method \cite{thomas2013numerical,sewell2005numerical,holmes2007introduction,leveque2007finite,ames2014numerical,strikwerda2004finite}, and the Monte Carlo method \cite{duan1998empirical,longstaff2001valuing,tsitsiklis2001regression}.

The binomial tree method works by tracing the option's value at discrete time steps over the life of the option. For a given number of time steps $T$ between the valuation and expiration dates of the option, a binomial tree of height $T$ is created with the leaves storing the potential prices of the asset at the time of expiration. Then one works backward to compute for each $t \in [0, T - 1]$ the value of the nodes at depth $t$ of the tree (each giving a possible price at time step $t$) from the values of the nodes at depth $t + 1$ using a simple formula. The value computed for the root node is the required option value. Straightforward iterative implementation of this method runs in $\Th{T^2}$ time on a single processing core and $\Th{T^2 / p + T\log{T}}$ time on $p$ cores (see Figure \ref{fig:notations-and-BOPM-loops} and Table \ref{tbl:work_span}). 
It provides a discrete-time approximation of the continuous-time option pricing in the Black–Scholes model and is widely used by professional option traders.

\begin{figure}[t!]
\framebox{
\scalebox{0.92}{
 \begin{minipage}{0.48\textwidth}
 {\footnotesize
 \medskip\noindent\func{BOPM-American-Call}$\left(~S,~K,~R,~V,~Y,~E,~T~\right)$



 \noindent
 \begin{enumerate}

 \item $\Delta t \gets {E}/{T}$,~ $u \gets e^{V\sqrt{\Delta t}}$,~ $d \gets {1}/{u}$,~ $p \gets {(e^{(R - Y)\Delta t} - d)}/{(u - d)}$

 \item[] $m \gets e^{-R\Delta t}$,~ $s_0 \gets m p$,~ $s_1 \gets m(1 - p)$

 \item \xfor $j \gets 0$ \xto $T$ \xdo $G_{T,j} \gets \max{\left( 0, S u^{2j - T} - K\right)}$

 \item\label{step:nestedloop} \xfor $i \gets T - 1$ \xdownto $0$ \xdo

 \item[] \T \xparallelfor $j \gets 0$ \xto $i$ \xdo

 \item[] \T\T $G_{i,j} \gets \max{\left( s_0 G_{i+1, j} + s_1 G_{i+ 1, j + 1},~ S u^{2j - i} - K\right)}$

 \item \xreturn $G_{0, 0}$
 
 \end{enumerate}

 }
 \end{minipage}
 }
}
 \vspace{-0.3cm}  
\caption{Standard looping code for pricing
American Call options under the
Binomial Option Pricing Model.}
\label{fig:notations-and-BOPM-loops}
 \vspace{-0.5cm}  
\end{figure}

The trinomial tree method extends the binomial method by accounting for the possibility that an asset value remains the same after a time step \cite{boyle1986option}. With only a constant factor increase in run-time it provides more precise predictions than the binomial model.

The finite-difference method approximates the continuous-time differential equations describing the evolution of an option price over time by a set of discrete-time difference equations and then solves them iteratively under appropriate boundary conditions. The explicit finite difference method divides the lifetime of the option into $T$ discrete time steps and then uses the potential values of the asset at time step $T$ (the time of expiration) to compute the asset values at each time step $t \in [0, T - 1]$ from the asset values at time step $t + 1$ based on the difference equations (i.e., update equations or stencils). The final option value is found at time $t = 0$. Similar to the binomial tree method, the iterative implementation of this method runs in $\Th{T^2}$ time on a single processing core and $\Th{T^2 / p + T\log{T}}$ time on $p$ cores.
Other finite difference methods used for option pricing include implicit finite difference and the Crank–Nicolson method \cite{crank1947practical}.

The Monte Carlo method works by generating random backward paths the asset price may follow starting from the time of expiration and ending at the time of valuation. Each of these paths leads to a payoff value for the option and the average of these payoff values can be viewed as an expected value of the option. This method is used for pricing options with complicated features and/or multiple sources of uncertainty that other methods (analytical, tree-based, finite difference) cannot handle \cite{duan1998empirical}, but is usually not competitive when those methods apply as the convergence rate of Monte Carlo method is sublinear \cite{james1980monte,metropolis1949monte}. They are hard to develop for some options, such as Black-Scholes Model for the American put option, but still many results exist on Monte Carlo methods \cite{ibanez2004monte,longstaff2001valuing,tsitsiklis2001regression,boyle1977options}.

\para{Our Contributions.}
We present three shared-memory parallel algorithms for American option pricing -- call option under the binomial and trinomial option pricing models and put option under the Black-Scholes-Merton model. All three algorithms run in $\Th{\left(T \log^2{T}\right)/p + T}$ time on $p$ processing cores which is a significant improvement over the $\Om{T^2 / p + T\log{T}}$ time taken by the state-of-the-art parallel algorithms, where $T$ is the number of time steps. When run on a single processing core they run in $\Th{T \log^2{T}}$ time compared to the $\Th{T^2}$ time taken by the fastest existing serial algorithms. 
We use the \highlight{Fast Fourier Transform (FFT)} to speed up the computation. Table \ref{tbl:work_span} summarizes the results.

We use the \textit{work-span} model \cite{CormenLeRiSt2009} to analyze the performance of parallel programs. Let $\m{T}_p$ be the running time on a $p$-processor machine under a greedy scheduler. Then $\m{T}_1$ and $\m{T}_\infty$ are called \textit{work} and \textit{span}, respectively. The \textit{parallel running time} $\m{T}_p = \Th{\m{T}_1 / p + \m{T}_{\infty}}$.

\begin{table}[t]
    \caption{Parallel Algorithms
for American Option Pricing: The bounds hold for \textbf{call option pricing under the binomial and trinomial option pricing
models}, and \textbf{put option pricing under the Black-Scholes-Merton model}. Here, $T = $ number of time
steps, $p = $ number of processing cores, and $M = $ cache size. Also, $\m{T}_p = $ running time on $p$ cores,
and thus $\m{T}_1$ (Work) and $\m{T}_{\infty}$ (Span) represent run-times on one core and an unbounded number
of cores, respectively. Under a greedy scheduler, $\m{T}_p = \Th{{\m{T}_1}/{p} + \m{T}_\infty}$, which is asymptotically optimal.}
\vspace{-0.4cm}
    \label{tbl:work_span}
    \centering
    \scalebox{0.65}{
    \begin{tabular}{||>{\centering\arraybackslash}m{1.3in}|>{\centering\arraybackslash}m{0.6in}|>{\centering\arraybackslash}m{1in}|>{\centering\arraybackslash}m{1.5in}||}
        \hline\hline
         \multirow{2}{*}{\textbf{Algorithm}} & \textbf{Work} & \textbf{Span} & \textbf{Parallel Running Time}\\ 
         & \textbf{$\m{T}_{1} (T)$} & \textbf{$\m{T}_{\infty} (T)$} & \textbf{$\m{T}_{p} (T)$} \\\hline\hline
         Nested Loop (standard, see Figure \ref{fig:notations-and-BOPM-loops}) & \multirow{6}{*}{$\Th{T^2}$} & $\Th{T\log T}$ & $\Th{\frac{T^2}{p} + T\log{T}}$ \\ \cline{1-1} \cline{3-4} 
         Tiled Loop (cache-aware) \cite{brunelle2022parallelizing}  & & $\Th{TM + \frac{T}{M} \log{\frac{T}{M}} }$ & $\Th{\frac{T^2}{p} + T M + \frac{T}{M} \log{\frac{T}{M}}}$ \\  \cline{1-1} \cline{3-4}
         Recursive Tiling (cache-oblivious) \cite{brunelle2022parallelizing,tang2011pochoir,frigo2005cache, FrigoSt2009} & & $\Th{T^{\log_2 3}}$ & $\Th{\frac{T^2}{p} + T^{\log_2 3}}$ \\ \hline\hline
         Our Algorithms & $\Th{T \log^2 T}$ & $\Th{T}$ & $\Th{\frac{T\log^2{T}}{p} + T}$ \\ \hline\hline
    \end{tabular}
    }
\vspace{-0.6cm}
\end{table}

\hide{
Our major contributions in this paper are three $\Oh{T \log^2{T}}$ time algorithms for American option pricing improving over the $\Th{T^2}$ time taken by existing discrete-time algorithms, where $T$ is the number of time steps. Our first algorithm evaluates an American call option under the binomial option pricing model while the second one is an American put option under the Black-Scholes-Merton model. Both algorithms use \highlight{Fast Fourier Transforms (FFTs)} for speeding up the computation.
}

The following proposition, which follows easily from the complexities given in Table \ref{tbl:work_span},
notes that each of our parallel algorithms runs asymptotically faster than the corresponding fastest existing
parallel algorithm for every value of $p$.
  
\begin{proposition} 
\label{prop:Tp}
Let $\m{T}^{(old)}_p(T)$ be the running time of any existing algorithm from Table \ref{tbl:work_span} on $p$ cores, and let $\m{T}^{(new)}_p(T)$ denote the same for our algorithm. Then for every (positive) value of $p$ under a greedy scheduler: $\m{T}^{(new)}_p(T) = \oh{\m{T}^{(old)}_p(T)}$.
\end{proposition}

We have implemented our algorithms and compared their running times, energy consumption, and cache performance with those of the option pricing implementations available in the \parops{} framework \cite{brunelle2022parallelizing} developed recently in 2022. Implementations of our algorithms run orders of magnitude faster, consume significantly less energy, and usually incur far fewer L1 cache misses than those implementations.

\para{How Our Algorithms Differ from Existing FFT-based Option Pricing Algorithms.}
FFTs have been used for European option pricing before. European option pricing is simpler than American option pricing, e.g., the European version of the American option pricing algorithm shown in Figure \ref{fig:notations-and-BOPM-loops} can be obtained by replacing the assignment $G_{i,j} \gets \max{\left( s_0 G_{i+1, j} + s_1 G_{i+ 1, j + 1},~ S u^{2j - i} - K\right)}$ in Step \ref{step:nestedloop} with the simpler assignment $G_{i,j} \gets s_0 G_{i+1, j} + s_1 G_{i+ 1, j + 1}$. The absence of the `$\textbf{max}$' operator in this assignment makes an efficient evaluation of the doubly-nested loop in Step \ref{step:nestedloop} easier.

Black, Scholes, and Merton \cite{black1973pricing,merton1973theory} showed that the European option can be calculated using
a Parabolic PDE with infinite domain constraint. By using the Fourier transform, one gets an integral form for European options. To obtain the numerical value from the integral form, one uses numerical integration \cite{hildebrand1987introduction,burden2015numerical} which can be sped up using FFTs. 

There are also approximation results \cite{chang2007richardson,oliveira2014convolution,zhylyevskyy2010fast,lord2008fast} based on repeated Richardson extrapolation \cite{richardson1911approximate,richardson1927viii} and FFT for numerical integration in American options. However, even if the extrapolation is repeated only for a constant number of times for an option that expires in $E$ days, the approximation takes $\Om{\left({E}/{\Delta t} \right) N \log{N}}$ time when $N$ grid points are used to discretize the price of the underlying asset and ${E}/{\Delta t}$ exercise points are placed with every pair of consecutive exercise points being $\Delta t$ days apart. Observing that $T = {E}/{\Delta t}$ corresponds to the number of discrete time steps in the finite difference formulation of the problem, we can rewrite the complexity as $\Om{T N \log{N}}$. Usually, $N \geq T$ is used in practice \cite{oliveira2014convolution}, which reduces the complexity to $\Om{T^2 \log{T}}$.

A major weakness of the existing FFT-based numerical integration approach above is that a closed-form expression for the characteristic function of the log-price must be known for the technique to work. However, our approach does not need to know such a closed-form expression as we apply FFT to speed up stencil/finite-difference computations and not numerical integration. Thus, our approach will work on a larger set of option pricing problems. We are not restricted to infinite-domain problems either \cite{cont2005finite}.

\para{Implications for Nonlinear Stencil Computations.}
Our option pricing algorithms can be viewed as solving a class of \highlight{nonlinear 1D stencil computation problems} asking to evolve a grid of size $\Th{T}$ for $T$ time steps, in $\Th{T \log^2{T}}$ work (i.e., serial time) or $\Th{(T \log^2{T}) / p + T}$ parallel time on $p$ processing cores. As stencil algorithms, they are of independent interest. 

A \highlight{stencil} is a pattern (equation) used to update the values of cells in a spatial grid and evolve the grid over a given number of time steps. The process of evolving cell values in the spatial grid according to a stencil is called a \highlight{stencil computation} \cite{frigo2005cache}. The finite-difference method performs a stencil computation with an update equation derived from the differential equations used as a stencil. 
Stencils are widely used in various fields, including mechanical engineering \cite{paoli2002numerical,zhang2006weighted, renson2016numerical, szilard2004theories, rappaz2010numerical}, meteorology \cite{murray1991numerical,johnson2010numerical, robert1981stable, robert1982semi, avissar1989parameterization, kalnay1990global}, stochastic and fractional differential equations \cite{zhang2004numerical, yuan2002numerical,lord2004numerical}, chemistry \cite{najm1998semi,snider2010heterogeneous,long2008numerical,aubin2004modeling, han2015second}, electromagnetics \cite{komissarov2002time, taflove2005computational, van2012numerical, atangana2015numerical}, finance \cite{chen2008numerical}, and physics \cite{gammie2003harm,vijayaraghavan1990efficient,mangeney2002numerical,hirouchi2009development,cundall1992numerical, pang1999introduction, barth2013high, vesely1994computational, thijssen2007computational}, image processing \cite{weickert2000applications, peyre2011numerical, vese2002numerical}. 

Standard stencil algorithms perform $\Th{NT}$ work to evolve a grid of size $N$ for $T$ time steps, they include looping algorithms, tiled looping algorithms \cite{Bandishti2012, bondhugula2017,Wolfe1987,Wolf1991,Wolf1996,Wonnacott2002,bondhugula2016plutoplus,Andonov1997,Hogstedt1999, Zhang2015A3D, Malas2014TowardsEE,Malas2015}, and recursive divide-and-conquer algorithms \cite{frigo2005cache, FrigoSt2009, tang2011pochoir, Ekanathan2017, Katsuto2010, Sriram2007}. 

A stencil is called \highlight{linear} if it computes the value of a cell at time step $t$ as a fixed linear combination of cell values at time steps before $t$, otherwise it is called \highlight{nonlinear}. For \highlight{1D linear stencils} Ahmad et al. \cite{ahmad2021fast} provide FFT-based algorithms that take $\Oh{T\log{T}}$ serial time for periodic grids and $\Oh{T\log^{2}{T}}$ serial time for aperiodic grids, assuming that the input grid is of size $\Th{T}$. 

The \highlight{stencils we encounter in our current work are nonlinear} because they do not use a linear combination of cell values from prior time steps for updating a target cell, provided that the resulting value is smaller than the value of a function computed solely based on the spatial coordinates of the target cell and other option pricing parameters (e.g., see Step \ref{step:nestedloop} of Figure \ref{fig:notations-and-BOPM-loops}). Such a stencil divides the space-time grid into two disjoint regions -- in one region only the linear combination applies, while in the other only the function value applies. However, the problem is that the boundary between these two regions is not known ahead of time and the location of the boundary may move as the time step $t$ progresses. As a result, \highlight{Ahmad et al.'s \cite{ahmad2021fast} results for linear stencils do not apply}. To the best of our knowledge, ours are the first algorithms for handling such stencils running in time subquadratic in $T$.

\hide{
\subparagraph*{Contributions} Our key contributions are as follows:

\begin{enumerate}
    \item {\textbf{Theory:}} Our results can be viewed from two perspectives:
    \begin{itemize}
    
     \item {\textit{Option pricing.}} We present two FFT-based $\Oh{T \log^2{T}}$ time algorithms for American option pricing improving over the $\Th{T^2}$ time taken by existing discrete-time algorithms, where $T$ is the number of time steps. One of our algorithms is for American call options under the binomial option pricing model while the other one is for American put options under the Black-Scholes-Merton model. Unlike existing FFT-based algorithms for option pricing we do not need to know a closed-form expression for the characteristic function of the log-price, and we are not restricted to infinite domain problems either. Thus our approach works for a wider set of option pricing problems.     

    The following proposition follows easily from the complexities given in Table \ref{tbl:work_span} and by noting that under a greedy scheduler, parallel running time $\m{T}_p$ on $p$ cores is $\Th{\m{T}_1 / p + \m{T}_\infty}$.
    
    \begin{proposition} 
    Let $\m{T}^{(old)}_p(T)$ be the running time of any existing algorithm from Table \ref{tbl:work_span} on $p$ processing cores, and let $\m{T}^{(new)}_p(T)$ denote the same for our algorithm. Then $\m{T}^{(new)}_p(T) = \oh{\m{T}^{(old)}_p(T)}$ for every (positive) value of $p$ under a greedy scheduler.
    \end{proposition}

    \item {\textit{Stencil computations.}} Our option pricing algorithms can be viewed as two $\Oh{T \log^2{T}}$ time algorithms for applying two \highlight{nonlinear 1D stencils} on a grid of size $\Th{T}$ for $T$ time steps. To the best of our knowledge, ours are the first algorithms for handling such stencils running in $\oh{T^2}$ time. These contributions are of independent interest as stencil computations have a wide range of applications beyond quantitative finance.

    \end{itemize}

    \item {\textbf{Practice:}} We 
    have implemented our algorithms and compared their running times 
    with those of the optimized option pricing implementations available in the recently developed \parops{} framework \cite{brunelle2022parallelizing}. Implementations of our algorithms run orders of magnitude faster than those implementations.
\end{enumerate} 
}

\hide{
\begin{table}[t]
    \caption{Parallel Algorithms
for American Call Option Pricing}
    \label{tab:work_span}
    \centering
    \begin{tabular}{|c|c|c|c|c|}
        \hline
         \multirow{2}{*}{Algorithm} & Work & Span & Parallelism & Serial Cache Complexity\\ 
         & $T_{1} (N)$ & $T_{\infty} (N)$ & $P(N) = \frac{T_{1} (N)}{T_{\infty} (N)}$ & $Q_{1}(N)$ \\\hline
         Nested Loop & \multirow{5}{*}{$\Th{N^2}$} & $\Th{N\log N}$ & $\Th{\frac{N}{\log N}}$ & $\Oh{T_1(N)}$\\ \cline{1-1} \cline{3-5}
         Tiled Loop & & \multirow{2}{*}{$\Th{NM}$} & \multirow{2}{*}{$\Th{\frac{N}{M}}$} & \multirow{2}{*}{$\Th{T_1(\frac{N}{M})}$}\\ 
          (cache-aware) & & & & \\\cline{1-1} \cline{3-5}
         Recursive Tiling & & \multirow{2}{*}{$\Th{N^{\log_2 3})}$} & \multirow{2}{*}{$\Th{\frac{N}{N^{\log_2 \frac{3}{2}})}}$} & \multirow{2}{*}{$\Th{T_1(\frac{N}{M})}$} \\ 
         (cache-oblivious) & & & &
         \\\hline
         Trapezoidal & \multirow{3}{*}{$\Th{N \log^2 N}$} & \multirow{3}{*}{$\Th{N}$} & \multirow{3}{*}{$\Th{\log^2 N}$} & \multirow{3}{*}{$\Oh{T_1(N)}$}\\
         Decomposition & & & & \\
         with FFT & & & & \\ \hline
    \end{tabular}
\end{table}
}
\begin{figure}
\begin{minipage}{0.23\textwidth}
    \centering
    \includegraphics[width=0.82\textwidth]{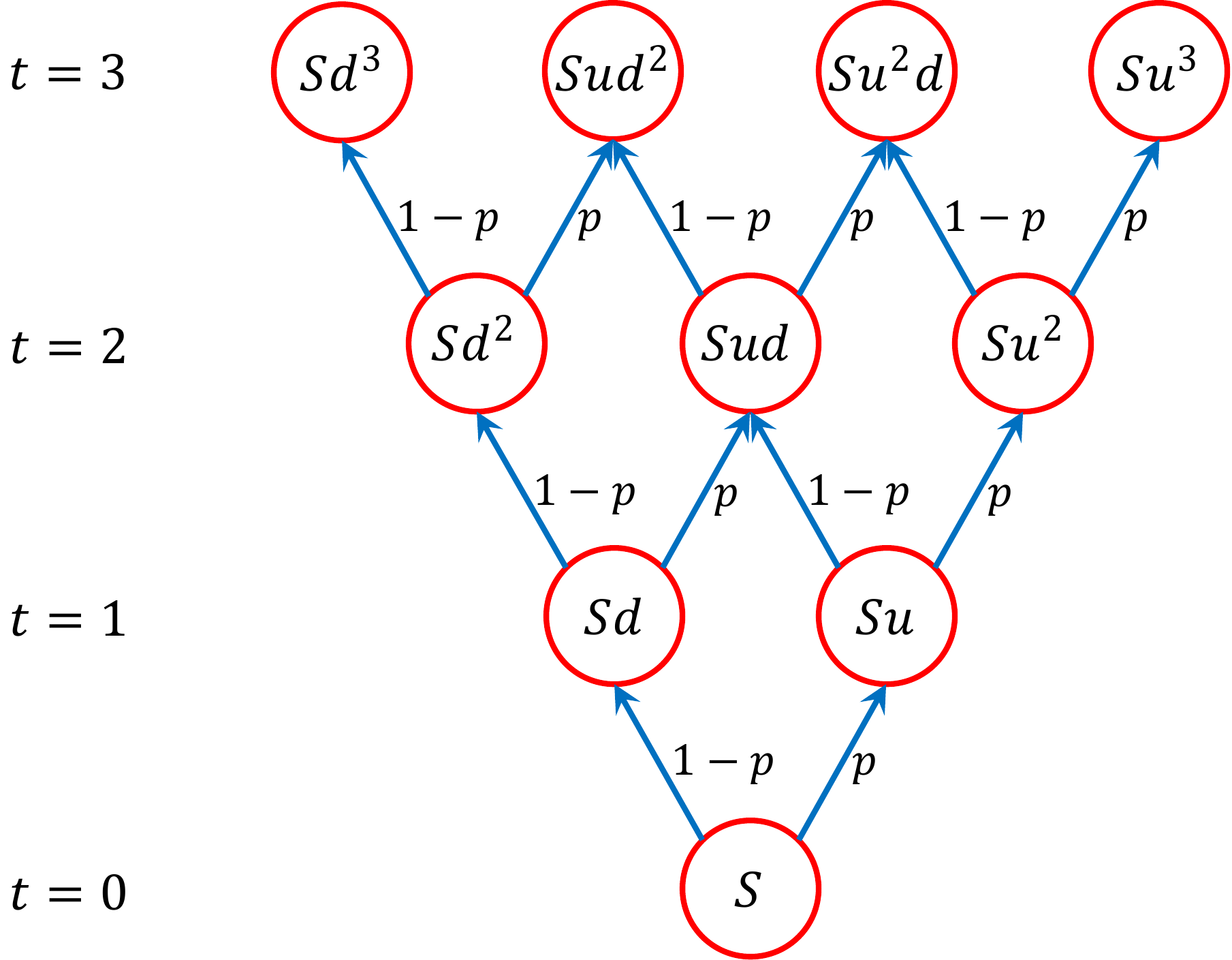}
    {\boldmath $(a)$}
\end{minipage}
\begin{minipage}{0.23\textwidth}
    \centering
    \includegraphics[width=0.6\textwidth]{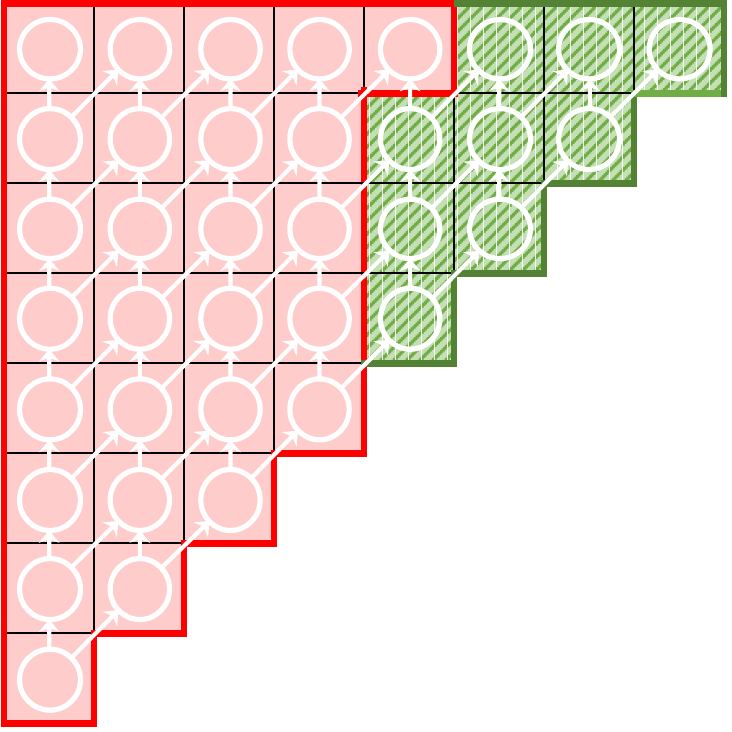}
    {\boldmath $(b)$}
\end{minipage}

\vspace{-0.2cm}
\caption{{\boldmath $(a)$} A 3 time-step binomial tree. {\boldmath $(b)$} A binomial tree of $7$ time steps embedded in an $8 \times 8$ grid. An upward arrow has a price change factor of ${1}/{u}$ while a rightward arrow has a price change factor of $u$. }
\label{fig:trees}
\vspace{-0.6cm}
\end{figure}

\vspace{-0.3cm}
\section{American Call Option under the Binomial Option Pricing Model}
\label{ssec:ACO-BOPM}

\subsection{Binomial Option Pricing Model (BOPM)}
\label{ssec:BOPM}
BOPM \cite{sharpe1978, cox1979option, rendleman1979two} is a simple discrete-time option pricing model without using advanced mathematical 
tools. It is a paradigm of practice.
%


\hide{
\begin{figure}[ht]
    \centering
        \includegraphics[width=0.7\linewidth]{images/binomial.pdf}
        \caption{A 3 time-step binomial tree.}
        \label{img:bt2}
\end{figure}
\begin{figure}[ht]
        \includegraphics[width=0.6\linewidth]{images/tree_to_grid.pdf}
        \caption{A binomial tree of $7$ time steps embedded in an $8 \times 8$ grid. In the binomial tree, an upward arrow has a price change factor of $\frac{1}{u}$ while a rightward arrow has a price change factor of $u$.}
        \label{fig:tritree}
\end{figure}
}

BOPM encodes the various sequences of prices the asset might take as paths in a binomial tree. Each node in the tree represents a possible price at a certain time and the nodes at two successive layers in the tree represent prices at times that are apart by some fixed time step $\Delta t$. The prices increase or decrease by some factor after every $\Delta t$ time.
Figure \ref{fig:trees}$(a)$ gives an example of a 3-time-step binomial price tree that is produced by moving from the valuation day to the expiration day. Denote the initial price by $S$. The price in the next time step (i.e., after $\Delta t$ time) can go up to $f_u = S\cdot u$ or go down to $f_d = S\cdot d$, where the up factor $u=e^{V\sqrt{\Delta t}}$ and the down factor $d={1}/{u}$ are determined by $\Delta t$ and volatility $V$.
\hide{
\begin{displaymath}
    u=e^{V\sqrt{\Delta t}}
\end{displaymath}
\begin{displaymath}d=e^{-V \sqrt{\Delta t}}=\frac{1}{u}
\end{displaymath}
}

Denote the node value as $X_{node} = S \times u^{N_u-N_d}$, where $N_u$ and $N_d$ are the numbers of ticks up and down, respectively. The final nodes of the tree represent the prices on the expiration date. Given the strike price of $K$, the price one can call or put before the contract expires, i.e., the \highlight{exercise value} of each node will be $\max(X_{node}-K, 0)$ for a call option and $\max(K-X_{node}, 0)$ for a put option.

\hide{
\begin{figure}[ht]
    \includegraphics[width=0.6\linewidth]{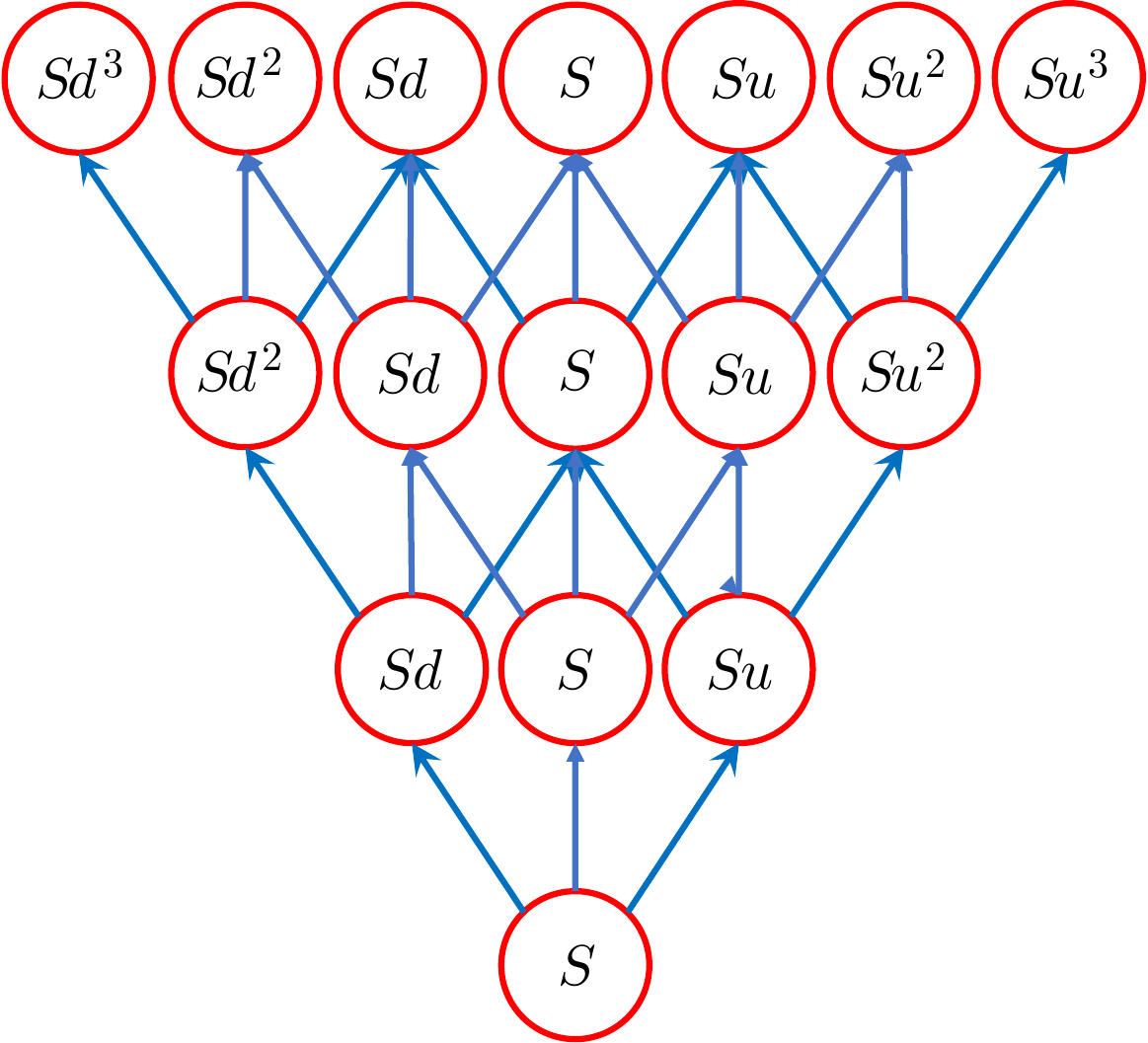}
    \caption{A 3 time-step trinomial tree.}
    \label{img:tree_in_grid}
\end{figure}
}

The risk-neutral valuation of the binomial value is performed iteratively backward. Under the assumption of risk neutrality, the value of the option today is its expected future payoff discounted at the risk-free rate of $R$. Let us number the nodes in each layer of Figure \ref{fig:trees}$(a)$ from top to bottom by successive integers starting from 1. Then the node values $X_{t, j}$ and $X_{t, j + 1}$ of a layer representing some time $t$ can be used to compute the \highlight{binomial value} of the $j$-th node in the layer representing time $t - \Delta t$ as follows: $e^{-R\Delta t}(p \cdot X_{t, j} + (1-p) \cdot X_{t, j+1})$, where, $p={(e^{R\Delta t}-d)}/{(u-d)}$ \cite{hull2003options}. Denote $m = e^{- R \cdot \Delta t }$, $s_0 = m(1-p)$, and $s_1 = mp$. Then the binomial value of that node is: $s_0 \cdot X_{t, j} + s_1 \cdot X_{t, j+1}$. For options on stocks paying a continuous dividend yield $Y$, $p={(e^{(R-Y)\Delta t}-d)}/{(u-d)}$.

The value at each node will be equal to its binomial value for European options and the larger of its binomial value and exercise value for American options.

The binomial tree of $T$ time steps can be embedded in a $(T+1) \times (T+1)$ grid. Figure \ref{fig:trees}$(b)$ shows an example. 

\begin{definition}
\label{def:v}
Let $G_{i, j}$ be the grid value in row $i \in [0, T]$ and column $j \in [0, T]$ of the $(T+1) \times (T+1)$ grid $G$. Let $G^{green}_{i, j} = S \cdot u^{2j-i}-K$, and let $G^{red}_{i, j} = s_0 G_{i+1,j} + s_1 G_{i+1,j+1}$ if $i \in [0, T)$, and $0$ otherwise. Then
%
%
$$G_{i, j} = 
        \begin{cases}
           G^{red}_{i, j},   &\text{if }G^{red}_{i, j} \ge G^{green}_{i, j}\\
           G^{green}_{i, j}, &\text{otherwise.}
       \end{cases}$$

%
\hide{
\begin{table*}[ht]
    \centering
    \begin{tabular}{ccccc}
         $G_{i, j} = 
        \begin{cases}
           G^{red}_{i, j}, &\text{if }G^{red}_{i, j} \ge G^{green}_{i, j}\\
           G^{green}_{i, j}, &\text{otherwise.}
       \end{cases}$ & where & $
        \begin{aligned}
        &G^{red}_{i, j} = 
            \begin{cases}
               0, &\text{if }i=T\\
               s_0 G_{i+1,j} + s_1 G_{i+1,j+1}, &0 \le i < T
           \end{cases}
        \end{aligned}$
        & and &
        $G^{green}_{i, j} = S \cdot u^{2j-i}-K$
    \end{tabular}
\end{table*}
}
%
\end{definition}

We say that cell $(i, j)$ of $G$ is \highlight{red} provided $G_{i, j} = G^{red}_{i, j}$, and \highlight{green} otherwise. We show in Section \ref{ssec:BOPM-boundary} that all red cells in $G$ form a single contiguous region and all green cells form another. A single boundary divides the two regions. We analyze the properties of this \highlight{red-green divider} in Section \ref{ssec:BOPM} which we will exploit to design an efficient algorithm for American call options in Section \ref{ssec:BOPM-algo}. 

\subsection{Properties of the Red-Green Divider}
\label{ssec:BOPM-boundary}
As shown in the example in Figure \ref{fig:trees}$(b)$, we assume that a binomial tree for $T$ time steps is embedded in a $(T + 1) \times (T + 1)$ grid $G$ with the root at the bottom-left corner $G[0, 0]$ and the leaves in the top row $G[T, 0 .. T]$. 
For $0 \leq j \leq i \leq T$, the two children of the binomial tree node at $G[i, j]$ are stored at $G[i + 1, j]$ and $G[i + 1, j + 1]$. The arrow from $G[i, j]$ to $G[i + 1, j + 1]$ represents a price change factor of $u$ while the one from $G[i, j]$ to $G[i + 1, j]$ represents a price change factor of $d = 1/u$. So, the entire tree occupies only the upper-left triangular part of the grid. 

\hide{
The following theorem of inequality is very useful in our proof of boundary.

\begin{theorem}
    \label{thm:leqInMax}
    Let $a_1 \leq a_2 $ and $b_1 \leq b_2$, we must have:
    \begin{align*}
        \max(a_1,b_1) \leq \max(a_2,b_2)
    \end{align*}
\end{theorem}
}


\hide{
Lemma \ref{lma:VNonDecAtRow} below shows that cell values in $G$ are nondecreasing from left to right in every row.


\begin{lemma} \label{lma:VNonDecAtRow}
    $G_{i, j-1} \le G_{i, j}$ for any $0 \le i \le T$ and $1\le j \le i$
\end{lemma}

\begin{proof}
See Appendix Lemma \ref{lma:VNonDecAtRow2}.
\end{proof}

Lemma \ref{lma:diffRowwise} below shows a lower bound on the amount of increase in cell values as one goes from left to right along a row in $G$. 

\begin{lemma} \label{lma:diffRowwise}
    $\Delta^{row}_{i,j} =G_{i, j+1}-G_{i, j}\ge \left(u^2-1\right) G_{i, j}$ for $i \in [0, T]$, $j \in [0, i - 1]$, and $G_{i,j} > 0$.
\end{lemma}
\begin{proof}
See Appendix Lemma \ref{lma:diffRowwise2}.
\end{proof}
}

Lemma \ref{lma:rightOfGreenIsGreen} shows that within the upper-left triangle of $G$, if a cell is green then the cell to its right is also green.

\begin{lemma}
    \label{lma:rightOfGreenIsGreen}
    $\left( G_{i,j+1} = G^{green}_{i,j+1} \right) \implies $ for $i \in [0, T]$ and $j \in [0, i - 1]$, $\left( G_{i,j} = G^{green}_{i,j} \right)$.
\end{lemma}
\hide{
\begin{proof} For the top row $T$, $\left( G_{T, j} = G^{green}_{T, j}\right) \implies G_{T, j} = S \cdot u ^{2j - T} - K > 0$. Then we have, $G^{green}_{T, j+1} = S \cdot u ^{2(j+1) - T} - K= u^2 \cdot S \cdot u ^{2j - T} - K > 0$, and hence of $G_{T,j + 1} = G^{green}_{T,j + 1}$.


For any $G_{i,j}$, we know: 
\begin{align}
    \left( G_{i,j} = G^{green}_{i, j} \right) \implies 
    \left( G^{red}_{i, j} \le G^{green}_{i, j} \right)\label{eq5}
\end{align}

%

%
\begin{align*}
    G^{red}_{i, j+1} &= s_0 G_{i+1, j+1} + s_1 G_{i+1, j+2}\\
    &= s_0 \left(G_{i+1, j} + \Delta^{row}_{i+1, j}\right) + s_1 \left(G_{i+1, j+1}+\Delta^{row}_{i+1, j+1}\right)\\
    &=G^{red}_{i, j} + s_0\Delta^{row}_{i+1, j} + s_1\Delta^{row}_{i+1, j+1}\\
    & \le G^{red}_{i, j} + \left(u^2-1\right) G^{red}_{i, j}\quad \mbox{(by Lemma \ref{lma:diffRowwise})}\\
    &= u^2 G^{red}_{i, j} < u^2 G^{green}_{i, j} \quad \mbox{(by (\ref{eq5}))}\\
    &\le u^2 G^{green}_{i, j} + K(u^2-1) = u^2 (S \cdot u^{2j-i}-K) + K(u^2-1)\\ 
    &= G^{green}_{i, j+1}
\end{align*}

Hence, the statement holds true for all rows.
\end{proof}
}

\begin{proof}
    
    Observe that $\frac{s_0}{u} + s_1 u 
    = mp \left(u - \frac{1}{u}\right) + \frac{m}{u} = e^{-Y\Delta t}$. 
    
    Let $\delta_{i,j} = G_{i,j}^{red} - G_{i,j}^{green}$, $\Delta_{i,j} = \delta_{i,j+1} - \delta_{i,j}$, $\tilde{\delta}_{i,j} = G_{i,j} - G_{i,j}^{green}$ and $\tilde{\Delta}_{i,j} = \tilde{\delta}_{i,j+1} - \tilde{\delta}_{i,j}$. 
    
    We will use mathematical induction to show that $\Delta_{i,j}\leq 0$ for all $0 \leq i \leq T$ and $j < i$. 

    As $\delta_{T, j} = K - S u^{2j - T}$ and thus $\Delta_{T,j} = S u^{2j - T}\left( 1 - u^2 \right) \leq 0$, the claim holds for $i = T$.
    
    Now suppose that the claim holds for some given $i + 1 \leq T$, i.e., $\Delta_{i+1,j} \leq 0$ for $0 \leq j < i+1 \leq T$. Since $\Delta_{i+1,j} \leq 0$, there exists a $j_{i+1}$ such that $G_{i + 1,j} = G_{i + 1,j}^{green}$ when $j > j_{i+1}$ and $G_{i + 1,j} = G_{i + 1,j}^{red}$ when $ j \leq j_i$, where $j_{i+1}$ is the largest $ j$ such that $\delta_{i+1,j} > 0$. Then,
    \begin{itemize}
     \setlength{\itemindent}{0.1cm}
     \item[$-$] for $j > j_{i+1}$, $\tilde{\Delta}_{i+1,j} = \tilde{\delta}_{i+1,j+1} - \tilde{\delta}_{i+1,j} 
         = 0 - 0 = 0$;
    \item[$-$] for $j  = j_{i+1}$, $\tilde{\Delta}_{i+1,j} = \tilde{\delta}_{i+1,j+1} - \tilde{\delta}_{i+1,j} 
         = -\tilde{\delta}_{i+1,j} \leq 0$; and
    \item[$-$] for $j < j_{i+1}$, $\tilde{\Delta}_{i+1,j} = \Delta_{i+1,j} \leq 0$.    
    \end{itemize}
    Thus, $\tilde{\Delta}_{i+1,j}  \leq 0$ for all $j \in [0, i + 1)$. 

    \noindent
    Hence, $\Delta_{i,j} = \delta_{i,j+1} - \delta_{i,j}$

    \vspace{-0.3cm}
    \begin{align*}
       &= \sum_{k \in \left\{ 0, 1\right\}}{s_k \left( \left(G_{i+1,j+k+1}-G_{i+1,j+k+1}^{green}\right) - \left(G_{i+1,j+k}-G_{i+1,j+k}^{green}\right)\right)}\\
       &\phantom{=} + \sum_{k \in \left\{ 0, 1\right\}}{s_k \left(G_{i+1,j+k+1}^{green}-G_{i+1,j+k}^{green}\right)} + G_{i,j}^{green}  - G_{i,j+1}^{green}\\
\hide{      \phantom{\Delta_{i,j}} &= s_0 \left(G_{i+1,j+1}-G_{i+1,j+1}^{green}\right) - s_0\left(G_{i+1,j} - G_{i+1,j}^{green}\right)  \\
       &\phantom{=} + s_1 \left(G_{i+1,j+2} - G_{i+1,j+2}^{green}\right) 
        - s_1\left(G_{i+1,j+1} - G_{i+1,j+1}^{green}\right)\\
       &\phantom{=} + s_0\left(G_{i+1,j+1}^{green} -G_{i+1,j}^{green}\right) 
        + s_1\left(G_{i+1,j+2}^{green} -G_{i+1,j+1}^{green}\right) + G_{i,j}^{green}  - G_{i,j+1}^{green}\\
}
       &= s_0 \tilde{\Delta}_{i+1,j} + s_1 \tilde{\Delta}_{i+1,j+1}  
       + S u^{2j-i}\left(e^{-Y \Delta t}-1\right)\left(u^2-1\right) \leq 0        
    \end{align*}
    \vspace{-0.3cm}
    
    Therefore, $\Delta_{i,j} \leq 0$ for all $0 \leq i < T$ and $j < i$. 
    
    Because $\Delta_{i,j}  \leq 0$, there exists a $j_i$ such that when $j > j_i$, $G_{i,j} = G_{i,j}^{green}$ and when $ j \leq j_i$, $G_{i,j} = G_{i,j}^{red}$, where $j_i$ is the largest $j$ such that $\delta_{i,j} > 0$. Now if $G_{i,j} = G_{i,j}^{green}$, we have $j > j_i$, and thus $j + 1 > j_i$, which implies that $G_{i,j+1} = G_{i,j+1}^{green}$.
\end{proof}

\begin{lemma}
    \label{lma:rangeOfGreen}
    $\left( G_{i, j}=G^{green}_{i, j} \right) \hspace{-0.1cm}\implies\hspace{-0.1cm}\Big(j \ge \frac{i}{2} + \frac{1}{2V\sqrt{\Delta t}}\ln{\Big(\frac{K}{S}\frac{1-e^{-R\Delta t}}{1-e^{-Y\Delta t}}\Big)}\Big)$
\end{lemma}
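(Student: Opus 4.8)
The plan is to lower-bound $G^{red}_{i,j}$ at a green cell and play it off against $G^{green}_{i,j}$. The two ingredients I would set up first are: (i) the constant identity $\tfrac{s_0}{u}+s_1 u = e^{-Y\Delta t}$, already computed at the start of the proof of Lemma~\ref{lma:rightOfGreenIsGreen}, together with $s_0+s_1 = m(1-p)+mp = m = e^{-R\Delta t}$; and (ii) the elementary bound $G_{i',j'}\ge G^{green}_{i',j'}=Su^{2j'-i'}-K$, which holds at every cell because $G_{i',j'}$ is by definition the larger of $G^{red}_{i',j'}$ and $G^{green}_{i',j'}$.

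Assume $G_{i,j}=G^{green}_{i,j}$ with $0\le i<T$; then $G^{green}_{i,j}>G^{red}_{i,j}$. Expanding $G^{red}_{i,j}=s_0 G_{i+1,j}+s_1 G_{i+1,j+1}$ and applying ingredient (ii) to each child gives $G^{red}_{i,j}\ge s_0\big(Su^{2j-i-1}-K\big)+s_1\big(Su^{2j-i+1}-K\big)=Su^{2j-i}\big(\tfrac{s_0}{u}+s_1u\big)-K(s_0+s_1)=Su^{2j-i}e^{-Y\Delta t}-Ke^{-R\Delta t}$. Feeding this into $Su^{2j-i}-K=G^{green}_{i,j}>G^{red}_{i,j}$ and rearranging yields $Su^{2j-i}(1-e^{-Y\Delta t})>K(1-e^{-R\Delta t})$. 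Dividing by the positive quantity $S(1-e^{-Y\Delta t})$ and then taking $\log_u$ (legitimate since $u=e^{V\sqrt{\Delta t}}>1$) gives $2j-i>\frac{1}{V\sqrt{\Delta t}}\big(\ln\frac{K}{S}+\ln\frac{1-e^{-R\Delta t}}{1-e^{-Y\Delta t}}\big)$, which is the claimed inequality (with strict $>$, hence a fortiori the stated $\ge$).

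The one remaining case is the boundary row $i=T$, where $G^{red}_{T,j}=0$ by definition, so a green leaf is simply one with $Su^{2j-T}-K>0$; I would handle this directly (or fold it in as the base of a downward induction on $i$). I expect the main friction in writing the proof to be exactly this boundary row together with the two degenerate parameter regimes: when $Y=0$ the derived inequality becomes $0>K(1-e^{-R\Delta t})$, which is impossible, so the statement correctly asserts the absence of interior green cells (early exercise is never worthwhile), and when $Y>R$ the displayed bound is merely weaker than the in-the-money condition $u^{2j-i}>K/S$ rather than false. None of this is analytically hard; essentially all the content lies in the one-line comparison of the middle paragraph and in getting the constant identities (i) right, and I do not anticipate a genuine obstacle beyond that bookkeeping.
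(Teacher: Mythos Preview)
Your proposal is correct and follows essentially the same route as the paper's proof: both bound $G^{red}_{i,j}$ from below by plugging in $G_{i+1,j}\ge G^{green}_{i+1,j}$ and $G_{i+1,j+1}\ge G^{green}_{i+1,j+1}$, simplify to $Su^{2j-i}e^{-Y\Delta t}-Ke^{-R\Delta t}$, compare against $G^{green}_{i,j}=Su^{2j-i}-K$, and take logarithms. Your write-up is in fact slightly tidier, since you invoke the identity $\tfrac{s_0}{u}+s_1u=e^{-Y\Delta t}$ at the outset (as established in Lemma~\ref{lma:rightOfGreenIsGreen}) rather than re-deriving it mid-computation as the paper does, and you explicitly flag the $i=T$ and $Y=0$ edge cases that the paper's proof leaves implicit.
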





\begin{proof}
    By $G_{i, j} = G^{green}_{i, j}$ and Def. \ref{def:v}, we know $S\cdot u^{2j-i} - K > s_0 G_{i+1,  j} + s_1 G_{i+1, j+1}$, $G_{i+1, j} \ge S\cdot u^{2j-(i+1)} - K$, and $G_{i+1, j+1} \ge S\cdot u^{2(j+1)-(i+1)} - K\label{eq7rangeOfGreenInAppendix}$.

\hide{
\begin{align}
    G_{i+1, j} \ge S\cdot u^{2j-(i+1)} - K\label{eq7hidden}
\end{align}

\begin{align}
    G_{i+1, j+1} \ge S\cdot u^{2(j+1)-(i+1)} - K\label{eq8hidden}
\end{align}
}

Denote $z=S\cdot u^{2j-i}$ and use above inequalities: 
%
\begin{align*}
    &z - K \ge s_0 (S\cdot u^{2j-(i+1)} - K) + s_1 (S\cdot u^{2(j+1)-(i+1)} - K) \\
    &= m (S\cdot u^{2j-(i+1)} - K) + mp ((S\cdot u^{2(j+1)-(i+1)} - K) \\
&\quad - (S\cdot u^{2j-(i+1)} - K)) \\
    &= (1/u) \left(mz - mKu + mp\left(u^2 - 1\right)z\right)
\end{align*}
\begin{align*}
    &= \frac{1}{u}\left(mz - mKu + m\left({\left(e^{(R-Y)\Delta t}-\frac{1}{u}\right)}/{\left(u-\frac{1}{u}\right)}\right)(u^2 - 1)z\right)\\
    &= \frac{1}{u}\left(- mKu + u \cdot e^{-Y\Delta t}z\right) = - e^{-R\Delta t} K + e^{-Y\Delta t}
\end{align*}
Then we have: $z \ge K\frac{1- e^{-R\Delta t}}{1- e^{-Y\Delta t}}
    \implies  u^{2j - i} \ge \frac{K}{S}\frac{1- e^{-R\Delta t}}{1- e^{-Y\Delta t}}$
\begin{align*}
    \implies & j \ge \frac{i}{2} + \frac{1}{2V\sqrt{\Delta t}}\left(\ln{\frac{K}{S}}+\ln{\frac{1-e^{-R\Delta t}}{1-e^{-Y\Delta t}}}\right) 
\end{align*}
So, $\left( G_{i, j} = G^{green}_{i, j} \right) \implies \left( j \ge \frac{i}{2} + \frac{1}{2V\sqrt{\Delta t}}\ln{\left(\frac{K}{S} \frac{1-e^{-R\Delta t}}{1-e^{-Y\Delta t}}\right)} \right)$
\end{proof}

Lemma \ref{lma:bottomOfGreenIsGreen} shows that within the upper-left triangular area of $G$ if a cell is green then the cell below it must also be green, and Lemma \ref{lma:diagLeftOfRedIsRed} shows that if a cell is red then the cell diagonally left below it must also be red.

\begin{lemma}
    \label{lma:bottomOfGreenIsGreen}
    $\left( G_{i+1,j} = G^{green}_{i+1,j} \right) \implies \left( G_{i,j} = G^{green}_{i,j} \right)$ for $i \in [0, T - 1]$ and $j \in [0, i]$.
\end{lemma}

\begin{proof}
$\left( G_{i+1, j} = G^{green}_{i+1, j} \right) \implies \left( G_{i+1, j+1} = G^{green}_{i+1, j+1} \right)$ by Lemma \ref{lma:rightOfGreenIsGreen}. Let $z = S u^{2j-i}$, then $G_{i+1,j} = \frac{z}{u} - K$ and $G_{i+1,j+1} = uz - K$.
\vspace{-.5cm}
\begin{align*}
    &G^{red}_{i, j} = s_0 G_{i+1,j} + s_1 G_{i+1,j+1} = m(1-p)\left(\frac{z}{u} -K\right) + mp(uz - K)\\
     &= - S u^{2j-i} \left(1- e^{-Y\Delta t}\right) + Su^{2j-i} -K e^{-R\Delta t } \\
    &\leq -S u^{ 1 +\frac{1}{V\sqrt{\Delta t}}\left(\ln{\frac{K}{S}}+\ln{\frac{1-e^{-R\Delta t}}{1-e^{-Y\Delta t}}}\right)} \left(1-e^{-Y\Delta t}\right) + Su^{2j-i} -K e^{-R\Delta t } \\
    &\quad\quad \mbox{(by Lemma \ref{lma:rangeOfGreen} at $G_{i+1, j}$)} \\
    &= -uK\left(1-e^{-R\Delta t}\right) + Su^{2j-i} -Ke^{-R\Delta t} \quad (\textrm{since~} u^{\frac{1}{V\sqrt{\Delta t}}}=e) \\
    &\le -K\left(1-e^{-R\Delta t}\right) + Su^{2j-i} -Ke^{-R\Delta t}
    =  Su^{2j-i} - K 
    = G^{green}_{i, j} 
\end{align*}
Hence, the statement holds.
\end{proof}

\hide{
\begin{corollary}
    \label{crl:noHole} 
    Any rectangular area in $G$ with a green cell at the top left corner does not have any red cells in it.
\end{corollary}
\begin{proof}
 Follows from Lemma \ref{lma:rightOfGreenIsGreen} and Lemma \ref{lma:bottomOfGreenIsGreen}.
\end{proof}
}

\begin{lemma}\label{timedecay}
    For any $i \in [0, T - 2]$ and $j < i$: $G_{i,j} \geq G_{i+2,j+1}$.
\end{lemma}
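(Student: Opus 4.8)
The plan is to prove the inequality by downward induction on $i$, from $i = T-2$ to $i=0$, exploiting that the two cells in question lie on the same price diagonal of the grid. Since $2j - i = 2(j+1) - (i+2)$, we have $G^{green}_{i,j} = S u^{2j-i} - K = S u^{2(j+1)-(i+2)} - K = G^{green}_{i+2,j+1}$; call this common value $g$. Because $G_{i,j} = \max(G^{red}_{i,j}, g)$ and $G_{i+2,j+1} = \max(G^{red}_{i+2,j+1}, g)$, and $x \mapsto \max(x,g)$ is nondecreasing, it suffices to prove, for every $i \in [0,T-2]$, the statement $P(i)$: $G^{red}_{i,j} \ge G^{red}_{i+2,j+1}$ for all $0 \le j < i$. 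Combining any such $P(i)$ with the green-value equality then yields $G_{i,j} \ge G_{i+2,j+1}$.

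Before the induction I would record the routine fact that every grid value is nonnegative. Assuming, as is standard for the binomial model, that $0 \le p \le 1$ (so $s_0 = m(1-p) \ge 0$ and $s_1 = mp \ge 0$), a downward induction on $i$ — base $G_{T,j} = \max(0, S u^{2j-T}-K) \ge 0$, step $G^{red}_{i,j} = s_0 G_{i+1,j} + s_1 G_{i+1,j+1} \ge 0$ — gives $G_{i,j} \ge 0$ throughout. With this in hand, the base case $P(T-2)$ is immediate: $G^{red}_{i+2,j+1} = G^{red}_{T,j+1} = 0 \le s_0 G_{T-1,j} + s_1 G_{T-1,j+1} = G^{red}_{T-2,j}$, using $s_0,s_1 \ge 0$ and nonnegativity of the grid values.

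For the inductive step, fix $i \in [0,T-3]$ and assume $P(i+1)$; by the $\max$-monotonicity reduction above, this is equivalent to $G_{i+1,j'} \ge G_{i+3,j'+1}$ for all $j' < i+1$. Since $i+2 < T$, both red values unfold via the recursion: $G^{red}_{i,j} = s_0 G_{i+1,j} + s_1 G_{i+1,j+1}$ and $G^{red}_{i+2,j+1} = s_0 G_{i+3,j+1} + s_1 G_{i+3,j+2}$. Applying the hypothesis at columns $j$ and $j+1$ (both legal since $j < i$ forces $j,j+1 < i+1$) together with $s_0,s_1 \ge 0$ gives $G^{red}_{i,j} \ge s_0 G_{i+3,j+1} + s_1 G_{i+3,j+2} = G^{red}_{i+2,j+1}$, which is $P(i)$. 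There is no serious obstacle here — the proof is just monotonicity of $\max$, nonnegativity of the update coefficients, and the diagonal-index bookkeeping; the only external ingredient worth stating explicitly is the standard no-arbitrage assumption $0 \le p \le 1$, which makes $G^{red}$ a nonnegative combination of later values and is what the induction leans on. (One could alternatively route the argument through the earlier lemmas on green cells spreading rightward and downward, but the direct induction above is self-contained and shorter.)
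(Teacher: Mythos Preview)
Your proof is correct and follows essentially the same approach as the paper: downward induction on $i$, using that $G^{green}_{i,j}=G^{green}_{i+2,j+1}$ on a common diagonal together with nonnegativity of $s_0,s_1$ and monotonicity of $\max$. One small wording quibble: $P(i+1)$ \emph{implies} the full inequality $G_{i+1,j'}\ge G_{i+3,j'+1}$ via the $\max$-reduction, but is not literally equivalent to it; since you only use the forward direction, the argument stands.
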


\begin{proof}
    We use mathematical induction. For $i = T - 2$, we have our base case: $G_{i,j} \geq \max(0,Su^{2j-i}- K) = G_{T,j+1}$.
\hide{    
    \begin{align*}
        G_{i,j} \geq \max(0,Su^{2j-i}- K) = G_{T,j+1}
    \end{align*}
}  

Suppose it holds true for all $G_{i+1,j}$ for some given $i \geq T - 3$ and $j \in [0, i]$. Then
$G_{i,j} = \max(s_0 G_{i+1,j} + s_1 G_{i+1,j+1},$ $Su^{2j-i} -K) \geq \max(s_0 G_{i+3,j+1} + s_1 G_{i+3,j+2},Su^{2j-i} -K) = G_{i+2,j+1}$. \end{proof}


\begin{lemma} \label{lma:diagLeftOfRedIsRed}
    $\left(G_{i+1,j+1} = G_{i+1,j+1}^{red}\right) 
    \Rightarrow \left(G_{i,j} = G_{i,j}^{red}\right)$ for $i \in [0, T - 1]$ and $j \in [0, i - 1]$.
\end{lemma}
\vspace{-0.2cm}

\begin{proof}
    $G_{i+1,j+1} = G_{i+1,j+1}^{red}$ and Lemma \ref{lma:bottomOfGreenIsGreen} gives: $G_{i+2,j+1} = G_{i+2,j+1}^{red} \geq G_{i+2,j+1}^{green} = Su^{2j-i} - K$. Now, by Lemma \ref{timedecay}, $G_{i,j} \geq G_{i+2,j+1} \geq Su^{2j-i} - K = G_{i,j}^{green}$. So, $G_{i,j} = G_{i,j}^{red}$.
\end{proof}

The following corollary says that at every time step all red cells appear to the left of all green cells, and the boundary between the green and the red regions either remains the same or moves by one cell towards the left every time step.

\begin{corollary}
    \label{crl:movingBoundary} 
    For every $i \in [0, T - 1]$, there exists an index $j_i \in [0, i]$ such that all cells $G_{i,j}$ with $0 \leq j \leq j_i$ are red and all (possibly zero) cells $G_{i,j}$ with $j_i < j \leq i$ are green. Also, for $i \in [0, T - 2]$, $j_{i+1} - 1 \leq j_i \leq j_{i+1}$.
\end{corollary}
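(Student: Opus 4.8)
The plan is to derive everything from the three structural lemmas proved above. The prefix/suffix structure of a single row is just a reformulation of Lemma~\ref{lma:rightOfGreenIsGreen}: within row $i$ that lemma shows that a green cell forces every cell to its right (among columns $0,\dots,i$) to be green, so the green cells of row $i$ occupy a suffix of the row and the red cells occupy the complementary prefix. I would take $j_i$ to be the largest column index $j\le i$ at which $G_{i,j}=G^{red}_{i,j}$ — this is the index already introduced in the proof of Lemma~\ref{lma:rightOfGreenIsGreen} as the largest $j$ with $\delta_{i,j}\ge 0$. It lies in $[0,i]$ because $G^{red}_{i,0}\ge 0$ (by backward induction from $G_{T,\cdot}=\max(0,\cdot)\ge 0$, using $s_0,s_1\ge 0$), while $G^{green}_{i,0}=Su^{-i}-K<0$ for the node prices relevant here, so cell $(i,0)$ is red. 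With this definition the first assertion is immediate: $G_{i,j}$ is red for $0\le j\le j_i$ and green for $j_i<j\le i$.

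The inequality $j_i\le j_{i+1}$ follows from the contrapositive of Lemma~\ref{lma:bottomOfGreenIsGreen}, which says that if $G_{i,j}$ (with $j\in[0,i]$) is red then $G_{i+1,j}$ is red. Taking $j=j_i$ makes $G_{i+1,j_i}$ red, and since $j_{i+1}$ is the last red column of row $i+1$ we get $j_i\le j_{i+1}$.

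The inequality $j_{i+1}-1\le j_i$ follows from Lemma~\ref{lma:diagLeftOfRedIsRed}. If $j_{i+1}=0$ it is trivial since $j_i\ge 0$. Otherwise $j_{i+1}\ge 1$; set $j:=j_{i+1}-1$, so that the cell $G_{i+1,j+1}=G_{i+1,j_{i+1}}$ is red, and Lemma~\ref{lma:diagLeftOfRedIsRed} gives that $G_{i,j}=G_{i,\,j_{i+1}-1}$ is red, hence $j_{i+1}-1\le j_i$. Combining the two bounds gives $j_{i+1}-1\le j_i\le j_{i+1}$.

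The step I expect to require the most care is the range restriction in Lemma~\ref{lma:diagLeftOfRedIsRed}, which is stated only for $j\in[0,i-1]$; applying it with $j=j_{i+1}-1$ needs $j_{i+1}\le i$, i.e. the corner cell $G_{i+1,i+1}$ of row $i+1$ must be green. I would verify this directly from Definition~\ref{def:v} by checking $G^{green}_{i+1,i+1}=Su^{i+1}-K\ge G^{red}_{i+1,i+1}$, reusing the identities among $u$, $s_0$, $s_1$ (e.g. $s_0/u+s_1u=e^{-Y\Delta t}$) already exploited in the proofs of Lemmas~\ref{lma:rightOfGreenIsGreen} and~\ref{lma:bottomOfGreenIsGreen}. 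In the degenerate case where this fails and row $i+1$ is entirely red, the contrapositive of Lemma~\ref{lma:bottomOfGreenIsGreen} applied across the whole row forces row $i$ to be entirely red too, so $j_i=i\ge j_{i+1}-1$ and the bound still holds.
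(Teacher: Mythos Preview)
Your approach is exactly the paper's: its entire proof reads ``Follows from Lemma~\ref{lma:rightOfGreenIsGreen}, Lemma~\ref{lma:bottomOfGreenIsGreen}, and Lemma~\ref{lma:diagLeftOfRedIsRed},'' and you have correctly unpacked which lemma supplies the row-structure, which gives $j_i\le j_{i+1}$, and which gives $j_{i+1}-1\le j_i$.

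One genuine slip in your edge-case fallback: the contrapositive of Lemma~\ref{lma:bottomOfGreenIsGreen} says that $G_{i,j}$ red implies $G_{i+1,j}$ red, not the other way around, so knowing row $i{+}1$ is entirely red does \emph{not} force row $i$ to be entirely red via that lemma. Your primary plan---checking directly that the corner cell $G_{i+1,i+1}$ is green using the identity $s_0/u+s_1u=e^{-Y\Delta t}\le 1$---is the right way to handle the range restriction on Lemma~\ref{lma:diagLeftOfRedIsRed}; just drop the incorrect fallback.
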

\vspace{-0.2cm}

\begin{proof}
 Follows from Lemmas \ref{lma:rightOfGreenIsGreen}, \ref{lma:bottomOfGreenIsGreen}, and \ref{lma:diagLeftOfRedIsRed}.
\end{proof}

\subsection{Algorithm for American call option pricing under BOPM}
\label{ssec:BOPM-algo}
\vspace{-0.2cm}

\begin{figure}[t]
    \centering
    \begin{subfigure}[t]{0.45\linewidth}
        \centering
        \includegraphics[width=0.8\linewidth]{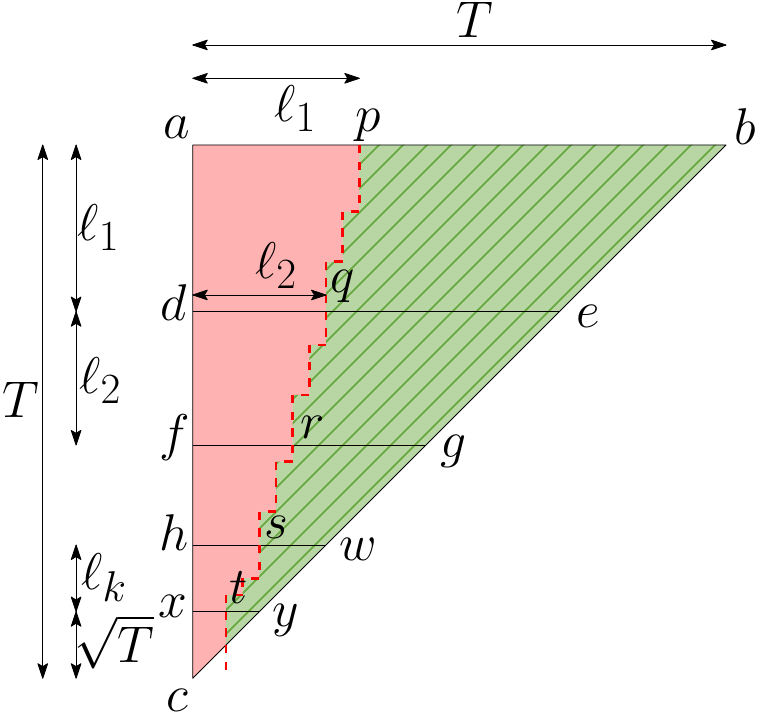}
        \caption{Partitioning the solution space into trapezoids}
    \label{img:trapezoid}
    \end{subfigure}
    \hfill
    \begin{subfigure}[t]{0.5\linewidth}
        \centering
        \includegraphics[width=0.9\linewidth]{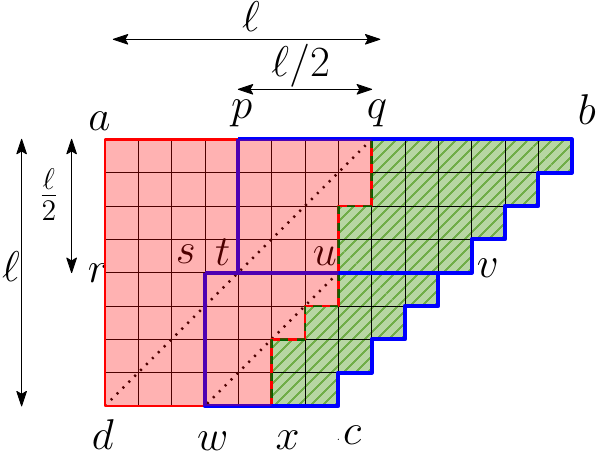}
        \caption{Decomposing a trapezoid into smaller trapezoids}
        \label{img:aoption}
    \end{subfigure}
    \vspace{-0.2cm}    
    \caption{American call option pricing under BOPM}
    \vspace{-0.5cm}
\end{figure}

The solution space is a right-angle isosceles triangle with base length $T$. We know the boundary between the red and green cells in the first row of the triangle (solution space); however, we do not know the locus of the boundary in the subsequent rows of the triangle. We compute the boundary in the following process. 

We partition the triangle (solution space) into trapezoids (see Figure \ref{img:trapezoid}). We compute the first trapezoid with its first row as the same first row of the triangle (the solution space) and solve this newly created trapezoid (we explain how we create a trapezoid and solve it later in this section). Then we compute the second trapezoid with its first row as the last row of the first trapezoid and solve the second trapezoid. This process continues until we are left with a right-angle isosceles triangle with base size at most $\sqrt{T}$. We solve this triangle iteratively by doing quadratic work in time $\Oh{T}$. We describe the process in detail below.

\para{Partitioning the Triangle into Trapezoids.}
 Let $abc$ be a right-angle isosceles triangle with base length $T$ (see Figure~\ref{img:trapezoid}). 
 Let $\ell_1$ be the number of red cells on the line segment $ab$. 
 They will appear consecutively from $a$ to some point $p$. Let $d$ be the point in the line segment $ac$ that is $\ell_1$ distance from $a$. Draw a horizontal line from $d$; let the line intersect $bc$ at point $e$. Therefore, we get a trapezoid $abed$ with height $\ell_1$ with $\ell_1$ red cells in its first row. 
 
We solve trapezoid $abed$, which means that we compute the values of all red cells in its last row and thus find the boundary point $q$ between red and green cells in $de$. Let $|dq| = \ell_2$. Let $f$ be the point on $dc$ that $|df| = \ell_2$. We draw a horizontal line $fg$, and get the second trapezoid $degf$ of height $\ell_2$. The last row of the trapezoid $abed$ becomes the first row of $degf$. 

We solve the trapezoid $degf$ and all subsequent trapezoids created following the approach described above until we are left with a right-angle isosceles triangle $xyc$ with base length at most $\sqrt{T}$. We compute all cell values of $xyc$ iteratively in $\Oh{T}$ time. 

\para{Solving a Trapezoid.} Solving a trapezoid means given all red cell values in its first row computing all red cell values in its last row. Let $abcd$ be a trapezoid of height $\ell$ with $\ell$ red cells from $a$ to $q$ in its first row (Figure~\ref{img:aoption}). Let $r$ be the point on $ad$ such that $|ar| = \floor{\ell/2}$. Draw a horizontal line from $r$ intersecting $bc$ at $v$. 
To compute the values of red cells on $dc$, we $(1)$ compute all red cells on $rv$, and $(2)$ using the cell values on $rv$, compute all red cells on $dc$. 

\vspace{0.05cm}
\subparagraph*{(1) \underline{\textit{Computing all red cells on $\pmb{rv}$}}.}
Let $rv$ and $qd$ intersect $t$. We compute the cell values on line segment $rt$ using the FFT-based stencil algorithm of \cite{ahmad2021fast} which are guaranteed to be red because of Corollary \ref{crl:movingBoundary}. Note that $|rt| = |pt| = \floor{\ell/2}$. The newly created trapezoid $pbvt$ has height $\floor{\ell/2}$, and there are $\floor{\ell/2}$ red cells in its first row. We solve trapezoid $pbvt$ recursively similar to trapezoid $abcd$, and thus compute all red cell values on $tv$. 

\vspace{0.05cm}
\subparagraph*{(2) \underline{\textit{Computing all red cells on $\pmb{dc}$ using the red cells on $\pmb{rv}$}}.}
Let point $u$ be on the boundary between red and green cells on $rv$. Draw a line from $u$ parallel to $qd$ intersecting $dc$ at $w$. Next, draw a line through $w$ perpendicular to $dc$ intersecting $rv$ at point $s$. 
We compute all red cells on $dc$ in exactly the same way we computed the red cells on $rv$ above. We first use the FFT-based stencil algorithm of \cite{ahmad2021fast} to find all cell values on $dw$, and then recursively solve trapezoid $svcw$ of height 
$\ceil{\ell/2}$ to find all red cell values on $wc$.

\vspace{0.05cm}
 \subparagraph*{\underline{\textit{Solving trapezoids of height $\pmb{\Oh{1}}$ (base case)}}.}
 We solve trapezoids of height $\Oh{1}$ in $\Oh{1}$ work and span using the na\"ive looping code 
 (e.g., see the pseudocode in Figure \ref{fig:notations-and-BOPM-loops}). 

 The following theorem gives the work and span of our algorithm.

\vspace{-0.2cm}
\begin{theorem}\label{thm:AOT2}
    Our algorithm solves the American call option pricing problem under BOPM in $\Oh{T\log^2 T}$ work and $\Oh{ T }$ span, where $T$ is the number of time steps.
\end{theorem}
\vspace{-0.2cm}
%

\vspace{-0.48cm}
\begin{proof}
Let $\zeta_1(\ell)$ and $\zeta_\infty(\ell)$ be the work and span, respectively, for solving a trapezoid of height $\ell$ recursively. We call the $\Oh{\ell \log \ell}$-work FFT-based periodic algorithm \cite{ahmad2021fast} twice and solve two smaller trapezoids of height $\ell/2$ each recursively. Hence, $\zeta_1(\ell) = 2\zeta_1 (\ceil{\ell/2}) + \Theta(\ell \log \ell) = \Oh{\ell \log^2 \ell}$. Observe that although the two smaller trapezoids must be solved one after the other (e.g., $svcw$ after $pbvt$ in Figure \ref{img:aoption}), each of them can be solved in parallel with the FFT-based algorithm (of span $\Oh{\log \ell \log \log \ell}$ \cite{ahmad2021fast}) called to find the red cells on its left (on line segments $rt$ and $dw$, respectively). Hence, $\zeta_{\infty}(\ell) = 2 \zeta_{\infty} (\ceil{\ell/2}) + \Oh{\log \ell \log  \log \ell} = \Oh{\ell}$.

Suppose that we solve $k$ trapezoids of heights $\ell_1, \ell_2, \ldots, \ell_k$, respectively, using the above process as shown in Figure \ref{img:trapezoid}. Let $\Psi_1$ and $\Psi_\infty$ be the total work and span, respectively, for solving those $k$ trapezoids followed by the time needed to solve the leftover triangle of height $\Oh{\sqrt{T}}$. Then $\Psi_1 = \left( \sum_{1 \leq i \leq k}{\Oh{\ell_i \log^2 \ell_i}} \right) + \Oh{T} = \Oh{ T\log^2 T }$. Since those trapezoids and the triangle are solved in sequence, we have $\Psi_{\infty} = \sum_{1 \leq i \leq k}{\Oh{ \ell_i }} + \Oh{\sqrt{T}} = \Oh{T}$.
\end{proof}



\section{American Call Option under the Trinomial Option Pricing Model}
\label{sec:TOPM}


The trinomial options pricing model (TOPM) encodes the possible sequences of prices for a given asset within the structure of a trinomial tree (see Figure \ref{fig:trees}$(c)$). It expands on BOPM by allowing the value of an asset to remain unchanged after a given time step. TOPM was introduced by Boyle \cite{boyle1986option}, and while it is less popular than the BOPM, it provides for the possibility of more accurate predictions than BOPM at the cost of only a constant factor blowup in runtime when using na\"ive $\Oh{T^2}$ methods. Langat, Mwaniki, and Kiprop showed that TOPM converges to the same solution as Black-Scholes with half as many time steps  \cite{LangatMwanikiKiprop}. TOPM is also equivalent to the explicit finite difference method \cite{hull2003options}.

TOPM carries over many properties from BOPM, e.g., $X_{node} = S \times u^{N_u-N_d}$ since the number of ``remain the same'' moves does not factor into the price. Here $u = e^{V\sqrt{2 \Delta t}}$, and $d = 1/u$. The exercise value in TOPM is $\text{max}(X_{node} - K,0)$.

The transition probabilities can be expressed as \\$p_u = \left( {(e^{(R-Y)\Delta t/2}-\sqrt{d})}/{(\sqrt{u} - \sqrt{d})} \right) ^2$, \\$p_d = \left( {( \sqrt{u} -e^{(R-Y)\Delta t/2} )}/{(\sqrt{u} - \sqrt{d})} \right) ^2$, and $p_{\probm} = 1 - p_u - p_d$, which are alternate forms of those given in \cite{hull2003options}.

Let $m = e^{-R\Delta t}$, $s_0 = m p_u, s_1 = m p_{\probm}, s_2 = m p_d$. 
Let $G_{i, j}$ denote the grid value in the row $i \in [0, T]$ and column $j \in [0, 2i]$ of the $(T+1) \times (2T+1)$ grid $G$. Let $G^{green}_{i, j} = S \cdot u^{j-i}-K$, and let $G^{red}_{i, j} = \sum_{k \in \{0, 1, 2\}}{ s_k G_{i+1,j + k} }$ if $i \in [0, T)$, and $0$ otherwise.
Then similar to BOPM:
%
\begin{align*}
    G_{i, j} = 
        \begin{cases}
           G^{red}_{i, j},   & \text{if }G^{red}_{i, j} \ge G^{green}_{i, j}\\
           G^{green}_{i, j}, & \text{otherwise.}
       \end{cases}
\end{align*}

\hide{
\begin{table}[h]
    \centering
    \begin{tabular}{@{}c@{}}
         $G_{i, j} = 
        \begin{cases}
           G^{red}_{i, j},   &\hspace{-0.2cm}\text{if }G^{red}_{i, j} \ge G^{green}_{i, j}\\
           G^{green}_{i, j}, &\hspace{-0.2cm}\text{otherwise.}
       \end{cases}$ where\\ \\
       $
        \begin{aligned}
        & G^{red}_{i, j} = \sum_{k \in \{0, 1, 2\}}{ s_k G_{i+1,j + k} } \text{~if $i \in [0, T)$, and $0$ if $i = T$}\\
        & G^{green}_{i, j} = S \cdot u^{j-i}-K
        \end{aligned}$
    \end{tabular}
\end{table}
}


In the supplementary material,
we show that the TOPM grid shows properties similar to the BOPM grid, that is, in every row all red cells appear first in contiguous locations followed by all green cells, and with every time step the red-green boundary moves by at most one cell to the left. This allows us to use a similar algorithm to that given for BOPM (Section \ref{ssec:BOPM-algo}) with the identical work and span.

\vspace{-0.3cm}
\section{American Put Option Under the Black-Scholes-Merton Model}
\label{sec:APO-BSM}

\subsection{Black-Scholes-Merton Pricing Model (BSM)}
\label{ssec:BSM}

BSM is a mathematical method to calculate the theoretical value of an option contract. The option pricing problem is transformed into a partial differential equation (PDE) with variable coefficients. An explicit formula for the price can be obtained assuming a log-normal distribution of the asset price. Note that
the limit of the discrete-time BOPM approximates the continuous-time BSM under the same assumption.
While BOPM utilizes simple statistical methods, BSM requires a solution of a stochastic differential equation.

Denote stock price at time $t$ by $\mathcal{S}(t)$. BSM claims that 

\noindent
there is a deterministic relation between the option price and the stock price and time. This means that there is a deterministic function $v(t,x)$ for option price $x$ and time $t$ such that: $X(t) = v(t,\mathcal{S}(t))$, where $X(t)$ represents the value of the option at time $t$. Now BSM derives that $v(t,x)$ satisfies the following two-area classical form: 
%

\vspace{-0.3cm}
\begin{align}\label{eq:BSM}
    v(t,x) = \left \{ \begin{array}{lr@{}}
    \frac{1}{r}\left( \begin{array}{@{}c@{}}
    \frac{\partial v}{\partial t}(t,x) + rx \frac{\partial v}{\partial x}(t,x)\\
    + \frac{1}{2} \sigma^2 x^2 \frac{\partial^2 v}{\partial x^2}(t,x)
    \end{array} \right),
    & \hspace{-0.9cm}\mbox{if $x > L(T-t)$}\\\vspace{-0.4cm}
    & \\
    K - x, & \hspace{-0.9cm}\mbox{if $0 \leq x \leq L(T-t)$}
    \end{array} \right .
\end{align}

\vspace{-0.1cm}
\noindent
where $L(0) = K$, $\frac{\partial v}{\partial x}(t,L(T-t)^+) = \frac{\partial v}{\partial x}(t,L(T-t)^-) = -1$, and $v(t,L(T-t)^+) = v(t,L(T-t)^-)$ for $0 \leq t  < T$.

It is equivalent to satisfying the following:

\vspace{-0.3cm}
\begin{align}\label{eq:BSM2}
\hspace{-0.2cm}v(t,x) \geq \left \{ \begin{array}{lr@{}}
    \hspace{-0.2cm}\frac{1}{r}\left( \begin{array}{@{}c@{}}
    \frac{\partial v}{\partial t}(t,x) + rx \frac{\partial v}{\partial x}(t,x)\\
    + \frac{1}{2} \sigma^2 x^2 \frac{\partial^2 v}{\partial x^2}(t,x)
    \end{array} \right),
    & \hspace{-0.3cm}\mbox{if $0 \leq x \leq L(T-t)$}\\\vspace{-0.4cm}
    & \\
    K - x, & \hspace{-0.3cm}\mbox{if $x > L(T-t)$}
    \end{array} \right .  
\end{align}

\vspace{-0.1cm}
\noindent
where, $t \in [0,T]$, 
Recall that at the maturity time $T$, the option price (call option case) will be $
    X(T) = \max(\mathcal{S}(T) - K,0) = (K-\mathcal{S}(T))^{+}$.

It also means that $v(T,x) = (K-x)^{+}$ on the boundary. Therefore, the goal becomes solving Equation (\ref{eq:BSM}) or Inequality (\ref{eq:BSM2}). For more details on how the BSM model is formulated or why the complementary form is equivalent to the classical form, see \cite{chen2007mathematical,shreve2004stochastic,van1976optimal}. We show the two areas are contiguous and find properties of their boundary in Section \ref{ssec:BSM-boundary}, which can be exploited by our algorithm in Section \ref{ssec:BSM-algo}.

\subsection{Properties of the Two-Area Boundary}
\label{ssec:BSM-boundary}

Note that Equation (\ref{eq:BSM}) includes dimensional variables. First, we find nondimensionalized forms of Equations (\ref{eq:BSM}) and (\ref{eq:BSM2}). 

Let $ s = \ln \frac{x}{K} $, $\tau = \frac{1}{2}\sigma^2  (T-t)$, $\tilde{v}(\tau,s) = \frac{1}{K}v(t,x) $, $\tilde{L}(\tau) = L(T-t)$ and $\omega = \frac{2r}{\sigma^2}$. Then: $\frac{\partial \tilde{v}}{\partial s} = \frac{x}{K} \frac{\partial v}{\partial x}, \frac{\partial^2 \tilde{v}}{\partial s^2} = \frac{x^2}{K} \frac{\partial^2 v}{\partial x^2} + \frac{x}{K} \frac{\partial v}{\partial x}, 
    \frac{\partial \tilde{v}}{\partial \tau} = -\frac{2}{K\sigma^2}\frac{\partial v}{\partial t}$.
%

Applying Equation (\ref{eq:BSM}), and $\tilde{v}(\tau,s)$ satisfes the following: 

\vspace{-0.2cm}
\begin{align}\label{Non-di-BSM}
      \tilde{v}(\tau,s) = \left \{ \begin{array}{lr@{}}
         \frac{1}{\omega}\left( \begin{array}{@{}c@{}}
         (\omega-1)\frac{\partial \tilde{v}}{\partial s} (\tau,s)\\ + \frac{\partial ^2\tilde{v}}{\partial s^2}(\tau,s)
         - \frac{ \partial \tilde{v}}{ \partial \tau}(\tau,s)
         \end{array} \right),
          & \mbox{if $s > \tilde{L}(\tau)$}\\
           1-e^s, & \mbox{if $s \leq  \tilde{L}(\tau)$}
    \end{array} \right .  
\end{align}

It also satisfies the dimensionless complementary form after plugging into Inequality (\ref{eq:BSM2}):

\vspace{-0.2cm}
\begin{align}\label{Non-di-BSM2}
      \tilde{v}(\tau,s) \geq \left \{ \begin{array}{lr@{}}
         \frac{1}{\omega}\left( \begin{array}{@{}c@{}}
         (\omega-1)\frac{\partial \tilde{v}}{\partial s} (\tau,s)\\ + \frac{\partial ^2\tilde{v}}{\partial s^2}(\tau,s)
         - \frac{ \partial \tilde{v}}{ \partial \tau}(\tau,s)
         \end{array} \right),
          & \mbox{if $s \leq \tilde{L}(\tau)$}\\
           1-e^s, & \mbox{if $s >  \tilde{L}(\tau)$}
    \end{array} \right .  
\end{align}

\noindent
where, $\tilde{L}(0) = 1 $ and $\tilde{v}(s,0) = \max(1- e^s,0)$. Consider an approximation $ v_k^n$ of $\tilde{v}(n\Delta \tau,k\Delta s)$ and use the following finite-difference approximations (where, $\tilde{v}$ denotes $\tilde{v} (n \Delta \tau, k \Delta s)$):

\begin{align*}
  \hspace{-0.2cm}\frac{ \partial \tilde{v}}{ \partial \tau}  \approx \frac{v_k^{n+1} - v_k^{n}}{\Delta t}, \frac{\partial \tilde{v}}{\partial s}  \approx \frac{v_{k+1}^n - v_{k-1}^{n}}{2\Delta s},
    \frac{\partial ^2\tilde{v}}{\partial s^2} \approx \frac{v_{k+1}^n - 2v_{k}^n+v_{k-1}^{n}}{(\Delta s)^2}
\end{align*}

\hide{
\begin{align}\label{dis}
    \nonumber \frac{ \partial \tilde{v}}{ \partial \tau}(n \Delta \tau, k \Delta s)   &\approx \frac{v_k^{n+1} - v_k^{n}}{\Delta t}, \frac{\partial \tilde{v}}{\partial s}(n \Delta \tau, k \Delta s)  \approx   \nonumber \frac{v_{k+1}^n - v_{k-1}^{n}}{2\Delta s} \\ 
    \frac{\partial ^2\tilde{v}}{\partial s^2}(n \Delta \tau, k \Delta s)&\approx \frac{v_{k+1}^n - 2v_{k}^n+v_{k-1}^{n}}{(\Delta s)^2}
\end{align}
}

Now plug it into Equation (\ref{Non-di-BSM}) to obtain these two regions:

\noindent
\noindent
\begin{align}\label{bsmG}
    v_k^{n+1} = \left \{ \begin{array}{lr@{}}
    \left( \begin{array}{@{}c@{}}
    \left(1- \omega \Delta \tau- 2\frac{\Delta \tau}{(\Delta s)^2}\right)v_k^{n}+\\ 
    ~\sum\limits_{h \in \{-1, 1\}}{\left( \frac{\Delta \tau}{(\Delta s)^2} +  h \frac{(\omega-1)}{2} \frac{\Delta \tau}{\Delta s}\right)v_{k+h}^{n}} \end{array}\right),
    & \hspace{-0.2cm}\mbox{if } k > \widehat{L}\\ \vspace{-.4cm}
    & \\
    1 - e^{k \Delta s}, & \hspace{-0.2cm}\mbox{if } k  \leq \widehat{L}
    \end{array} \right .
\end{align}

\hide{
\begin{align}\label{bsmG}
      v_k^{n+1} = \left \{ \begin{array}{lr@{}}
         \left( \begin{array}{@{}c@{}}
         \left(1- \omega \Delta \tau- 2\frac{\Delta \tau}{(\Delta s)^2}\right)v_k^{n}\\
         +~\left( \frac{\Delta \tau}{(\Delta s)^2} +  \frac{(\omega-1)}{2} \frac{\Delta \tau}{\Delta s}\right)v_{k+1}^{n}\\
         +~\left(\frac{\Delta \tau}{(\Delta s)^2} -  \frac{(\omega-1)}{2} \frac{\Delta \tau}{\Delta s}\right)v_{k-1}^{n}
         \end{array} \right),
          & \mbox{if } k > \frac{\tilde{L}((n+1)\Delta \tau)}{\Delta s}\\
           1 - e^{k \Delta s}, & \mbox{if } k  \leq \frac{\tilde{L}((n+1)\Delta \tau)}{\Delta s}
    \end{array} \right .  
\end{align}
}

\noindent
where, $\widehat{L} = \frac{\tilde{L}((n+1)\Delta \tau)}{\Delta s}$ and $v_k^0 = \max(1-e^{k\Delta s},0)$ for all integer $k$. It also satisfies the following condition by discretizing Equation (\ref{Non-di-BSM2}):

\vspace{-.5cm}
\begin{align}\label{bsmG2}
      v_k^{n+1} \geq \left \{ \begin{array}{lr@{}}
      \left( \begin{array}{@{}c@{}}
          \left(1- \omega \Delta \tau- \frac{2\Delta \tau}{(\Delta s)^2}\right)v_k^{n}+\\
         \sum\limits_{h \in \{-1, 1\}}{\left( \frac{\Delta \tau}{(\Delta s)^2} +  \frac{h(\omega-1)}{2} \frac{\Delta \tau}{\Delta s}\right)v_{k+h}^{n}}
         \end{array} \right),
          &  \hspace{-0.3cm}\mbox{if } k \leq \widehat{L}\\
           1 - e^{k \Delta s}, & \hspace{-0.3cm}\mbox{if } k  > \widehat{L}
    \end{array} \right .  
\end{align}

\hide{
\begin{align}\label{bsmG2}
      v_k^{n+1} \geq \left \{ \begin{array}{lr@{}}
         \left( \begin{array}{@{}c@{}}
         \left(1- \omega \Delta \tau- 2\frac{\Delta \tau}{(\Delta s)^2}\right)v_k^{n}+\\
         ~\left( \frac{\Delta \tau}{(\Delta s)^2} + \frac{(\omega-1)}{2} \frac{\Delta \tau}{\Delta s}\right)v_{k+1}^{n}\\
         +~\left(\frac{\Delta \tau}{(\Delta s)^2} - \frac{(\omega-1)}{2} \frac{\Delta \tau}{\Delta s}\right)v_{k-1}^{n}
         \end{array} \right),
          & \hspace{-1cm}\mbox{if } k \leq \frac{\tilde{L}((n+1)\Delta \tau)}{\Delta s}\\
          & \\
           1 - e^{k \Delta s}, & \hspace{-1cm}\mbox{if } k  > \frac{\tilde{L}((n+1)\Delta \tau)}{\Delta s}
    \end{array} \right .  
\end{align}
}

Similar to the BOPM for American option, we define the green zone and the red zone:
\begin{definition}
We call that $v_k^n$ is in the green zone when $k \Delta s \leq \tilde{L}(n\Delta \tau )$, otherwise it is in the red zone.
\end{definition}
To apply Equations (\ref{bsmG})--(\ref{bsmG2}), we need to determine the form of $\tilde{L}((n+1)\Delta \tau)$ which has no analytical form, although it has asymptotic results \cite{alobaidi2001asymptotic} or approximation results
\cite{zhu2007calculating,zhu2006new}. Instead we can use the following theorem from \cite{chen2007mathematical}:
\begin{theorem}[\cite{chen2007mathematical}]\label{T58}
The early exercise boundary curve $\tilde{L}(\tau)$ is monotonically decreasing.
\end{theorem}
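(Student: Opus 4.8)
The nondimensional problem (\ref{Non-di-BSM})--(\ref{Non-di-BSM2}) is \emph{autonomous} in $\tau$: the coefficients $\omega-1$, $1$, $\omega$ carry no $\tau$-dependence, the only such dependence being through the free boundary $\tilde{L}(\tau)$ itself. The plan is to reduce the claim to two facts --- (i) for each fixed $s$ the value $\tilde{v}(\tau,s)$ is nondecreasing in $\tau$, and (ii) for each fixed $\tau$ the green (exercise) region is a left half-line $\{s \le \tilde{L}(\tau)\}$, which is exactly the red/green structure already encoded by (\ref{bsmG}) --- and then note that these two facts immediately force $\tilde{L}$ to be nonincreasing. So essentially all the work is in establishing (i).

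\textbf{Step 1 --- monotonicity of $\tilde{v}$ in $\tau$.} First I would pass to the optimal-stopping (Snell-envelope) representation equivalent to the variational inequality (\ref{Non-di-BSM2}): with $\theta = T-t$ the time to maturity (note $\tau$ is an increasing function of $\theta$), the option value equals $\sup_{\rho}\mathbb{E}\!\left[e^{-r\rho}(K-\mathcal{S}(\rho))^{+}\mid\mathcal{S}(0)=x\right]$, the supremum taken over stopping times $\rho$ valued in $[0,\theta]$. Since a stopping time valued in $[0,\theta_1]$ is also valued in $[0,\theta_2]$ whenever $\theta_1<\theta_2$, the supremum over the longer horizon dominates, so $\tilde{v}(\tau,s)$ is nondecreasing in $\tau$ (and, taking $\rho\equiv 0$, we recover the obstacle bound $\tilde{v}\ge\max(1-e^s,0)$). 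Alternatively I would give a PDE proof: put $w=\partial_\tau\tilde{v}$; inside the red region $\tilde{v}$ satisfies $\tilde{v}_\tau=\tilde{v}_{ss}+(\omega-1)\tilde{v}_s-\omega\tilde{v}$, hence $w$ satisfies the same linear parabolic equation; differentiating the smooth-fit conditions $\tilde{v}(\tau,\tilde{L}(\tau))=1-e^{\tilde{L}(\tau)}$ and $\tilde{v}_s(\tau,\tilde{L}(\tau))=-e^{\tilde{L}(\tau)}$ shows $w=0$ on the free boundary, while $w\to 0$ as $s\to\infty$ and $w\ge 0$ as $\tau\downarrow 0$ (the option has nonnegative time value off the obstacle); since the zeroth-order coefficient $-\omega$ is negative, the parabolic minimum principle then forces $w\ge 0$ throughout the red region.

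\textbf{Steps 2--3 --- conclusion.} Fix $\tau_1<\tau_2$ and let $s$ be red at time $\tau_1$, i.e.\ $\tilde{v}(\tau_1,s)>1-e^s$. By Step 1, $\tilde{v}(\tau_2,s)\ge\tilde{v}(\tau_1,s)>1-e^s$, so $s$ is still red at time $\tau_2$; thus the red region grows with $\tau$, so by (ii) the green half-lines are nested, $(-\infty,\tilde{L}(\tau_2)]\subseteq(-\infty,\tilde{L}(\tau_1)]$, i.e.\ $\tilde{L}(\tau_2)\le\tilde{L}(\tau_1)$, which is the theorem. If I needed (ii) itself rather than quoting the model, I would get it from convexity of $\tilde{v}(\tau,\cdot)$ together with the obstacle $1-e^s$ being decreasing and lying below $\tilde{v}$: once the graph touches the obstacle it stays on it as $s$ decreases.

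\textbf{Main obstacle.} The hard part is making Step 1 fully rigorous near the corner $(\tau,s)=(0,0)$, where the initial payoff $\max(1-e^s,0)$ has a kink and the free boundary $\tilde{L}$ emanates from that very point. The probabilistic route avoids the corner but then leans on the (standard but nontrivial) equivalence of the variational inequality (\ref{Non-di-BSM2}) with the optimal-stopping problem; the PDE route instead needs enough regularity of $\tilde{v}$ up to $\tau=0$ and up to a free boundary that is a priori only continuous in order to differentiate in $\tau$ and apply the maximum principle. Everything in Steps 2--3 is soft by comparison. To upgrade ``nonincreasing'' to ``strictly decreasing'' one would replace the minimum principle by the strong maximum principle, obtain $w>0$ in the interior, and rule out the boundary being locally constant.
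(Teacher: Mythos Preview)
The paper does not prove this theorem at all: it is stated as a known fact with a citation to \cite{chen2007mathematical} and then used as a black box in the proof of the next theorem. So there is no ``paper's own proof'' to compare against; your proposal is supplying something the authors deliberately outsourced to the literature.

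That said, your outline is sound and is essentially the standard argument one finds in the cited reference and elsewhere. The optimal-stopping route in Step~1 is the clean way to get $\tau$-monotonicity of $\tilde{v}$ without regularity headaches: enlarging the horizon enlarges the admissible class of stopping times, hence the supremum. Your Step~2--3 deduction that the continuation (red) region is nondecreasing in $\tau$, and hence that its complement $\{s\le\tilde{L}(\tau)\}$ shrinks, is correct and is exactly how the monotonicity of the free boundary is usually derived. Your identification of the ``main obstacle'' is also accurate: the PDE route via the minimum principle for $w=\partial_\tau\tilde{v}$ is morally right but requires nontrivial regularity of the free boundary and of $\tilde{v}$ up to $\tau=0$, which is why most treatments prefer the probabilistic argument. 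One small point: your argument yields \emph{nonincreasing}; the paper (and the cited source) says ``monotonically decreasing,'' and in the subsequent theorem only nonincreasing is actually used, so you need not chase the strict version.
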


\begin{theorem}\label{thm:BSM-boundary}
    Let  
    $a =  \frac{\Delta \tau}{(\Delta s)^2} + (\omega-1) \frac{\Delta \tau}{\Delta s}$,    
    $b =  \frac{\Delta \tau}{(\Delta s)^2} - (\omega-1) \frac{\Delta \tau}{\Delta s}$, 
    $c = 1 - a - b -\omega \Delta \tau$, and $k_n$ be the largest integer such that $k_n \leq \frac{\tilde{L}(n\Delta \tau)}{\Delta s}$. Then $0 \leq k_n - k_{n+1} \leq 1$ when $a,b,c \geq 0$.
\end{theorem}
\vspace{-0.2cm}

\hide{\begin{proof}
    We give the full proof in the supplementary material.\end{proof}
}
\begin{proof}
    We first prove that $k_n - k_{n+1} \geq 0$. Suppose that this is not true. Then we will have:
        $\tilde{L}((n+1)\Delta \tau) \geq k_{n+1} \geq k_n + 1 > \tilde{L}(n \Delta \tau )$,
    which contradicts Theorem \ref{T58}.
    
    Now we prove that $k_n - k_{n+1}\leq 1 $. 
   Because $k_n$ is in the green zone, we will have the following:
    \begin{align*}
        & v_{k_n}^{n} = 1 - e^{k_n \Delta s} 
        \geq (1- \omega \Delta \tau- a-b)v_{k_n}^{n-1} + a v_{k_n+1}^{n-1} + b v_{k_n - 1}^{n-1} \\
        & \geq (1- \omega \Delta \tau- a-b)\left(1 - e^{k_n \Delta s}\right) + a \left(1 - e^{(k_n+1) \Delta s}\right) 
         \\&+ b \left(1 - e^{(k_n-1) \Delta s}\right)\\
        & \Rightarrow \omega \Delta \tau\left(1-e^{k_n \Delta s}\right) + e^{k_n \Delta s}\left(a\left(e^{\Delta s}-1\right) + b \left(e^{-\Delta s}-1\right) \right) \geq 0\\        
         & \Rightarrow 1 - e^{(k_n-1) \Delta s} 
         > (1- \omega \Delta \tau- a-b)\left(1 - e^{(k_n-1) \Delta s}\right)  \\
         &\quad + a \left(1 - e^{(k_n) \Delta s}\right) + b \left(1 - e^{(k_n-2) \Delta s}\right)
    \end{align*}
\hide{    
   It leads to: $\omega \Delta \tau\left(1-e^{k_n \Delta s}\right) + e^{k_n \Delta s}\left(a\left(e^{\Delta s}-1\right) + b \left(e^{-\Delta s}-1\right) \right) \geq 0$
    \begin{align}
         \nonumber & \Rightarrow 1 - e^{(k_n-1) \Delta s} 
         > (1- \omega \Delta \tau- a-b)\left(1 - e^{(k_n-1) \Delta s}\right)  \\
         &\quad + a \left(1 - e^{(k_n) \Delta s}\right) + b \left(1 - e^{(k_n-2) \Delta s}\right)\nonumber \label{eq16}
    \end{align}
}    
     Considering $v_{k_n-1}^{n+1}$, we first observe that:
    \begin{align*}
        v_{k_n-2}^{n} = 1-e^{(k_n-2)\Delta s}, v_{k_n-1}^{n} = 1-e^{(k_n-1)\Delta s}, v_{k_n}^{n} = 1-e^{(k_n)\Delta s}
    \end{align*}
    Now we show that $v_{k_{n}-1}^{n+1}$ is in the green zone. Suppose that it is in the red zone. Then: 
    \begin{align*}
        v_{k_{n}-1}^{n+1} &= (1- \omega \Delta \tau- a-b)\left(1 - e^{(k_n-1) \Delta s}\right) + a \left(1 - e^{(k_n) \Delta s}\right) \\
        &\quad + b \left(1 - e^{(k_n-2) \Delta s}\right) 
        < 1 - e^{(k_n-1) \Delta s} 
    \end{align*}
    which leads to a contradiction because $v_{k_{n}-1}^{n+1} $ should be $\ge 1 - e^{(k_n-1) \Delta s} $. By the definition of $k_{n+1}$, we must have $k_{n+1} \geq k_n - 1$, completing the proof of the theorem.
\end{proof}

\subsection{Algorithm for American put option pricing under BSM}
\label{ssec:BSM-algo}

\begin{figure}[t]
    \centering
    \begin{subfigure}[t]{0.45\linewidth}
        \centering
        \includegraphics[width=\linewidth]{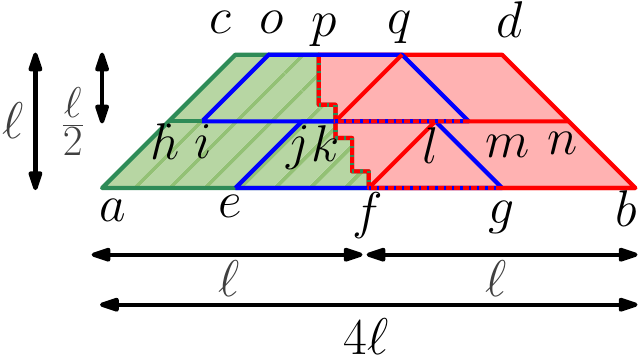}
        \caption{Decomposing trapezoid $abcd$ into smaller trapezoids}
        \label{img:trapezoidBSM}
    \end{subfigure}
    \hfill
    \begin{subfigure}[t]{0.45\linewidth}
        \centering
        \includegraphics[width=1\linewidth]{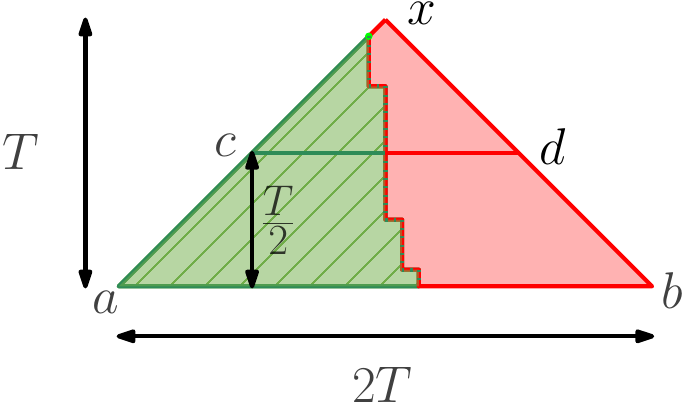}
        \caption{Partitioning the solution space into trapezoids}
        \label{img:BSMoption}
    \end{subfigure}
    \vspace{-.3cm}
    \caption{American put option pricing under BSM}
    \vspace{-.5cm}

\end{figure}

Our algorithm for the American put option under BSM is similar to our algorithm for the American call option under BOPM as described in Section \ref{ssec:ACO-BOPM}.

Observe that we will have to perform a nonlinear stencil computation based on the update equation (\ref{bsmG}). For $T$ time steps we use a $T \times 2T$ space-time grid with the time dimension being $T$ and spatial dimension $2T$. According to Equation (\ref{bsmG}), we compute a cell $v_k^{n+1}$ of that grid, where $n + 1$ represents the time coordinate and $k$ the spatial coordinate, from cells $v_{k - 1}^{n}$, $v_{k}^{n}$, and $v_{k + 1}^{n}$ using a 3-point stencil provided $k > \frac{\tilde{L}((n+1)\Delta \tau)}{\Delta s}$. Otherwise, we set it to $1 - e^{k \Delta s}$. In the first case, cell $v_k^{n + 1}$ will be in the red zone, and in the second case it will be in the green zone. As explained in Section \ref{ssec:BSM-boundary}, the entire boundary between these two zones in not known ahead of time, but it moves by at most one cell toward the green region with every time step. The goal of the algorithm is to compute the value of the central cell of the spatial dimension at time step $T$ (e.g., apex $x$ of the isosceles triangle $abx$ in Figure \ref{img:BSMoption}).

We solve the problem by decomposing the isosceles triangle $abx$ into a sequence of the isosceles trapezoids of geometrically decreasing heights (see Figure \ref{img:BSMoption}) and solving (i.e., find the cell values of top base given those of the bottom base of the trapezoid) them one by one from bottom to top until we reach a leftover triangle of small constant size which we solve na\"ively to find the value of $x$. We solve an isosceles trapezoid recursively by decomposing it into two smaller trapezoids of smaller height and solving them recursively and also using the FFT-based algorithm from \cite{ahmad2021fast} solve two subtrapezoids that are entirely composed of red cells (see Figure \ref{img:trapezoidBSM}). Details of this algorithm are given below.


    We first show how to solve an isosceles trapezoid $abdc$ (as shown in Figure \ref{img:trapezoidBSM}) of height $l$, bottom/longer base ($ab$) length $4\ell$, and $\angle cab = \angle dba = 45^{\circ}$. Thus, the top/shorter base $dc$ is of length $2\ell$. Solving trapezoid $abdc$ means computing the values of the cell at the top base $cd$ given the values of the cells at the bottom base $ab$.

   If $\ell \leq 10$, we na\"ively solve $abdc$ and identify the location of the red-green boundary point $p$ in $cd$ in $\Oh{1}$ time. If $\ell > 10$, we find the row $hn$ at height $\frac{\ell}{2}$ and calculate all the cells in it. To do so, we recursively solve the trapezoid $eglj$ which is found as follows:

\vspace{0.1cm}
\textbf{1.} Let $f$ be the point on $ab$ that lies in the green region, but $f + 1$ is in the red region. Identify the points $e$ and $j$ to the left and right of $f$, respectively, such that $|ef| = |fj| = \ell$.

\textbf{2.} Construct an isosceles trapezoid with base $eg$, height $\frac{\ell}{2}$ and top $jl$ such that $\angle jeg = \angle lge = 45^{\circ}$. Thus, $|jl| = \ell$.
\vspace{0.1cm}

   After solving the trapezoid $eglj$, we have the cell values in $jl$ and the location of the red-green boundary point $k$ in $jl$. We can easily calculate the values of cells in $hj$ since those values are independent of time and depend only on spatial coordinates. Finally, we use the $FFT$-based algorithm of \cite{ahmad2021fast} to solve the trapezoid $fbnl$, where the point $l$ is found by forcing $fbnl$ to be an isosceles trapezoid.
   
   Therefore, we can get the values of the cells in $ln$ and thus the values of all the cells in $hn$. 
   Then we can calculate the values of the cells in $dc$ given the values of the cells in $hn$ exactly the same way as we computed the cell values in $hn$ from those in $ab$.

   Now, let us go back to Figure \ref{img:BSMoption} to see how to compute the value of the apex $x$. After solving $abdc$ as above to calculate the cells in $cd$, if $|cd| \leq 10$, we na\"ively calculate the value of $x$, which takes $\Oh{1}$ time. But if $|cd| > 10$, we recursively apply our trapezoid algorithm to solve a smaller trapezoid with the bottom base $cd$.
   
   \hide{
   We recursively apply our trapezoid algorithm again to trapezoid $imqo$ (construction is similar to $eglj$) ,which calculates the cells in $oq$ and the location of the point $p$. We can easily calculate the cells on $co$. Finally, we use the FFT-based algorithm again to calculate the trapezoid $kndq$ (construction is similar to $fbnl$), which calculates the cells on $qd$. Now we finish calculating the value on $cd$. 
}

\begin{theorem}
     \label{thm:BSMParallel-inAppendix}
    Our algorithm solves the American put option pricing problem under BSM in $\Oh{T\log^2 T}$ work and $\Oh{ T }$ span, where $T$ is the number of time steps.
\end{theorem}

\vspace{-.5cm}
\begin{proof}
    
   The proof is very similar to that of Theorem \ref{thm:AOT2}.
   Let $\zeta_1(\ell)$ and $\zeta_\infty(\ell)$ be the work and span, respectively, for solving a trapezoid of height $\ell$ (see Figure \ref{img:trapezoidBSM}). We recursively solve two trapezoids of height $\ell / 2$ each in sequence but use a parallel FFT-based algorithm \cite{ahmad2021fast} on each half, both size $\Th{\ell}$. Since the FFT-based algorithm performs $\Oh{\ell \log{\ell}}$ work in $\Oh{\log{\ell} \log\log{\ell}}$ span, we can write: $\zeta_1(\ell) = 2 \zeta_1\left(\frac{\ell}{2}\right) + \Oh{\ell \log \ell}$ if $\ell > 10$ and $\Oh{1}$ otherwise. Similarly, $\zeta_\infty(\ell) = 2 \zeta_\infty\left(\frac{\ell}{2}\right) + \Oh{\log{\ell} \log\log{\ell}}$ if $\ell > 10$ and $\Oh{1}$ otherwise. Solving, $\zeta_1(\ell) = \Oh{ \ell \log^2{\ell}}$ and $\zeta_\infty(\ell) = \Oh{ \ell }$.

   Now, let $\Psi_1(T)$ and $\Psi_\infty(T)$ be the work and the span, respectively, of solving an isosceles triangle of base size $T$ (see Figure \ref{img:BSMoption}). Then $\Psi_1(T) = \Psi_1\left(\frac{T}{2}\right) + \Oh{T \log^2 T}$ if $T > 10$ and $\Oh{1}$ otherwise. Also, $\Psi_\infty(T) = \Psi_{\infty}\left(\frac{T}{2}\right) + \Oh{T}$ if $T > 10$ and $\Oh{1}$ otherwise. Solving, $\Psi_1(T) = \Oh{T \log^2{T}}$ and $\Psi_\infty = \Oh{T}$.\end{proof}

\begin{figure*}[t!]
 \scalebox{1}[0.7]{
    \centering
    \begin{subfigure}[t]{0.3\linewidth}
        \centering
        \includegraphics[width=\linewidth]{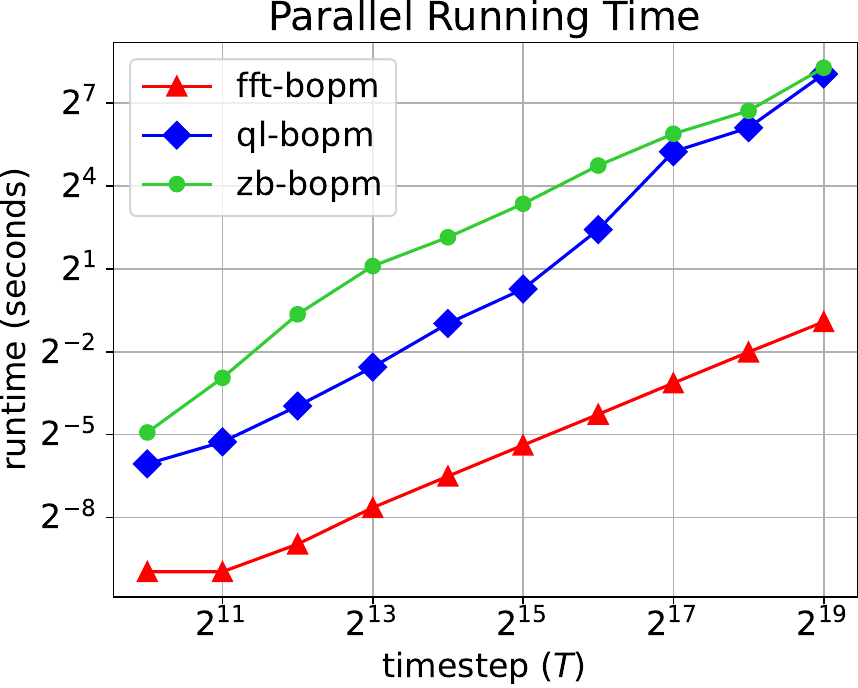}
        \caption{BOPM}
        \label{fig:bopm-runtime}
    \end{subfigure}
    \hfill
    \begin{subfigure}[t]{0.3\linewidth}
        \centering
        \includegraphics[width=\linewidth]{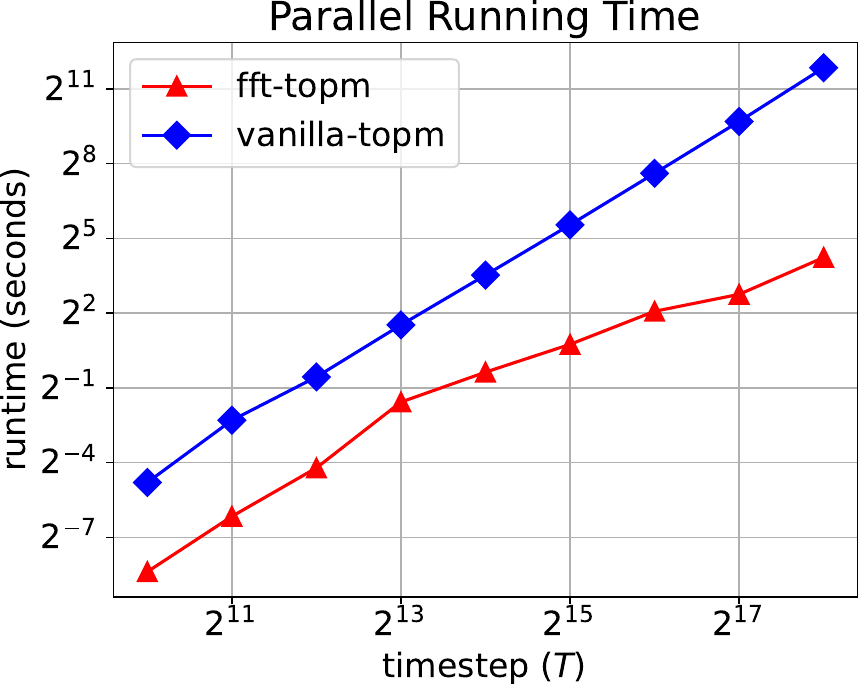}
        \caption{TOPM}
        \label{fig:topm-runtime}
    \end{subfigure}
    \hfill
    \begin{subfigure}[t]{0.3\linewidth}
        \centering
        \includegraphics[width=\linewidth]{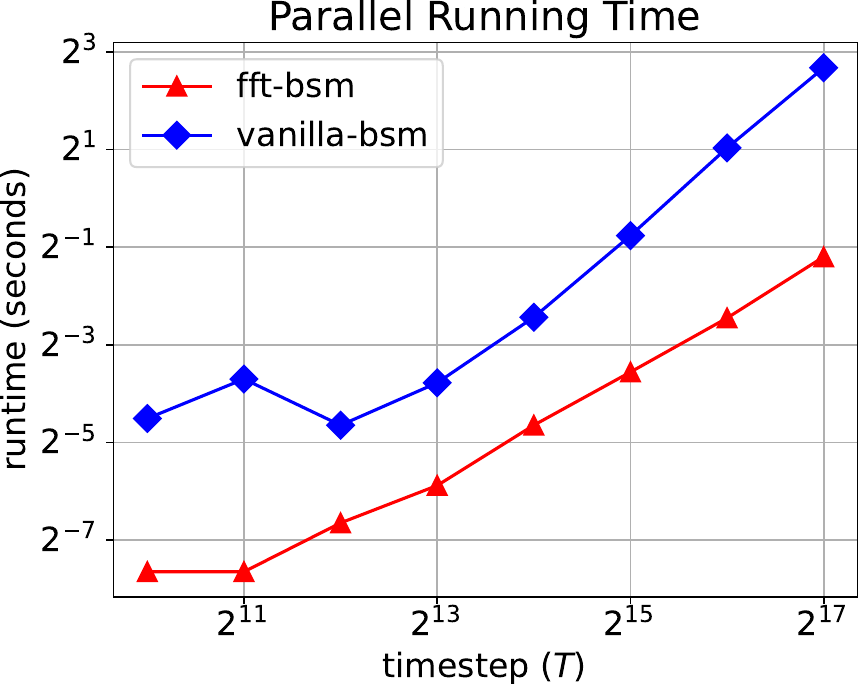}
        \caption{BSM}
        \label{fig:bsm-runtime}
    \end{subfigure}
 }    
 \vspace{-0.4cm}
    \caption{Running time comparisons of our algorithms with state-of-the-art parallel algorithms.}
    \label{fig:runtime-plot}
    \vspace{-0.1cm}
\end{figure*}

\begin{figure*}[ht!]
    \centering
\scalebox{1}[0.7]{
     \begin{subfigure}[b]{0.3\linewidth}
        \centering
        \includegraphics[width=\linewidth]{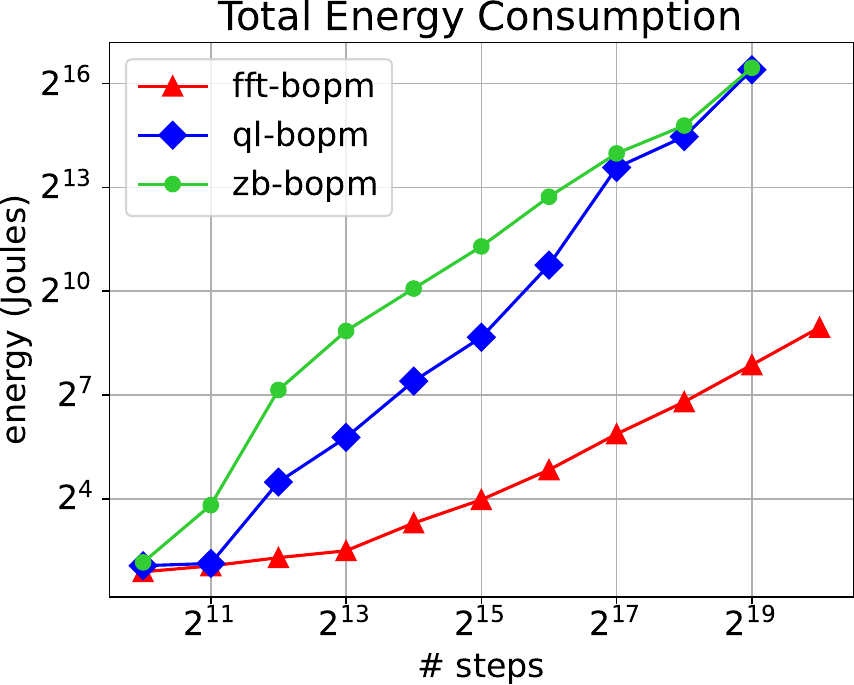}
        \caption{BOPM}
        \label{fig:bopm-energy-total}
     \end{subfigure}
     \hfill
     \begin{subfigure}[b]{0.3\linewidth}
        \centering
        \includegraphics[width=\linewidth]{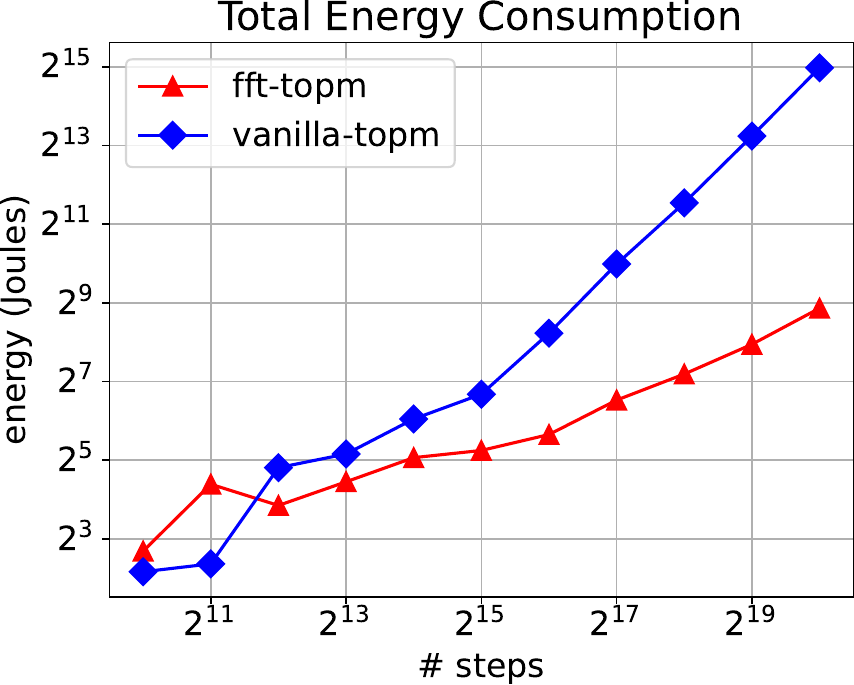}
        \caption{TOPM}
        \label{fig:topm-energy-total}
     \end{subfigure}
    \hfill
     \begin{subfigure}[b]{0.3\linewidth}
        \centering
        \includegraphics[width=\linewidth]{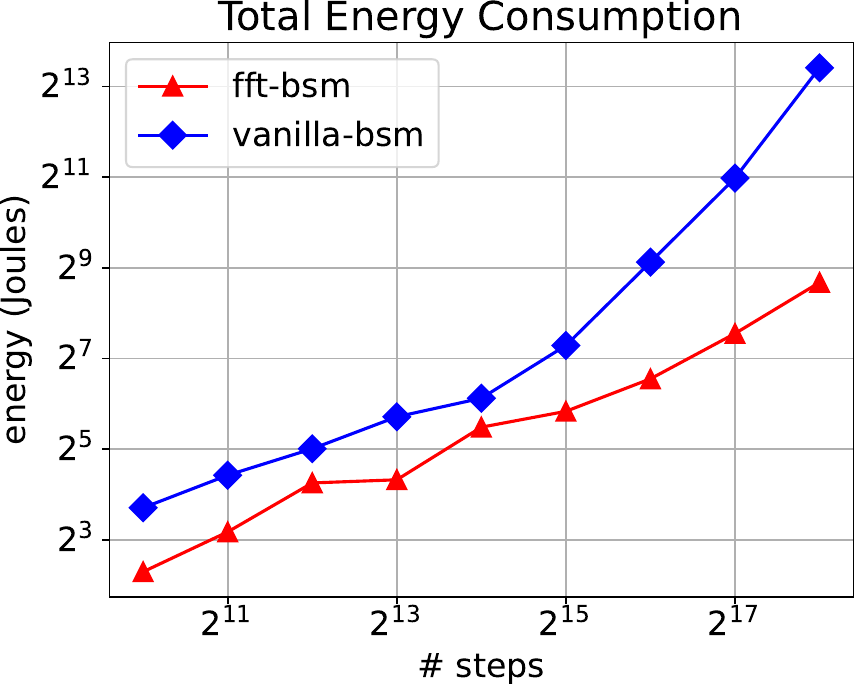}
        \caption{BSM}
        \label{fig:bsm-energy-total}
     \end{subfigure}
     }
\vspace{-0.4cm}     
     \caption{Comparison of energy consumption.}
     \label{fig:energy-plot}
    \vspace{-0.1cm}     
\end{figure*}

\begin{figure*}[ht!]
    \centering
\scalebox{1}[0.7]{    
     \begin{subfigure}[b]{0.3\linewidth}
        \centering
        \includegraphics[width=\linewidth]{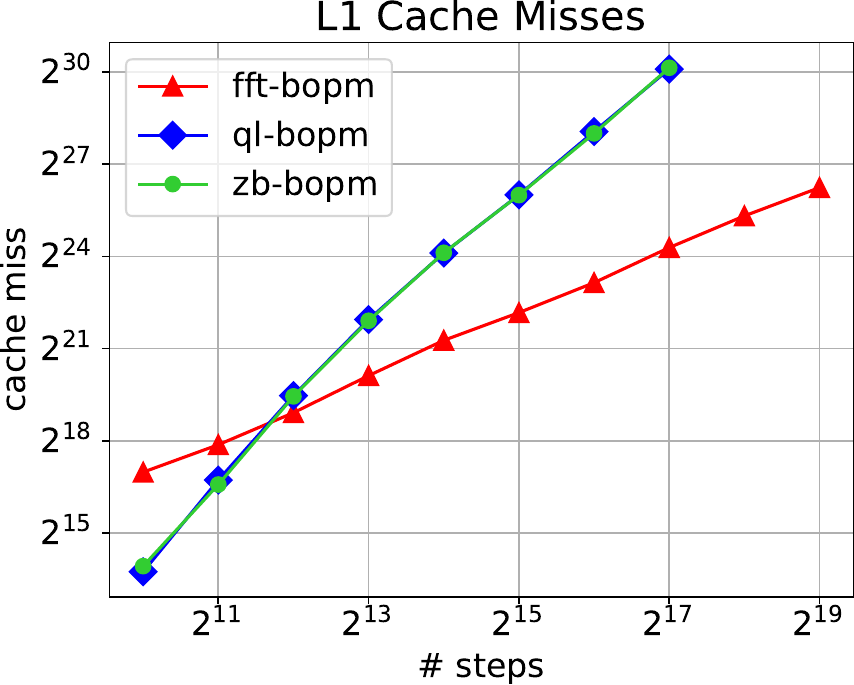}
        \caption{BOPM}
        \label{fig:bopm-L1-cache-miss}
     \end{subfigure}
     \hfill
      \begin{subfigure}[b]{0.3\linewidth}
        \centering
        \includegraphics[width=\linewidth]{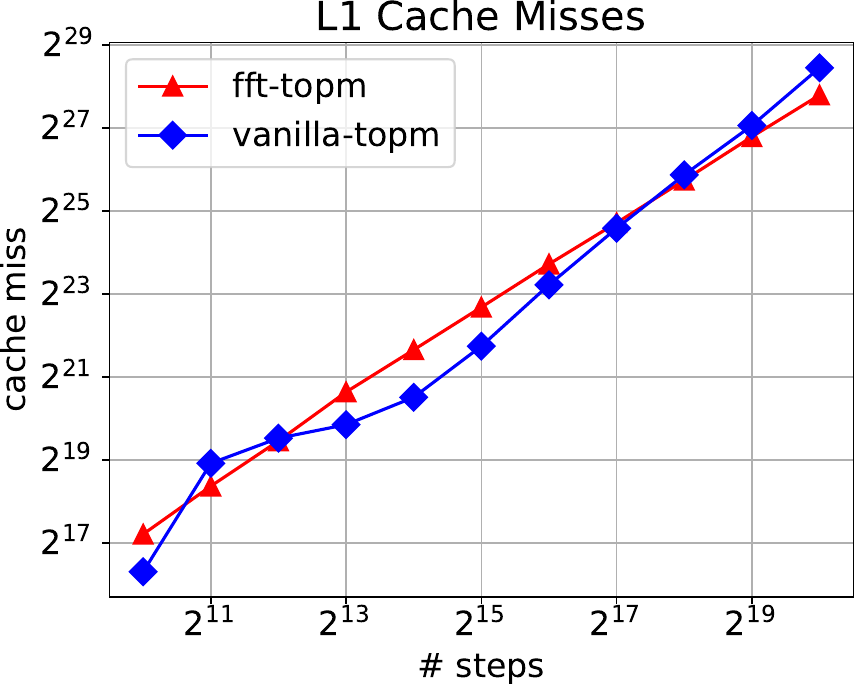}
        \caption{TOPM}
        \label{fig:topm-L1-cache-miss}
     \end{subfigure}
     \hfill
      \begin{subfigure}[b]{0.3\linewidth}
        \centering
        \includegraphics[width=\linewidth]{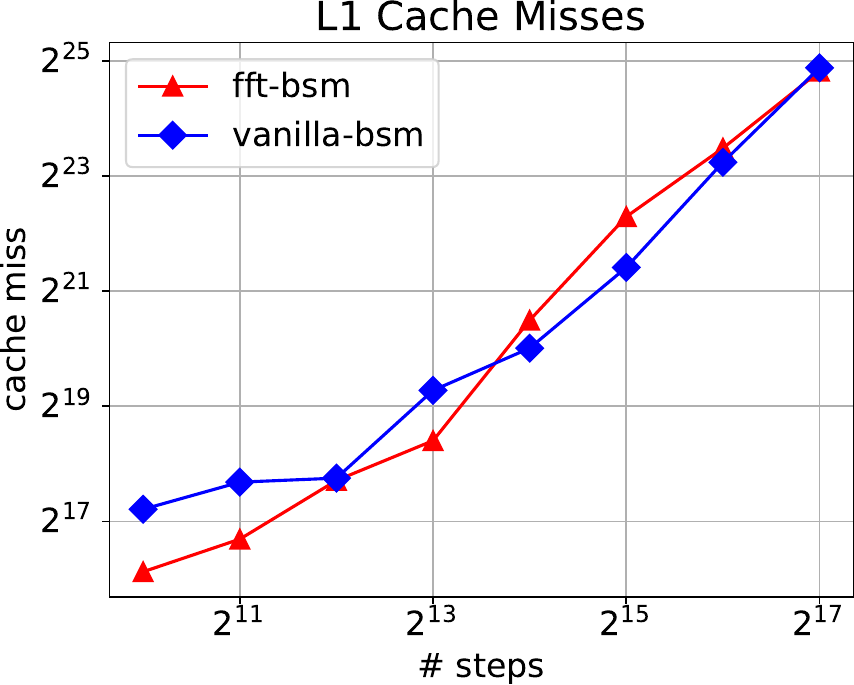}
        \caption{BSM}
        \label{fig:bsm-L1-cache-miss}
     \end{subfigure}
     }

    \centering
\scalebox{1}[0.7]{    
     \begin{subfigure}[b]{0.3\linewidth}
        \centering
        \includegraphics[width=\linewidth]{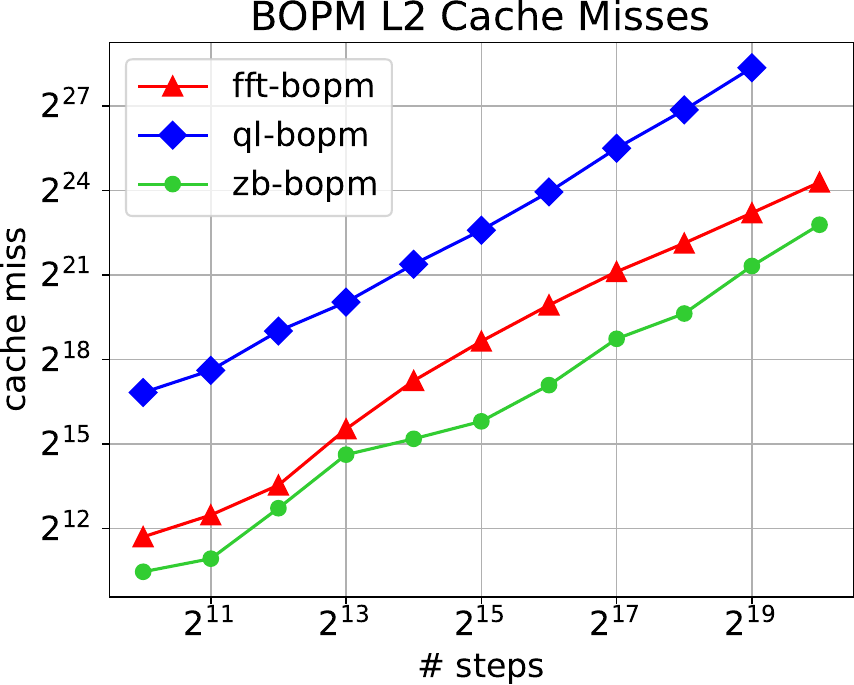}
        \caption{BOPM}
        \label{fig:bopm-L2-cache-miss}
     \end{subfigure}
     \hfill
      \begin{subfigure}[b]{0.3\linewidth}
        \centering
        \includegraphics[width=\linewidth]{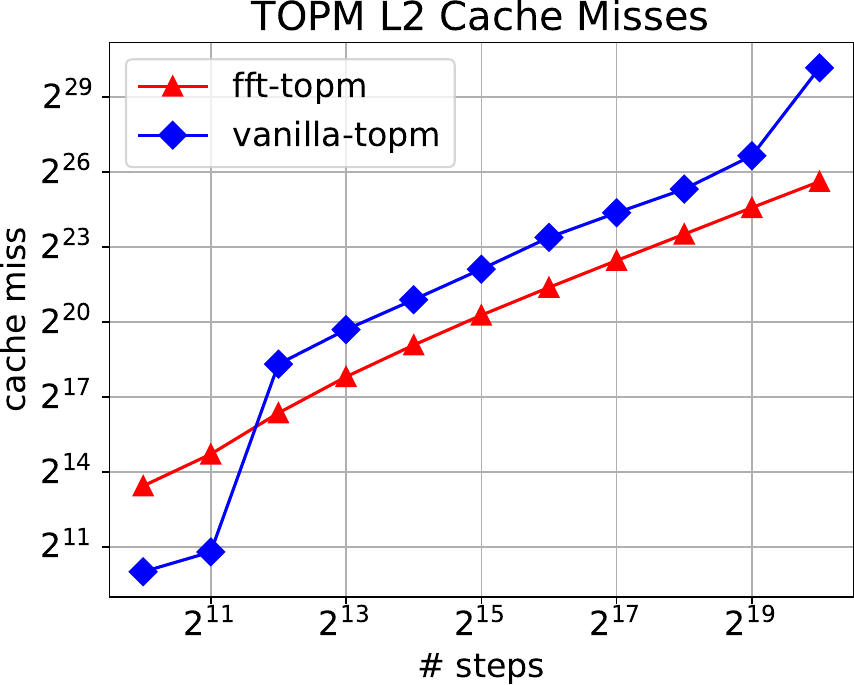}
        \caption{TOPM}
        \label{fig:topm-L2-cache-miss}
     \end{subfigure}
     \hfill
      \begin{subfigure}[b]{0.3\linewidth}
        \centering
        \includegraphics[width=\linewidth]{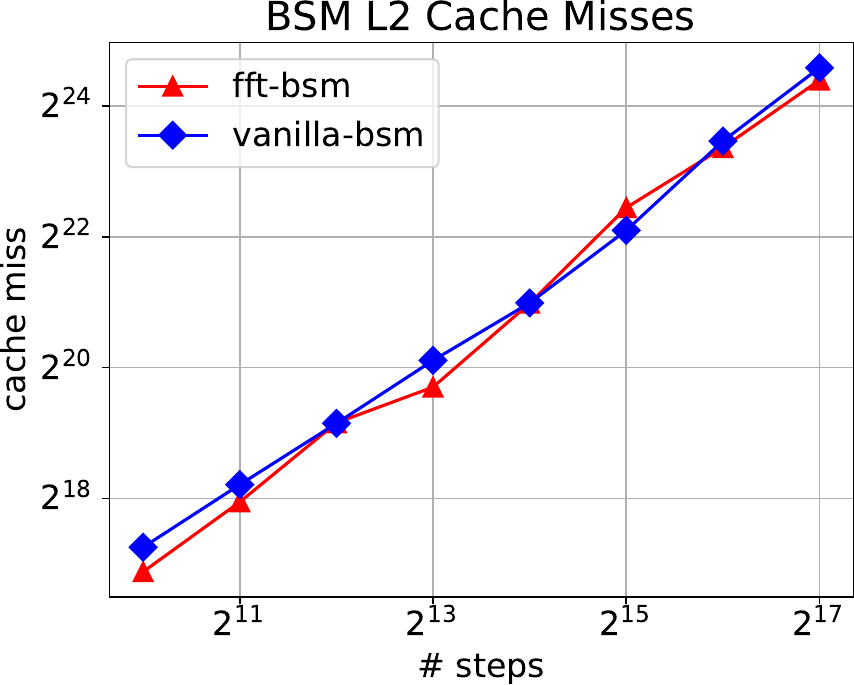}
        \caption{BSM}
        \label{fig:bsm-L2-cache-miss}
     \end{subfigure}
     }
\vspace{-0.4cm}     
     \caption{Comparison of L1 and L2 Cache misses.}
     \label{fig:cache-miss-plot}
     \vspace{-0.3cm}
\end{figure*}

\vspace{-0.4cm}
\section{Experimental Results}
\vspace{-0.1cm}
In this section, we present an experimental evaluation of our algorithms and compare them with the fastest existing solutions. Our experimental setup is shown in Table \ref{tab:experimental-setup}. The legends used in our plots are listed in Table \ref{tab:legends}, which are described in more detail in the next few paragraphs.

\begin{table}[t]
\caption{\small Experimental setup on a Stampede2 \cite{Stampede2} SKX node.}
\centering
\vspace{-0.3cm}
\scalebox{0.74}{
\begin{tabular}{l|l}
\hline
Processor & Intel Xeon Platinum 8160 (Skylake / SKX)\\
Cores & 24 cores per socket, 2 sockets (total: 48 cores)\\
Cache sizes & L1 32 KB / core, L2 1 MB / core, L3 33 MB / socket\\ 
Memory & 144GB /tmp partition on a 200GB SSD\\ \hline
Compiler & Intel C++ Compiler (ICC) v18.0.2\\
Compiler flags & \texttt{-O3 -xhost -ansi-alias -ipo -AVX512}\\
Parallelization & OpenMP 5.0\\ 
Thread affinity & \texttt{GOMP\_CPU\_AFFINITY} \\ \hline
\end{tabular}}
\label{tab:experimental-setup}
\vspace{-0.1cm}
\end{table}

\begin{table}[t]
  \caption{Legends used in plots and tables.}
  \vspace{-0.3cm}  
\scalebox{0.74}{
  \begin{tabular}{>{\centering\arraybackslash}m{1.1in}|>{\centering\arraybackslash}m{2.8in}} 
  \hline
  {\bf{Legend}} & {\bf{Meaning}} \\ \hline
  \texttt{fft-bopm}, \texttt{fft-topm}, \texttt{fft-bsm} & our FFT-based implementations for BOPM, TOPM, and BSM, respectively\\ \hline
  \texttt{ql-bopm}, \texttt{zb-bopm} & BOPM implementations from \parops{} based on \quantlib{} and Zubaer et al.'s work, respectively\\ \hline
  \texttt{vanilla-topm}, \texttt{vanilla-bsm} & our parallel looping implementations for TOPM and BSM, respectively \\ \hline
  \end{tabular}
  }
  \label{tab:legends}
\vspace{-0.3cm}
\end{table}

\para{Benchmarks.} For benchmarks, we use American call option pricing under BOPM and TOPM, and American put option under the BSM. For the BOPM call option benchmarks, our baselines are the option call probability calculations from \quantlib{} \cite{quantlib} and Zubair et al.'s parallel cache optimized model \cite{Zubair}. We use the optimized implementations of these two baselines available in \parops{} \cite{brunelle2022parallelizing}.
%
%
These implementations are the fastest existing implementations of BOPM call option pricing. For the TOPM call option and BSM put option, our FFT-based implementations are 
compared with our parallel looping-based vanilla implementations,
as we could not find any publicly available faster implementations.

We use the {\sl perf} (version: 3.10.0-1160.53.1.el7.x86\_64.debug) tool \cite{perftool} to analyze the system-wide energy consumption, and the {PAPI} (version: 5.6.0) library \cite{papilib} for cache miss counts for our implementations and benchmarks. 

\para{\parops.} In 2022, Brunelle et al. \cite{brunelle2022parallelizing} released an open-source framework that can leverage parallel, cache-optimized algorithms to compute a variety of binomial option types and enables a simple interface for developers. We used the latest version from \href{https://github.com/nyrret/par-bin-options}{Github}. Experimental evaluations by Brunelle et al. \cite{brunelle2022parallelizing} has shown that \parops{} achieves more than $139\times$ speedup over the \quantlib{} library when evaluating a European call option using 200,000 steps. Therefore, we chose the \parops{} tool to benchmark our implementations of our FFT-based algorithms. 
%
%
To have a valid comparison of running times, we use \parops{} for both \quantlib{} and Zubaer et al.'s \cite{Zubair} option probability calculation equations and report the running times comparing with our FFT-based implementations. We use the stencil-based cache-optimized version of Zubaer et al.'s algorithm from \parops{}.

\para{Parameter Values.} As Table \ref{tbl:work_span} shows, the only option pricing parameter that influences the performance bounds of the algorithms in our experiments is the number of time steps $T$. Therefore, we keep all other option pricing parameters fixed in all of our experiments. We use the following parameter values: $E = 252$,  $K = 130$, $S = 127.62$, $R = 0.00163$, $V = 0.2$, $Y = 0.0163$ (notations are in Table \ref{fig:notation}).

\hide{
\para{Legends.} $(1)$ \texttt{fft-bopm}, \texttt{fft-topm}, and \texttt{fft-bsm} are our FFT-based implementations for BOPM, TOPM, and BSM, respectively; $(2)$ \texttt{ql-bopm} and \texttt{zb-bopm} are BOPM implementations from \parops{} based on \quantlib{} and Zubaer et al.'s work \cite{Zubair}; and $(3)$ \texttt{vanilla-topm} and \texttt{vanilla-bsm} are our parallel looping implementations of option pricing under TOPM and BSM, respectively.
}

\hide{
\begin{table}[ht!]
  \vspace{-0.1cm}  
  \caption{Legends used in plots and tables.}
  \vspace{-0.3cm}  
\scalebox{0.85}{
  \begin{tabular}{>{\centering\arraybackslash}m{1in}|>{\centering\arraybackslash}m{2.6in}} 
  \hline
  {\bf{Legend}} & {\bf{Meaning}} \\ \hline
  \texttt{fft-bopm}, \texttt{fft-topm}, \texttt{fft-bsm} & our FFT-based implementations for BOPM, TOPM, and BSM, respectively\\ \hline
  \texttt{ql-bopm}, \texttt{zb-bopm} & BOPM implementations from \parops{} based on \quantlib{} and Zubaer et al.'s work, respectively\\ \hline
  \texttt{vanilla-topm}, \texttt{vanilla-bsm} & our parallel looping implementations for TOPM and BSM, respectively \\ \hline
  \end{tabular}
  }
  \label{tbl:legends}
  \vspace{-0.4cm}
\end{table}
}
\vspace{-0.55cm}
\subsection{Parallel Running Times}
\vspace{-0.05cm}

Figure \ref{fig:runtime-plot}$(a)$ shows the parallel running time comparison of American call option pricing calculations under BOPM. Our FFT-based algorithm uses a recursive divide-and-conquer approach. We have found empirically that a base case size of 8 steps yields the best running times. Experimental results show that our FFT-based algorithm can outperform \parops{} for any number of step sizes for both serial and parallel implementations. We achieve more than $16\times$ speedup for $T \approx 1000$ and more than $500\times$ speedup for $T \approx 0.5~\textrm{million}$ w.r.t. \parops{} implementations.

Our TOPM algorithm runs more than $2.5\times$ faster for $T \approx 2000$ and more than $390\times$ faster for $T \approx 2.1~ \textrm{million}$ w.r.t. the parallel vanilla code. Figure \ref{fig:runtime-plot}$(b)$ shows the comparisons.

Figure \ref{fig:runtime-plot}$(c)$ shows the parallel running time comparisons of the American put option pricing computations under BSM. Our FFT-based implementation is compared with the looping-based vanilla implementation. Our algorithm achieves more than $8\times$ speedup for $T \approx 1000$ and more than $14\times$ speedup for $T \approx 0.13~\textrm{million}$ w.r.t. the vanilla implementation. 

\vspace{-0.55cm}
\subsection{Energy Consumption}
\label{sec:energy}
\vspace{-0.05cm}

Figure \ref{fig:energy-plot} shows the comparison of system-wide energy consumption while running our FFT-based implementation and respective benchmarks. We collected the energy consumption estimate of the entire package ({\sl pkg}) and the main memory ({\sl RAM}) through the RAPL (Running Average Power Limit) interface of model-specific registers (MSR) using the {\sl perf} \cite{perftool} profiling tool. Our FFT-based BOPM, TOPM, and BSM implementations consume 99\%, 99\%, and 96\% less energy, respectively, compared to their benchmarks for large $T$ values used in our experiments. For $T \approx 4000$, the energy savings are 80\%, 50\%, and 40\%, respectively. Energy plots for {\sl pkg} and {\sl RAM} separately are included as supplementary material.

\vspace{-0.3cm}
\subsection{Cache Misses}
\vspace{-0.05cm}

Figure \ref{fig:cache-miss-plot} shows L1 cache-miss (= L2 cache access) and L2 cache-miss results, respectively, for all implementations. Our FFT-based implementation incurs far fewer L1 cache misses than both \parops{} implementations for BOPM. However, while we incur far fewer L2 cache-misses than \texttt{ql-bopm}, the other one (\texttt{zb-bopm}) incurs fewer L2 misses than ours. In case of TOPM, while our FFT-based implementation incurs far fewer L2 misses than our parallel looping implementation (\texttt{vanilla-topm}), the trend is not clear for L1 misses. For BSM, neither L1 nor L2 misses seem to have a clear winner.

\vspace{-0.3cm}
\subsection{Scalability}
\vspace{-0.05cm}

As Table \ref{tbl:work_span} shows, our algorithms reduce the span $\m{T}_\infty$ of solving the option pricing problems that we consider by a factor of $\Om{\log{T}}$, but they reduce the work $\m{T}_1$ by a substantially larger factor of $\Th{T/\log^2{T}}$. While the reduction in work leads to a significant reduction in energy consumption (see Section \ref{sec:energy}), it also leads to a low parallelism of $\m{T}_1 / \m{T}_\infty = \Th{\log^2{T}}$. But it is still easy to see from the complexities given in Table \ref{tbl:work_span} that the parallel running time $\m{T}_p$ of our algorithms will be asymptotically lower than that of the corresponding fastest existing parallel algorithms for every value of $p$ (stated in Proposition \ref{prop:Tp}). 

\begin{table}[h]
\vspace{-0.3cm}
    \caption{Parallel run times (in ms) for $T = 2^{15}$ as $p$ varies.}  
  \centering
  \vspace{-0.3cm}
  \scalebox{0.8}{
  \begin{tabular}{l|rrrrrrr}
  \hline
                 & $p=1$ & $p=2$ & $p=4$ & $p=8$ & $p=16$ & $p=32$ & $p=48$ \\ \hline
    \texttt{fft-bopm} & $32$ & $28$ & $24$ & $26$ & $29$ & $33$ & $38$ \\ 
    \texttt{ql-bopm} & $26552$ & $12498$ & $6785$ & $3530$ & $1950$ & $1324$ & $1191$ \\ \hline
  \end{tabular}
  }

\label{fig:scaling-BOPM}
\vspace{-0.3cm}

\end{table}

Table \ref{fig:scaling-BOPM} above shows how the parallel running times of our FFT-based BOPM implementation (\texttt{fft-bopm}) and that of the \quantlib{}-based implementation from \parops{} (\texttt{ql-bopm}) vary with $p$ as $T$ is kept fixed at $2^{15}$. We see that although the parallel running time of \texttt{fft-bopm} stops decreasing somewhere between $p = 4$ and $p = 8$, it remains significantly faster than \texttt{ql-bopm} even when $p = 48$. Our algorithm scales better as $T$ increases, e.g., for $T = 2^{19}$, it scales to a $p \in [8, 16)$, and compared to a subsecond running time of \texttt{fft-bopm} for $p = 1$, \texttt{ql-bopm} takes $\approx 2$ hours to run. 

\hide{
\begin{figure}[ht!]
 \scalebox{1}[0.65]{
    \centering
    \includegraphics[width=0.8\linewidth]{PPoPP24/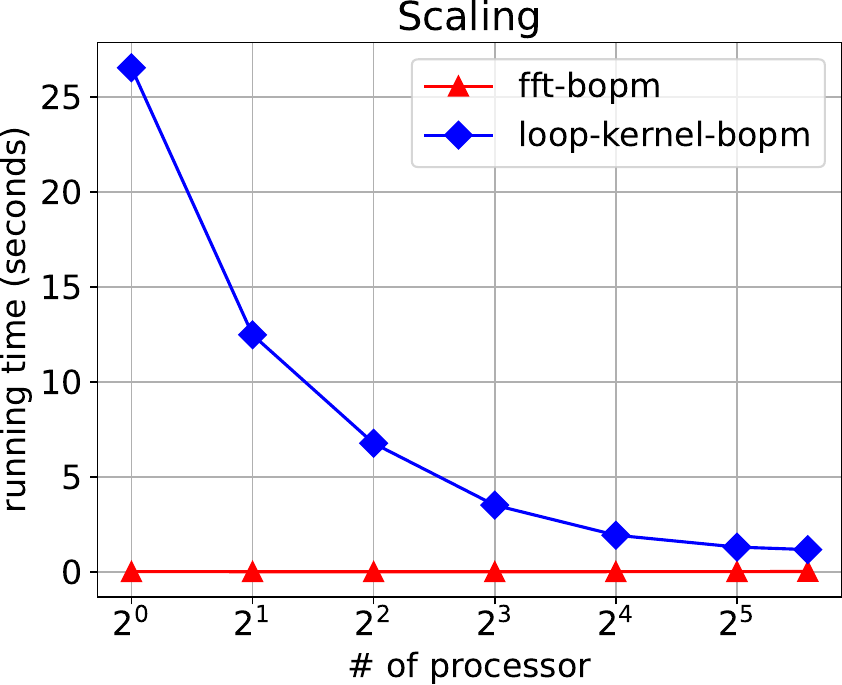}
   }
 \vspace{-0.2cm}
    \caption{Parallel running times of BOPM implementations for $T = 2^{15}$ as $p$ varies.}
    \label{fig:scaling-BOPM}
\end{figure}
}

\hide{
\begin{table}[h]
  \centering
  \begin{tabular}{c|rrrrr}
                 & $p=1$ & $p=2$ & $p=4$ & $p=8$ & $p=16$ \\ \hline
    $T = 2^{15}$ & $32$ & $28$ & $24$ & $26$ & $29$ \\ 
    $T = 2^{19}$ & $708$ & $611$ & $570$ & $511$ & $525$ \\ \hline
  \end{tabular}
  \caption{Parallel running time (in ms) of our BOPM algorithm as $p$ varies.}
\end{table}
}

\hide{
\begin{table}[h]
  \centering
  \begin{tabular}{c|rrrrrrr}
                 & $p=1$ & $p=2$ & $p=4$ & $p=8$ & $p=16$ & $p=32$ & $p=48$ \\ \hline
    $T = 2^{15}$ & $32$ & $28$ & $24$ & $26$ & $29$ & $33$ & $38$ \\ 
    $T = 2^{19}$ & $708$ & $611$ & $570$ & $511$ & $525$ & $741$ & $872$ \\ \hline
  \end{tabular}
\end{table}
}

\vspace{-.2cm}
    \section{Conclusion and Future Work}
\vspace{-.05cm}

We have designed fast American option pricing algorithms under the binomial, trinomial, and the Black-Scholes-Merton models. We solve a type of nonlinear stencil problem that is of independent interest with potential applications beyond quantitative finance.

Future work may explore other models for American option pricing, such as the time dependent volatility model, stochastic volatility model, and the jump diffusion model. European and Asian options, Lookback options, Knock-out Barrier options, and Bermudan options are also of interest. 


\hide{
It is also good to explore whether we can speed up the stencil problem formulated by the Finite Element methods for Black Scholes Model.
}

\para{Supplementary Material.} Includes our FFT-based implementations and details on our TOPM results (Section \ref{sec:TOPM}).

\hide{
 \section{European Call Option under the Binomial Option Pricing Model}
\label{sec:EOP-BOPM}
All results in this section are implied by very recent work by Ahmad et al. \cite{ahmad2021fast,ahmad2022brief} on fast linear stencil computations. We are adding this section for completeness since these results have not been reported in the literature before.

Section \ref{ssec:BOPM} explains that for European options under the Binomial Option Pricing Model, the value of the node at spatial coordinate $j$ and time step $t - 1$ is computed as follows: $X_{t - 1, j} = s_0 \cdot X_{t, j} + s_1 \cdot X_{t, j+1}$. This is a 2-point linear stencil computation on a 1D spatial grid. The goal is to compute $X_{0, 0}$. 

The approach used by Ahmad et al. \cite{ahmad2021fast} for fast stencil computations on periodic grids can be used for solving the stencil problem above in $\Oh{T \log{T}}$ time. It involves applying forward FFT to both the input grid $X[T, 0..T]$ and the stencil grid (created from $s_0$ and $s_1$), performing repeated squaring of the stencil grid in the frequency domain for $\log_{2}{T}$ times, taking point-wise products of that repeatedly squared stencil grid with the input grid in the frequency domain, and finally applying inverse FFT on that product grid to obtain $X[0, 0..T]$ and thus $X[0, 0]$.

The advantage of the FFT-based algorithm described above is that, unlike existing FFT-based algorithms for European options, one does not need to know a closed-form expression for the characteristic function of the log price and it is also not restricted to problems with infinite domains. 

\hide{
Recall that we have $T$ steps, expiration time in $E$ days, strike price $K$,  risk-free rate $R$, volatility $V$, and
dividend yield $Y$. We also have a spot price $S$. The spot price is the current market price which can be considered as the reference price while the parties agree to a certain strike price.

Denote the following:
\begin{align*}
    \Delta t &= \frac{E}{365 \cdot T}\\
    m &= e^{- R \cdot \Delta t }\\
    p &= \frac{ e^{( R - Y ) \cdot \Delta t } - d }{u - d}\\
    \mu &= - p T\\
    \sigma &= p(1 - p) T
\end{align*}
}

A more recent work by Ahmad et al. \cite{ahmad2022brief} implies that the following sum gives a good approximation of $X[0, 0]$ in $\Oh{\sqrt{T}}$ time.
$$\frac{m^T}{\sqrt{2 \pi \sigma}}\sum_{x =  \max{\{\mu - 6\sqrt{\sigma}, 0\}}}^{x =  \min{\{\mu + 6\sqrt{\sigma}, T\}}   - 1}{ \left( \max{\left\{ S \cdot u^{2x - T} - K, 0 \right\}} \cdot e^{- \frac{(x + \mu)^2}{2 \sigma}}\right)  }$$

A better bound can be approximated in $O(T)$ time by extending the summation bound.
$$\frac{m^T}{\sqrt{2 \pi \sigma}}\sum_{x =  0}^{x =  T - 1}{ \left( \max{\left\{ S \cdot u^{2x - T} - K, 0 \right\}} \cdot e^{- \frac{(x + \mu)^2}{2 \sigma}}\right)  }$$

We can get rid of the $\max$ operator by changing the summation bound as follows.
$$\frac{m^T}{\sqrt{2 \pi \sigma}}\sum_{x =  \left\lceil { \frac{T}{2} + \frac{1}{2 \ln u} \ln{\left(\frac{K}{S}\right)} } \right\rceil  }^{x = T - 1}{ \left( \left( S \cdot u^{2x-T} - K \right) \cdot e^{-\frac{(x + \mu)^2}{2 \sigma}} \right)  }$$

We can approximate the sum above using the following integral:
$$\frac{m^T}{\sqrt{2 \pi \sigma}}\int_{x =  \left\lceil { \frac{T}{2} + \frac{1}{2 \ln u} \ln{\left(\frac{K}{S}\right)} } \right\rceil  }^{x = T}{ \left( \left( S \cdot u^{2x-T} - K \right) \cdot e^{-\frac{(x + \mu)^2}{2 \sigma}} \right)  }$$

The integral solves the following:
\begin{align*}
& \frac{m^T}{2}\biggl[ S \cdot u^{2( \ln u \sigma - \mu ) - T ) } \cdot \textrm{erf}{\left( \frac{x + \mu}{\sqrt{2 \sigma}} - \ln{u}\sqrt{2\sigma}\right)} - K \cdot \textrm{erf}{\left( \frac{x + \mu}{\sqrt{2 \sigma}} \right)} \biggr]_{x = \left\lceil {\frac{T}{2} + \frac{1}{2\ln{u}} \ln{\left(\frac{K}{S}\right)} } \right\rceil}^{x = T}
\end{align*}

\hide{
Let's simplify (\ref{epr:sum-in-logn}):

\begin{align*}
&\frac{m^T}{\sqrt{2 \pi \sigma}}\sum_{x =  0 }^{x = T - 1}{ \left( \max{\left\{ S \cdot u^{2x - T} - K, 0 \right\}} \cdot e^{- \frac{(x + \mu)^2}{2 \sigma}}\right)  }\\
=&\frac{m^T}{\sqrt{2 \pi \sigma}}\sum_{x =  \left\lceil { \frac{T}{2} + \frac{1}{2 \ln u} \ln{\left(\frac{K}{S}\right)} } \right\rceil  }^{x = T - 1}{ \left( \left( S \cdot u^{2x-T} - K \right) \cdot e^{-\frac{(x + \mu)^2}{2 \sigma}} \right)  }\\
\approx & \frac{m^T}{2}\biggl[ S \cdot u^{2( \ln u \sigma - \mu ) - T ) } \cdot \textrm{erf}{\left( \frac{x + \mu}{\sqrt{2 \sigma}} - \ln{u}\sqrt{2\sigma}\right)} \\
& \quad - K \cdot \textrm{erf}{\left( \frac{x + \mu}{\sqrt{2 \sigma}} \right)} \biggr]_{x = \left\lceil {\frac{T}{2} + \frac{1}{2\ln{u}} \ln{\left(\frac{K}{S}\right)} } \right\rceil}^{x = T} \quad \mbox{(use integral)}
\end{align*}
}

Since the error function \textbf{erf} can be approximated well in $\Oh{1}$ time, we can evaluate the expression above in $\Oh{1}$ time as well.

Hence, one can approximate $X[0,0]$ in $\Oh{1}$ time.

\begin{figure}
    \centering
    \includegraphics[width=.5\textwidth]{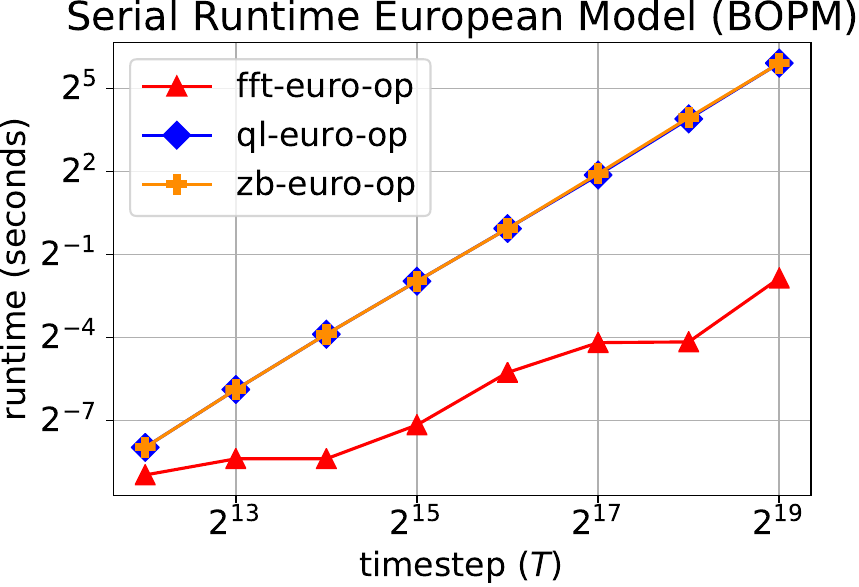}
    \caption{Caption}
    \label{fig:runtime_european}
\end{figure}

\section{Experimental Results}
\subsection{European Call Option}
Figure \ref{fig:runtime_european} shows the parallel running time comparison of European call option pricing calculations. Our FFT-based implementation outperforms the \parops{} binomial tree traversal framework by a significant margin. Our parallel FFT-based implementation achieves more than 600x speedup for 262144 steps compared to the best running times of parallel vanilla and stencil-based implementations of \parops{}. Moreover, our $\Oh{\sqrt{T}}$ time Gaussian approximation-based implementation completes the calculation very close to zero seconds.
}

\eject

\vspace{0.2in}
\noindent
\textbf{Acknowledgements} We want to thank Julian Shun, Yuanhao Wei, Shangdi Yu, Pranali Vani, and Sabiyyah Ali for valuable discussions and a careful review of the initial draft of the paper which helped enhance its quality and ensure the correctness.




\bibliography{main}

\appendix

\pagestyle{plain}


\section{American Call Option under Trinomial Option Pricing Model} \label{sec:TOPM2}

\subsection{Trinomial Option Pricing Model (TOPM)}

\begin{figure}[t]
    \centering
    \includegraphics[width=0.3\textwidth]{images/trinomial.pdf}
    \caption{A trinomial tree}
    \label{fig:tritreeapp}
\end{figure}

TOPM shares many properties with BOPM, which is natural as a trinomial tree (see Figure \ref{fig:tritreeapp} has similar structure to that of a binomial tree. The node-value remains $X_{node} = S \times u^{N_u-N_d}$ since the number of "remain the same" moves does not factor into the price. However, the value of $u$ and $d$ is subtly different, as $u = e^{V\sqrt{2 \Delta t}}$, and $d = 1/u$. We represent the exercise value very similarly, $\text{max}(X_{node}-K,0)$.

For the \textbf{trinomial value}, we have now three dependencies, $X_{t,j},X_{t,j+1},$ and $X_{t,j+2}$. Since we have three possible changes in value, we have three probabilities corresponding to moving up $p_u$, moving down $p_d$, and remaining the same $p_{\probm}$. Typically, the literature refers to the probability of remaining the same as $p_m$, but as since use $m$ to refer to the quantity $e^{-Rt}$, we use $p_{\probm}$ to avoid confusion.  This lets us write the trinomial value as $e^{-R\Delta t} (p_d X_{t,j} + p_{\probm} X_{t,j+1} + p_u X_{t,j+2})$, where $p_u = \left( \frac{e^{(R-Y)\Delta t/2}-\sqrt{d}}{\sqrt{u} - \sqrt{d}} \right) ^2$,\\ $p_d = \left( \frac{ \sqrt{u} -e^{(R-Y)\Delta t/2}}{\sqrt{u} - \sqrt{d}} \right) ^2$, and $p_{\probm} = 1 - p_u - p_d$. The probabilities are alternate forms of those given in \cite{hull2003options}.

Let $m = e^{-R\Delta t}$, $s_0 = m p_u, s_1 = m p_{\probm}, s_2 = m p_d$.
We will reuse the notion of red and green cells to distinguish those with larger binomial and excercise values respectively. 

\vspace{-0.2in}
\begin{equation*}
\begin{aligned}
&G^{red}_{i, j} = 
    \begin{cases}
       0, &\text{if }i=T\\
       s_0 G_{i+1,j}+ s_1 G_{i+1,j+1} + s_2 G_{i+1,j+2}, &0 \le i < T
   \end{cases}\\
&G^{green}_{i, j} = S \cdot u^{j-i}-K
\end{aligned}
\end{equation*}

\subsection{Properties of the Red-Green Divider}\label{ssec:TOPM-boundary}
Similarly to the binomial tree, we assume that the trinomial tree will be embedded in a $(2T+2) \times (T+1)$ grid $G$ with the root at the bottom-left corner $G[0,0]$ and the leaves in the top row $G[T,0...T]$. The children of node $G[i,j]$ are at $G[i+1, j], G[i+1, j+1], G[i+1, j+2]$. $G[i+1,j]$ has a price change factor of $d$, $G[i+1,j+1]$ has the same price and $G[i+1, j+2]$ has a price change factor of $u$. The tree only takes up the upper left triangle, such that the diagonal has a slope of 1/2. 

Lemma \ref{lma:rightOfGreenIsGreenTri} shows that within the upper-left triangular area of $G$ if a cell is green then the cell to its right must also be green.

\begin{lemma}
    \label{lma:rightOfGreenIsGreenTri}
    For $i \in [0, T]$ and $j \in [0, 2i - 1]$, $\left( G_{i,j} = G^{green}_{i,j} \right) \implies \left( G_{i,j+1} = G^{green}_{i,j+1} \right)$.
\end{lemma}
\begin{proof}

We will reuse the notions $\delta _{i,j} = G_{i,j}^{red} - G_{i,j}, \Delta _{i,j} = \delta _{i,j+1} - \delta_{i,j}, \Tilde{\delta}_{i,j} = G_{i,j} - G_{i,j}^{green}$ and $\Tilde{\Delta}_{i,j} = \Tilde{\delta}_{i,j+1} - \Tilde{\delta}_{i,j}$.
Therefore, it suffices again to show that $\Delta _{i,j} \leq 0$ for all $0 \leq i \leq T$ and $j < 2i$. This can again be show through induction.

Consider the base case as $i = T$. $\delta _{T,j} = -\left( Su^{j-T} - K \right)$, which in turn means that $\Delta _{T,j} = -S u^{j-T} (u-1) \leq 0$ since $u geq 1$. So, the claim holds for $i=T$.

Assuming the claim holds for some $i +1 \leq T$, we will show that it holds for $i$ ($\Delta_{i+1,j} \leq 0$ for $0 \leq j < i+1$). From this assumption, this implies the existence of a $j_{i+1}$ such that $G_{i,j} = G_{i,j}^{green}$ for $j > j_{i+1}$ and $G_{i,j} 
= G_{i,j}^{red}$ for $j \leq j_{i+1}$. This $j_{i+1}$ is defined as the largest $j$ such that $\delta_{i+1,j} > 0$, the last "red" cell in row $i+1$. Therefore:
\begin{itemize}
\item[$-$] for $j > j_{i+1}$, $\tilde{\Delta}_{i+1,j} = \tilde{\delta}_{i+1,j+1} - \tilde{\delta}_{i+1,j} 
     = 0 - 0 = 0$;
\item[$-$] for $j  = j_{i+1}$, $\tilde{\Delta}_{i+1,j} = \tilde{\delta}_{i+1,j+1} - \tilde{\delta}_{i+1,j} 
     = -\tilde{\delta}_{i+1,j} \leq 0$; and
\item[$-$] for $j < j_{i+1}$, $\tilde{\Delta}_{i+1,j} = \Delta_{i+1,j} \leq 0$.  
\end{itemize}

Now we examine:
\begin{align*}
    \Delta_{i,j} =& \delta_{i,j+1} - \delta_{i,j}\\
    =& s_0 G_{i+1,j+1} + s_1 G_{i+1,j+2} + s_2 G_{i+1,j+3} - G_{i,j+1}^{green} - s_0 G_{i+1,j} \\
    &- s_1 G_{i+1,j+1}
    - s_2 G_{i+1,j+2} + G_{i,j}^{green} \\
    =& s_0 (G_{i+1,j+1} - G_{i+1,j+1}^{green} - G_{i+1,j} + G_{i+1,j}^{green}) + s_1 (G_{i+1,j+2} \\
    &- G_{i+1,j+2}^{green} - G_{i+1,j+1} + G_{i+1,j+1}^{green}) + s_2 (G_{i+1,j+3} - G_{i+1,j+3}^{green} \\
    &- G_{i+1,j+2} + G_{i+1,j+2}^{green}) + s_0 (G_{i+1,j+1}^{green} - G_{i+1,j}^{green}) + s_1 (G_{i+1,j+2}^{green} \\
    &-G_{i+1,j+1}^{green}) + s_2 (G_{i+1,j+3}^{green} - G_{i+1,j+2}^{green}) + G_{i,j}^{green} - G_{i,j+1}^{green} \\
    =& s_0 \tilde{\delta}_{i+1,j} + s_1 \tilde{\delta}_{i+1,j+1} + s_2 \tilde{\delta}_{i+1,j+2} + s_0 (G_{i+1,j+1}^{green} - G_{i+1,j}^{green}) \\
    &+ s_1 (G_{i+1,j+2}^{green}-G_{i+1,j+1}^{green}) + s_2 (G_{i+1,j+3}^{green} - G_{i+1,j+2}^{green}) + G_{i,j}^{green} \\
    &- G_{i,j+1}^{green}
\end{align*}

From the inductive assumption, we can assume that $s_0 \tilde{\delta}_{i+1,j} + s_1 \tilde{\delta}_{i+1,j+1} + s_2 \tilde{\delta}_{i+1,j+2} \leq 0$. Therefore, to show that $\Delta_{i,j} \leq 0$, it suffices to show that the remaining terms are less than or equal to 0. A useful note is that since diagonals indicate no change with respect to the exercise value, $G_{i+1,j+2}^{green} - G_{i+1,j+1}^{green} = G_{i,j+1}^{green} - G_{i,j}^{green}$, which lets us simplify the rest of this expression. 

\vspace{-0.15in}
\begin{align*}
    0 \geq & s_0 (G_{i+1,j+1}^{green} - G_{i+1,j}^{green}) + (s_1-1) (G_{i+1,j+2}^{green} - G_{i+1,j+1}^{green})
     \\
    &\quad + s_2 (G_{i+1,j+3}^{green} - G_{i+1,j+2}^{green}) \\
    \geq & s_0 (S u^{j+1-i-1} - K - S u^{j-i-1} + K) + (s_1-1) (S u^{j+2-i-1} - K \\
    &\quad- S u^{j+1-i-1} + K) \\
    &\quad + s_2 (S u^{j+3-i-1} - K - S u^{j+2-i-1} + K) \\
    0 \geq & Su^{j-i-1} (s_0 (u-1) + (s_1-1)(u^2-u) + s_2 (u^3 - u^2) )    
\end{align*}
From here, we can start dividing out common terms so long as they are greater than 0.
\begin{align*}
    0 \geq & s_0 (u-1) + (s_1-1)(u^2-u) + s_2 (u^3 - u^2) \\
    0 \geq &m p_d (u-1) + (m p_{\probm} -1) (u^2-u) + m p_u (u^3 - u^2) \\
    0 \geq &mp_d (u-1) + (mp_{\probm} - 1) u(u-1) + mp_u u^2 (u-1) \\
    0 \geq &mp_d + (m-m p_d - m p_u -1) u + mp_u u^2 \\
    0 \geq & mp_d(1-u) + mp_u(u^2-u) + (m-1)u \\
    0 \geq & m(u-1) (up_u - p_d) + (m-1)u
\end{align*}

We can transform the expression $up_u - p_d$. We will let $e^{(R-Y)\Delta t} = n$ to simplify.

\vspace{-0.15in}
\begin{align*}
    u p_u - p_d =& u\left(\frac{\sqrt{n}-\sqrt{d}}{\sqrt{u}-\sqrt{d}} \right)^2 - \left(\frac{\sqrt{u}-\sqrt{n}}{\sqrt{u}-\sqrt{d}} \right)^2 \\
    =& \frac{1}{(\sqrt{u}-\sqrt{d})^2} \left((\sqrt{nu} - \sqrt{du})^2 - (\sqrt{u}-\sqrt{n})^2  \right) \\
    =& \frac{1}{(\sqrt{u}-\sqrt{d})^2} \left( \sqrt{nu} - 1 - \sqrt{u} + \sqrt{n} \right)
    \\& \quad \cdot
    \left(  \sqrt{nu} - 1 + \sqrt{u} - \sqrt{n} \right) \\
    =& \frac{1}{(\sqrt{u}-\sqrt{d})^2} \left( (\sqrt{n} - 1)(\sqrt{u} + 1)\right)\left(  (\sqrt{n} +1)(\sqrt{u}-1) \right) \\
    =& \frac{(n-1)(u - 1)}{(\sqrt{u}-\sqrt{d})^2}
\end{align*}

We can then substitute this back into the previous inequality: $0 \geq (n-1)m(u-1)^2(\sqrt{u}-\sqrt{d})^{-2} + um - u$.
Since $(\sqrt{u}-\sqrt{d})^{-2} = (u+d-2)^{-1} \geq (u-1)^{-1}$, we can make this substitution and if the resulting expression is less than or equal to 0, clearly the current expression as is will be less than or equal to 0 as well.
\begin{align*}
    0 \geq& m(u-1)(n-1) + um - u \\
    0 \geq& m(un -n -u + 1) +um - u \\
    0 \geq &mun - mn - mu + m + um - u \\
    u \geq &mun - mn + m \\ 
    u -1 \geq & (u-1) e^{-Y \Delta t} + e^{-R \Delta t} -1
\end{align*}

Since $e^{-R \Delta t} \leq 1$, we get:
$u-1 \geq (u-1) e^{-Y \Delta t} \geq  (u-1) e^{-Y \Delta t} + e^{-R \Delta t} -1$
Which clearly holds since $e^{-Y \Delta t} \leq 1$. 

By induction, $\Delta _{i,j} \leq 0$ for all $0 \leq i \leq T$, $0 \leq j \leq 2i-1$, we prove Lemma \ref{lma:rightOfGreenIsGreenTri}.
\end{proof}

Lemma \ref{lma:bottomOfGreenIsGreenTri} shows that within the upper-left triangular area of $G$ if a cell is green then the cell below it must also be green.

\begin{lemma}
    \label{lma:bottomOfGreenIsGreenTri}
    For $i \in [0, T - 1]$ and $j \in [0, 2i]$, $\left( G_{i+1,j} = G^{green}_{i+1,j} \right)$ $ \implies \left( G_{i,j} = G^{green}_{i,j} \right)$.
\end{lemma}
\begin{proof}
 
Assume that this is false. This would imply that for some i,j $\left( G_{i,j} = G^{red}_{i,j} \right)$ when $\left( G_{i+1,j} = G^{green}_{i+1,j} \right)$. So this would imply:

$Su^{j-i} - K \leq s_0 G_{i+1,j} + s_1 G_{i+1,j+1} + s_2 G_{i+1,j+2}$

From Lemma \ref{lma:rightOfGreenIsGreenTri}, we know that all the $G_{i+1,k} = G_{i+1,k}^{green}$ for all $j <= k <= 2i+1$. Therefore, we can expand this to:
\begin{align*}
    Su^{j-i} - K &\leq s_0 (S u^{j-i-1} -K) + s_1 (S u^{j-i}-K) \\
    &\quad+ s_2 (S u^{j-i+1}-K)\\
    Su^{j-i} - K &\leq Su^{j-i-1}(s_0 + s_1u + s_2u^2)
    \\ &\quad
    - K(s_0+s_1+s_2)\\
    K(s_0+s_1+s_2 -1) &\leq  Su^{j-i-1}(s_0 + s_1u - u + s_2u^2)\\
    Km(p_d + p_{\probm} + p_u - 1/m) &\leq Su^{j-i-1}(s_0 + (s_1 -1)u + s_2u^2)
\end{align*}
Note that since $p_d+p_{\probm}+p_u = 1$, we know that the expression on the left side is strictly greater than 0, allowing us to make the following substitution:
\begin{align*}
    0 & < Su^{j-i-1}(s_0 + (s_1 -1)u + s_2u^2) \\
    0  &  < s_0 + (s_1 - 1) u + s_2 u^2
\end{align*}
Since $u>1$, so multiplying both sides by $u-1$ gives us this expression from Lemma \ref{lma:rightOfGreenIsGreenTri}.
$$0 < s_0(u-1) + (s_1-1)(u^2 - u) + s_2 (u^3-u^2)$$
In Lemma \ref{lma:rightOfGreenIsGreenTri}, we showed this expression to be less than or equal to 0, which is a contradiction. Therefore, Lemma \ref{lma:bottomOfGreenIsGreenTri} holds.
\end{proof}

\begin{lemma} \label{lma:VNonDecAtRowTri}
     $G_{i, j-1} \le G_{i, j}$ for any $0 \le i \le T-1$ and $1\le j \le 2i$
\end{lemma}
\begin{proof}
    This can be done exactly as is done in Lemma 2.5, by induction. We have the same base case, when $i=T$ we have $G_{T,j-1} \leq G_{T,j}$ because $S u^{j-1-T} - K \leq S u^{j-T} - K $. We then can consider the general case of $i=\ell$, where all $T \leq i > \ell$ are true.
\begin{align}
    & \begin{aligned}
    s_0 G_{\ell,j-1}+ s_1 G_{\ell,j}  + s_2  G_{\ell,j+1}\leq s_0 G_{\ell,j}+ s_1 G_{\ell,j+1} + G_{\ell,j+2}\\ \mbox{(by induc. Hyp.)}
  \end{aligned}\label{eq2tri}\\
    & S u^{j-1-\ell}-K \leq S u^{j-\ell}-K  \quad \mbox{(}\because u \ge 1 \mbox{)} \label{eq3tri}
\end{align}
It implies that
\begin{align*}
    G_{\ell-1,j-1} &= \max( s_0 G_{\ell,j-1}+ s_1 G_{\ell,j} + s_2  G_{\ell,j+1},S u^{2j-1-\ell}-K) \\
    & \leq \max( s_0 G_{\ell,j}+ s_1 G_{\ell,j+1}  + s_2  G_{\ell,j+2} ,S u^{2j-\ell}-K)\\ &\hspace{4.2cm}  \mbox{(by (\ref{eq2tri}), and (\ref{eq3tri}))}\\
    & = G_{\ell-1,j}
\end{align*}\end{proof}

\begin{lemma} \label{lma:NonDecAtDiagTri}
     $G_{i, j} \le G_{i-1, j-1}$ for any $1 \le i \le T$ and $1\le j \le 2i$
\end{lemma}

\begin{proof}
    This can also be proved by induction using the similar methods as Lemma 2.5. We have the same base case, when $i=T$ we have $G_{T-1,j-1} \leq G_{T,j}$. If both are green, then since $S u^{j-1-T +1} - K = S u^{j-T} - K $ it holds. If $G_{T-1, j-1}$ is green and $G_{T, j}$ is red, then we know that $G_{i-1,j-1} \geq 0$ since it is green and that $G_{T,j} = 0$ because it is red. If both are red, then the same applies. We then can consider the general case of $i=\ell$, where if all $\ell < i \leq T$ hold.
\begin{align}
    &\begin{aligned}
    s_0 G_{\ell+1,j}+ s_1 G_{\ell+1,j+1}  + s_2  G_{\ell+1,j+2} \leq s_0 G_{\ell,j-1}+ s_1 G_{\ell,j} + G_{\ell,j+1}
    \\\mbox{(by induc. Hyp.)} 
  \end{aligned}\label{eq4tri}\\
    & S u^{j-\ell}-K \leq S u^{j-1-\ell+1}-K \quad \mbox{(}\because  \mbox{They are equal)} \label{eq5tri}
\end{align}
It implies that
\begin{align*}
    G_{\ell,j} &= \max( s_0 G_{\ell+1,j}+ s_1 G_{\ell+1,j+1} + s_2  G_{\ell+1,j+2},S u^{2j-\ell}-K) \\
    & \leq \max( s_0 G_{\ell,j-1}+ s_1 G_{\ell,j}  + s_2  G_{\ell,j+1} ,S u^{2j-1-\ell+1}-K) 
    \\
    & \hspace{5.55cm} \mbox{(by (\ref{eq4tri}),(\ref{eq5tri}))}\\
    & = G_{\ell-1,j-1}
\end{align*}\end{proof}

\begin{lemma} \label{lma:diagLeftOfRedIsRedTri}
    For $i \in [0, T - 1]$ and $j \in [0, 2i - 1]$,\\ $\left(G_{i+1,j+1} = \\G_{i+1,j+1}^{red}\right)$ $\Rightarrow \left(G_{i,j} = G_{i,j}^{red}\right)$.
\end{lemma}

\begin{proof}
    This follows directly from Lemma \ref{lma:NonDecAtDiagTri}. Recall $G_{i+1,j+1}^{green} = G_{i,j}^{green} = Su^{j-i} - K$. If $G_{i+1,j+1} = G_{i+1,j+1}^{red} \geq G_{i+1,j+1}^{green} = G_{i,j}^{green}$ and  $G_{i+1,j+1} \leq G_{i,j}$, then $G_{i,j} = G_{i,j}^{red}$.
\end{proof}

\begin{corollary}
    \label{crl:movingBoundaryTri} 
    For every $i \in [0, T - 1]$, there exists an index $j_i \in [0, 2i]$ such that all cells $G_{i,j}$ with $0 \leq j \leq j_i$ are red and all (possibly zero) cells $G_{i,j}$ with $j_i < j \leq i$ are green. Also, for $i \in [0, T - 2]$, $j_{i+1} - 1 \leq j_i \leq j_{i+1}$.
\end{corollary}
\begin{proof}
 Follows from Lemma \ref{lma:rightOfGreenIsGreenTri}, Lemma \ref{lma:bottomOfGreenIsGreenTri}, and Lemma \ref{lma:diagLeftOfRedIsRedTri}.
\end{proof}

\begin{figure}[t!]
  \centering
  \includegraphics[width=\linewidth]{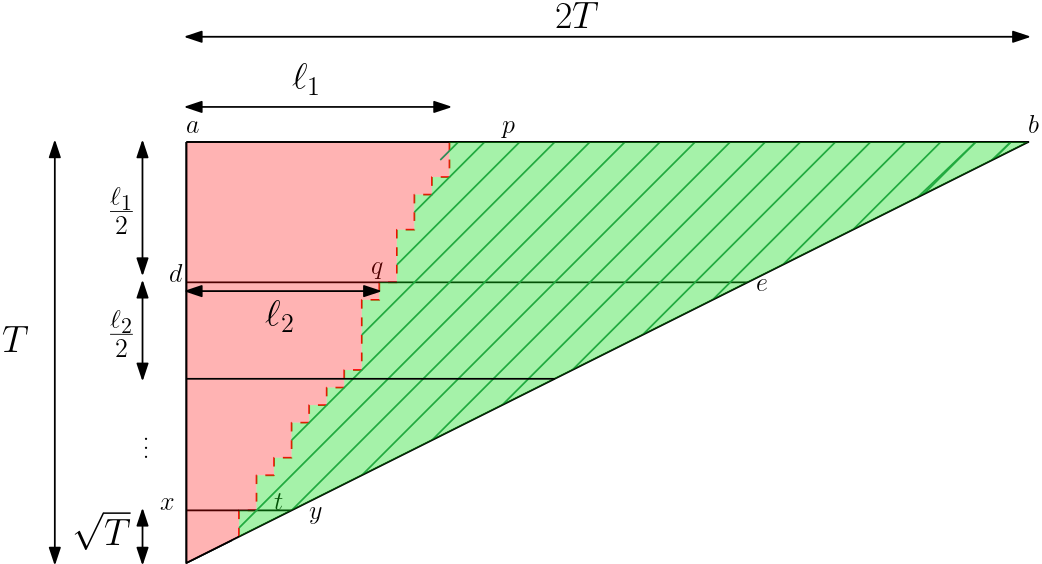}
  \caption{Partitioning the solution space into trapezoids.}
  \label{img:aoptiontrioverview}
\end{figure}

\begin{figure*}[t!]
    \centering
 \scalebox{1}[0.74]{
    \begin{subfigure}[t]{0.32\linewidth}
        \centering
        \includegraphics[width=\linewidth]{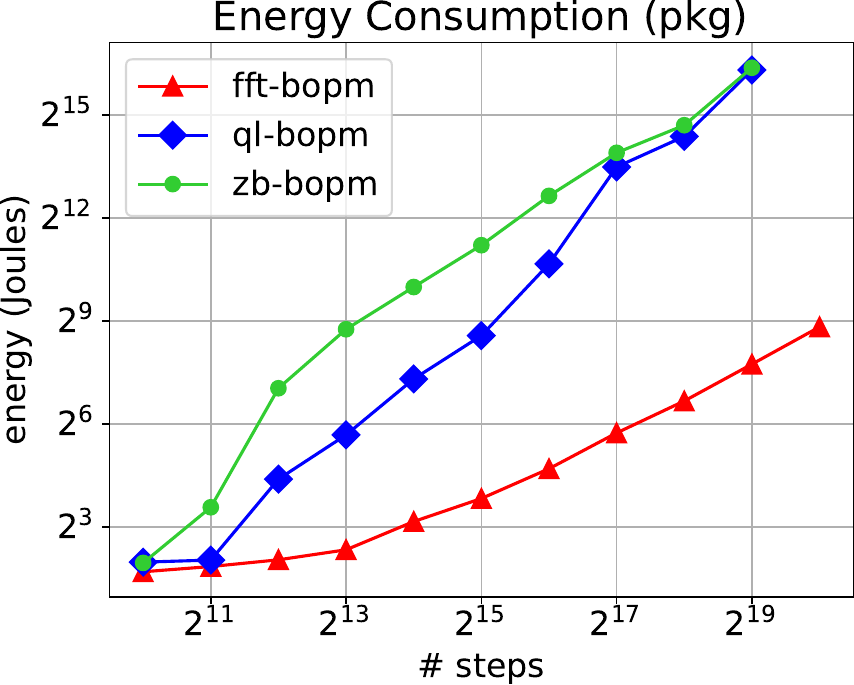}
        \label{fig:bopm-energy-pkg}
     \end{subfigure}
     \hfill
    \begin{subfigure}[t]{0.32\linewidth}
        \centering
        \includegraphics[width=\linewidth]{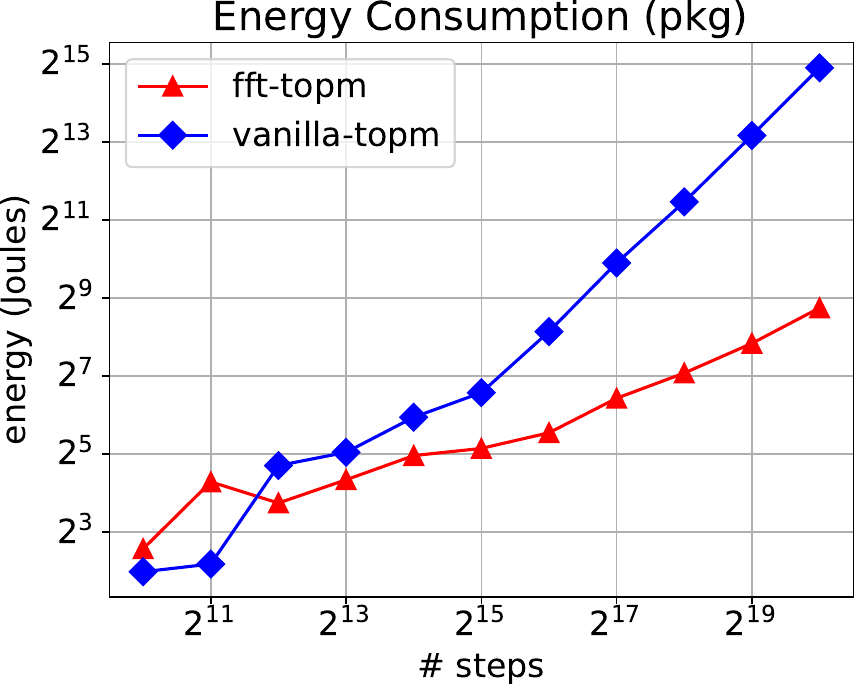}
        \label{fig:topm-energy-pkg}
     \end{subfigure}
     \hfill
    \begin{subfigure}[t]{0.32\linewidth}
        \centering
        \includegraphics[width=\linewidth]{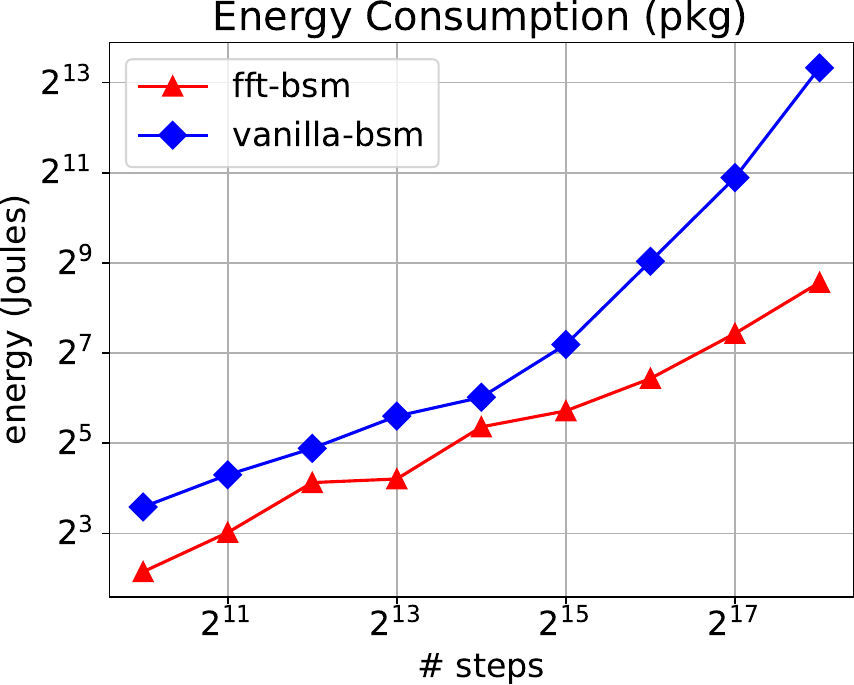}
        \label{fig:bsm-energy-pkg}
     \end{subfigure}
}     

\vspace{-0.2cm}
 \scalebox{1}[0.74]{
     \begin{subfigure}[t]{0.32\linewidth}
        \centering
        \includegraphics[width=\linewidth]{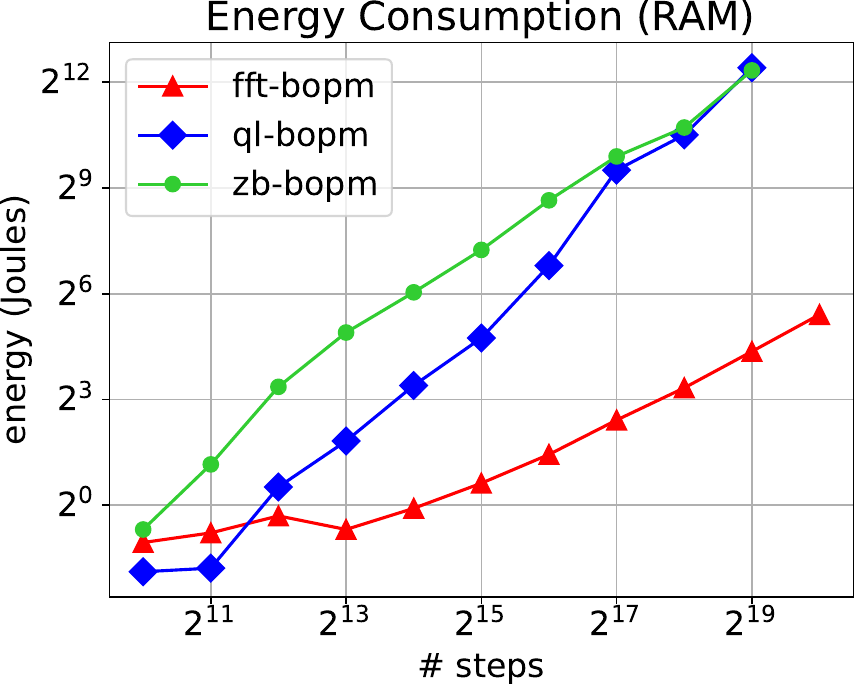}
        \caption{BOPM}
        \label{fig:bopm-energy-ram}
     \end{subfigure}
     \hfill
     \begin{subfigure}[t]{0.32\linewidth}
        \centering
        \includegraphics[width=\linewidth]{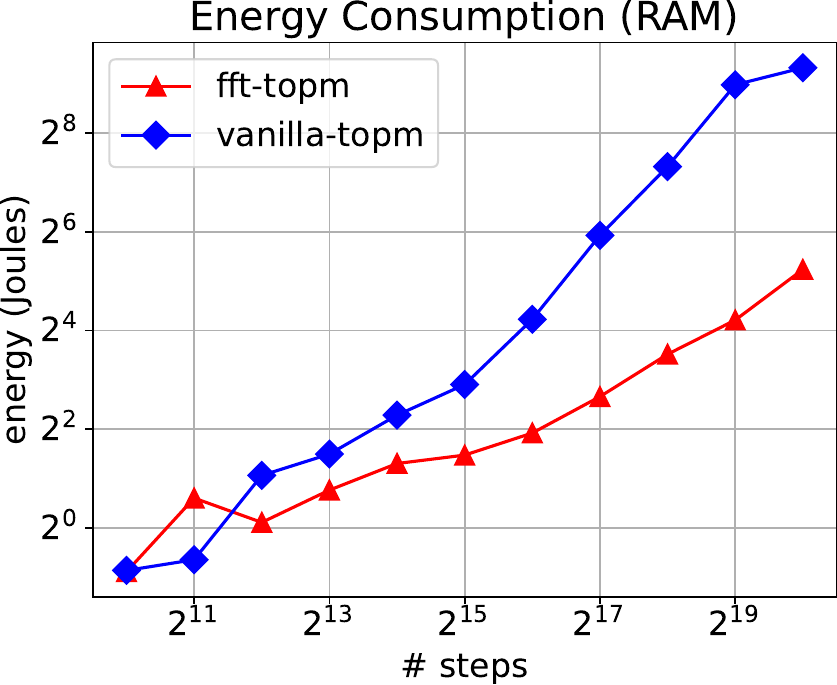}
        \caption{TOPM}
        \label{fig:topm-energy-ram}
     \end{subfigure}
     \hfill
     \begin{subfigure}[t]{0.32\linewidth}
        \centering
        \includegraphics[width=\linewidth]{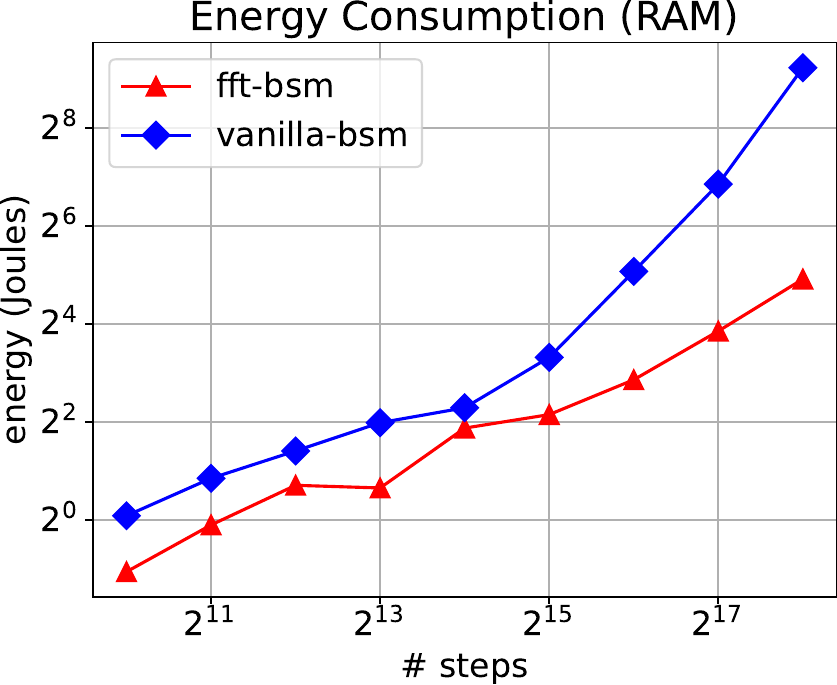}
        \caption{BSM}
        \label{fig:bsm-energy-ram}
     \end{subfigure}    
}     
\vspace{-0.4cm}
     \caption{Additional comparisons of energy consumption by domain, Package (pkg) and RAM.}
     \label{fig:energy-plot-more}
\end{figure*}

\subsection{Algorithm for American call option pricing under TOPM}
\label{ssec:TOPM-algo}

The solution space, is a right-angle triangle with a vertical base of length $T$ and a horizontal base of length $2T$. We prove the following theorem, which is an adaptation of the Binomial Model Algorithm.

\begin{theorem}
     \label{thm:TrinomParallel}
    Our algorithm solves the American call option pricing problem under TOPM in $\Oh{T\log^2 T}$ work and $\Oh{ T }$ span, where $T$ is the number of time steps.
\end{theorem}
\begin{proof}
    We use the exact same techniques as for BOPM. We again are using the properties of the boundary between the red and green cells to reduce the problem to a stencil-computation. The primary differences are:
    \begin{enumerate}
        \item In the usage of the FFT-based stencil algorithm of \cite{ahmad2021fast}, our stencil uses a larger (but still constant sized) neighborhood of grid values in the computation. Specifically, we have a dependence on the values $G_{i+1,j} G_{i+1,j+1},$ and $G_{i+1,j+2}$ to compute $G_{i,j}$, whereas in BOPM we depend only on $G_{i+1,j}$ and $ G_{i+1,j+1}$.
        \item The width of the grid we are operating on is twice that of BOPM. This only contributes a constant factor to runtime in terms of the worst-case for the width of the trapezoids. This means that some claims regarding the fact that the solution space is a right isosceles triangle need to be slightly adjusted.
    \end{enumerate}

    \subparagraph*{\textbf{Partitioning the triangle into trapezoids.}}
    Let $abc$ be the right-angle triangle representing the solution space as in Figure \ref{img:aoptiontrioverview}. We can partition $abc$ into a sequence of trapezoids just as with BOPM. We know the boundary at the row $i=T$, where $Su^{j-T} - K > 0$, so we let $\ell_1$ be the length from the $a$ to this boundary. Let $d$ be the point a distance $\ell_1/2$ downward from $a$. We draw a horizontal through $d$ and get the point $e$ from the intersection of the horizontal with the sloped side of $abc$. This defines a trapezoid $abde$. We then follow along the horizontal through $d$ to red-green divider to point q, to get a distance $\ell_2$, and repeat. We determine this boundary by solving the trapezoid corresponding to $\ell_1$. We continue repeating this process of partitioning until we are left with a triangle $xyc$ of height $\sqrt{T}$. 

    Once we have solved all of the trapezoids above $xyc$, we know that we can apply a naive $O(T^2)$ algorithm, which for $xyc$ (of height $\sqrt{T}$) will take a total of $\Oh{T}$ to solve, which will give us the final value for $i=0$. Solving the trapezoid of height $\ell_i$ will again take $\Oh{\ell_i \log ^2 \ell_i}$ time, which gives a total runtime $\psi$ of $\psi = \left( \sum _{1 \leq i \leq k} \Oh{\ell_i \log ^2 \ell_i} \right) + \Oh{T} = \Oh{T \log ^2 T}$.

    \subparagraph*{\textbf{Solving a trapezoid}}
        Solving a trapezoid is equivalent to solving the values of all red cells in the last row. Refer to figure \ref{img:aoptiontri}. We are given the solutions to the last row in the trapezoid above, which makes this possible. Let us say this trapezoid has height $\ell$, meaning that the boundary at the top of the trapezoid is $\ell$ away from the top left point $a$, landing on a point $q$. We again find the midpoint of $aq$, called $p$ and use this measure of $\ell/2$ to find a point below $a$, $r$. We again draw a horizontal line from $r$, intersecting $bc$ at $v$. We compute all the red cells on $dc$ (solving the trapezoid), in the same two steps:
        \begin{enumerate}
            \item Compute the cells on segment $rv$.
            \item Using the cells of $rv$, compute the cells of $dc$.
        \end{enumerate}

\begin{figure}[ht]
  \centering
  \includegraphics[width=\linewidth]{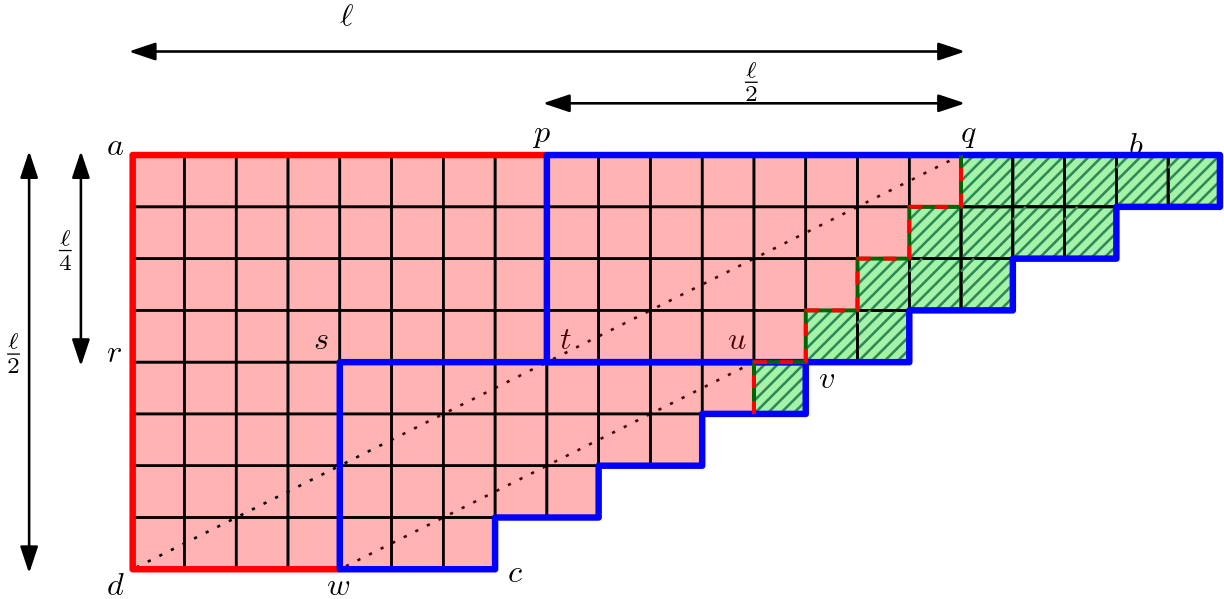}
  \caption{Decomposing a trapezoid into smaller trapezoids for trinomial.}
  \label{img:aoptiontri}
\end{figure}

    \subparagraph*{\textbf{Computing the values of all red cells on line segment $rv$}}

    Let $t$ be the intersection of $qd$ and $rv$. First, compute the cells of $rt$ using the FFT-based stencil algorithm of \cite{ahmad2021fast}, as all these cells are red and their values depend only on the red section of the top row, $aq$. We can then compute the second part, $tv$, by recursing on at the smaller trapezoid $pbvt$. Note that the height of $pbvt$ is half that of $abcd$.

    \subparagraph*{\textbf{Computing the values of all red cells on line segment $dc$}}
    Once the values of $rv$ have been calculated, we can use them to calculate the final row $dc$. From the calculation of $rv$, we know where the red-green boundary lies for that row. Let $u$ be that boundary point. We can take a line parallel to $qd$ through $u$ to determine a point $w$ along $dc$. all of the points $dw$ depend only on the values in $ru$, which have already been solved. We can again use the FFT-based stencil algorithm from \cite{ahmad2021fast}. The rest of $dc$, $wc$, can be recursed on. Take a vertical through $w$, and note its intersection with $rv$ as point $s$. $svcw$ forms a trapezoid that fits our specifications, which allows us to recurse on it to determine the rest of the red values along $dc$. Note: we do not need to calculate the green cells since they have a closed formula which can be evaluated at need.
    This gives us the recurrence  $\zeta(\ell) = 2 \zeta(\lceil \ell/2 \rceil) + \Oh{\ell \log \ell}$, which solves to $\zeta(\ell) = \ell \log ^2 \ell$, assuming the base case of $\ell = \Oh{1}$ has time complexity $\Oh{1}$, which we show next.

    \subparagraph*{\textbf{Base case.}}
    The base case entails a trapezoid of $\Oh{1}$ height. Such a trapezoid can have arbitrary width, but will have a constant number of red cells. Since we only wish to compute the red cells since the green can be computed at will in $\Oh{1}$ time, we will use a $\Oh{1}$ number of max operators to determine all the values of the red cells. Hence, the base case only takes $\Oh{1}$ time.

    What remains is to show the span. The span of computing one trapezoid of height $\ell$ depends on the time to compute the subtrapezoids and the time to compute the two stencils $rt$ and $dw$. The FFT-based stencil algorithm \cite{ahmad2021fast} runs in $\Oh{\log \ell \log \log \ell}$ span. Let $\zeta_\infty(\ell)$ be the span for a single trapezoid of height $\ell$.
    $$\zeta_\infty(\ell) = 2 \zeta_\infty(\lceil \ell/2 \rceil) + \Oh{\log \ell \log \log \ell}$$
    This solves to $\Oh{\ell}$. Let $\Psi_\infty$ be the total span of the algorithm. $\Psi = \Oh{\ell_1} + \Oh{\ell_2} + . ... \Oh{\ell _k} + \Oh{\sqrt{T}} = \Oh{T}$.
\end{proof}

\section{Energy Comsumption}

We give additional plots of energy consumption in Fig. \ref{fig:energy-plot-more}. In addition to outperforming existing approaches in terms of total energy consumption, the reduced energy consumption is pronounced when restricted to specific domains. Here we give the values for the package domain (\textit{pkg}) and the RAM domain as described in Section 5.2 of the main body. For both measures across all algorithms our algorithm outperforms vanilla and standard implementations after instances of size $2^13$. Our algorithm for the Black-Scholes-Merton Model was even more competitive, outperforming on instances of all sizes.



\end{document}